%% file: kpp.tex
\documentclass{article} 
\usepackage{latexsym, amsmath}
\usepackage{lscape}
\usepackage{epstopdf,caption,subcaption,graphicx,xcolor}
\usepackage{tikz}
\usepackage{hyperref}
\usepackage{comment}

\usetikzlibrary{shapes,arrows,fit,calc,positioning}
\usetikzlibrary{decorations.pathreplacing}

\normalsize
\newtheorem{theorem}{Theorem}
\newtheorem{lemma}{Lemma}

\newtheorem{notation}{Notation}

\newtheorem{operation}{Operation}
\newtheorem{definition}{Definition}

\newtheorem{problem}{Problem}
\newenvironment{proof}{{\sc Proof. }}{\hfill$\Box$\vspace{0.1in}}
\newcommand{\mc}[1]{{\cal {#1}}}

\title{Approximately Partitioning Vertices into Short Paths}
\author{
Mingyang~Gong\thanks{Gianforte School of Computing, Montana State University, Bozeman, MT 59717, USA.
Email: {\tt \{mingyang.gong, brendan.mumey\}@montana.edu}}
\and
Zhi-Zhong~Chen\thanks{Division of Information System Design, Tokyo Denki University. Saitama 350-0394, Japan.
  Email: \texttt{zzchen@mail.dendai.ac.jp}}
\thanks{Correspondence authors.}
\and
Brendan Mumey$^*$$^\ddagger$
}
\date{}
\begin{document}
\maketitle
\begin{abstract}
Given a fixed positive integer $k$ and a simple undirected graph $G = (V, E)$, 
the {\em $k^-$-path partition} problem, denoted by $k$PP for short,
aims to find a minimum collection $\mc{P}$ of vertex-disjoint paths in $G$
such that each path in $\mc{P}$ has at most $k$ vertices and
each vertex of $G$ appears in one path in $\mc{P}$.
In this paper, we present a $\frac {k+4}5$-approximation algorithm for $k$PP when $k\in\{9,10\}$
and an improved $(\frac{\sqrt{11}-2}7 k + \frac {9-\sqrt{11}}7)$-approximation algorithm when $k \ge 11$.
Our algorithms achieve the current best approximation ratios for $k \in \{ 9, 10, \ldots, 18 \}$.

Our algorithms start with a maximum triangle-free path-cycle cover $\mc{F}$, which may not be feasible because of the existence of 
cycles or paths with more than $k$ vertices.
We connect as many cycles in $\mc{F}$ with $4$ or $5$ vertices as possible by 
computing another maximum-weight path-cycle cover in a suitably constructed graph 
so that $\mc{F}$ can be transformed into a $k^-$-path partition of $G$ without losing too many edges.

\paragraph{Keywords:}
$k^-$-path partition; Triangle-free path-cycle cover; $[f, g]$-factor; Approximation algorithm  
\end{abstract}

\section{Introduction}\label{sec:intro}

Throughout this paper, we assume that $k \ge 1$ is a fixed positive integer and 
the input $G = (V, E)$ is always a simple undirected graph where $V$ and $E$ are the set of vertices and edges of $G$, respectively.
Given a path (respectively, cycle) $P$ in $G$, the {\em order} $\ell$ of $P$ is the number of vertices in $P$
and we call $P$ an $\ell$-path (respectively, $\ell$-cycle).
Similarly, an $\ell^-$-path (respectively, $\ell^+$-path) is a path having order at most (respectively, at least) $\ell$.
A {\em $k^-$-path partition}, denoted by $k$-pp for short, of $G$ is a set of vertex-disjoint $k^-$-paths in $G$
such that every vertex in $G$ appears in one of the paths in the set.
The {\em $k^-$-path partition problem}, denoted by $k$PP, seeks to find a $k$-pp of $G$ whose size is minimized.
$k$PP has various real-life applications in vehicle routing, network monitoring and others~\cite{YCH97, S03, CCK24, LYL24b}.
We study $k$PP from the perspective of approximation algorithms.
An algorithm is {\em $\rho$-approximation} for a minimization (respectively, maximization) problem $\Pi$
if it runs in a polynomial time in the input size
and outputs a feasible solution that is within (respectively, at least) a guaranteed factor $\rho$ of the optimal solution.
Therefore, $\rho \ge 1$ if $\Pi$ is a minimization problem and otherwise, $\rho \le 1$.

We note that $1$PP is trivial.
Moreover, $2$PP is equivalent to finding a maximum matching of $G$ and hence can be solved in $O(\sqrt{|V|} |E|)$ time~\cite{MV80}.
Unfortunately $k$PP becomes NP-hard~\cite{GJ79} when $k \ge 3$. 
On the other hand, $k$PP can be solved in polynomial time on some special graphs such as
trees~\cite{YCH97}, cographs~\cite{S00} and bipartite permutation graphs~\cite{S03}.

\subsection{Previously known results for $k$PP}
We here review the algorithmic results for $k$PP with $k \ge 3$ on a general graph.

The first group of algorithms are designed for a general $k \ge 3$.
In~\cite{CCL22, CGL22}, the authors observed that minimizing the number of $1$-paths, i.e., singletons in a $k$-pp
is polynomial-time solvable, which results in a $\frac k2$-approximation algorithm.
The ratio was slightly improved to $\frac {k^2}{3(k-1)} + \frac {k-3}{6(k-1)} \cdot (k \mod 3)$~\cite{LYL24a}
by merging two short paths into a single path or; into another two paths, one of which is a $k$- or $(k-1)$-path.
Note that the number of paths in a $k$-pp of $G$ is equal to $|V|$ minus 
the total number of edges in the paths in the $k$-pp.
Therefore, $k$PP is close to the {\em maximum path cover} problem~\cite{BK06}, denoted by MPC for short,
which aims to find a set of vertex-disjoint paths such that the total number of edges is maximized. 
Indeed, as observed by Li et al.~\cite{LYL24b}, an obvious approximation algorithm for $k$PP can use the $\frac 67$-approximation algorithm for MPC~\cite{BK06} 
to obtain a set of vertex-disjoint paths and then cut every $(k+1)^+$-path among them into a set of vertex-disjoint $k^-$-paths. 
This simple algorithm achieves an approximation ratio of $\frac {k+12}7-\frac 6{7k}$ \cite{LYL24b}.
Recently, the authors in~\cite{IGK25} presented the current best $(\frac {k+12}7-\frac 1{k-1})$-approximation algorithm for a general $k \ge 9$.
If $k$ is part of the input, then when $k = |V|$, $k$PP is reduced to the {\em path partition} problem~\cite{FR02},
which in turn includes the classic {\em Hamiltonian path} problem~\cite{GJ79} as a special case.
Li et al.~\cite{LYL24b} showed that $k$PP cannot be approximated within $O(k^{1-\epsilon})$ for any $\epsilon>0$ unless NP=P.
We remark that the authors in~\cite{CCK24} extended $k$PP to directed graphs and presented a $\frac k2$-approximation algorithm for $k \ge 3$
and an improved $\frac {k+2}3$-approximation algorithm for $k \ge 7$.

The second group of algorithms are presented for $3$PP.
Monnot and Toulouse~\cite{MT07} presented the first matching-based $\frac 32$-approximation algorithm for $3$PP.
Chen et al.~\cite{CGL22} improved the ratio to $\frac {13}9$ by computing a $3$-pp with the least number of $1$-paths and 
merging three $2$-paths into two $3$-paths.
Later on, Chen et al.~\cite{CCL22} presented a $\frac 43$-approximation algorithm that refines the current solution
by checking more $2$- and $3$-paths.
Finally, the authors in~\cite{CGS19} showed the current best $\frac {21}{16}$-approximation algorithm for 3PP,
which applies more elaborate local search operations on a weighted auxiliary graph.

The last group of algorithms are designed for some specific values of $k$.
Li et al.~\cite{LYL24b} developed a local search based algorithm for $4$PP, $5$PP, and $6$PP,  
achieving a ratio of $\frac {31}{18}$, $\frac {17}8$, and $\frac 73$, respectively. 
Those ratios were improved to $\frac 85$, $\frac {55}{31}$ and $2$~\cite{GCC25}, respectively
and the authors in~\cite{GCC25} also presented the current best $\frac 73$- and $\frac 83$-approximation for $7$PP
and $8$PP, respectively.

A related concept to $k$PP is a {\em path-cycle cover} in $G$,
which is a spanning subgraph of $G$ such that each connected component is a path or cycle.
Since the degree of each vertex in a path-cycle cover is at most $2$, a path-cycle cover is also called a {\em $2$-matching}.
A $3$-cycle is called a {\em triangle}.
A {\em triangle-free path-cycle cover} is a path-cycle cover without any triangles. 
A {\em (triangle-free) path-cycle cover} is {\em maximum} if its number of edges is maximized.
Fortunately, a maximum path-cycle cover can be found in $O(|E||V| \log |V|)$ time~\cite{Gab83} 
while a maximum triangle-free path-cycle cover can be found in $O(|V|^3 |E|^2)$ time~\cite{Har24}.

Another related concept to $k$-pp is a {\em $2$-piece packing} in $G$~\cite{HHS06}, 
which is a subgraph in which each connected component is a {\em 2-piece}, i.e., a $3^+$-path or cycle.
A {\em maximum $2$-piece packing} is a $2$-piece packing whose number of vertices is maximized.
The authors in~\cite{HHS06} presented a polynomial-time algorithm for computing a maximum $2$-piece packing.

Note that in a $k$-pp of $G$, each path should be a $k^-$-path, i.e., a ``short'' path.
Gong et al.~\cite{GEF24} proposed a complementary problem of $k$PP that aims to cover as many vertices 
as possible by a collection of vertex-disjoint {\em $k^+$-paths}.
The problem is denoted by MPC$^{k+}$ for simplicity.
One sees that a $(2k)^+$-path can be cut into two $k^+$-paths and 
thus, we can assume that in the target solution of MPC$^{k+}$, the order of each path is in between $k$ and $2k-1$.
Therefore, MPC$^{1+}$ is trivial.
MPC$^{2+}$ is tractable too since it is equivalent to finding a $3$-pp with the least number of $1$-paths~\cite{CCL22}.
Note that a $3^+$-path is a $2$-piece and a cycle becomes a $3^+$-path with the same order after the removal of an arbitrary edge.
One clearly sees that MPC$^{3+}$ is the same as the maximum $2$-piece packing problem.
Thus, MPC$^{3+}$ is tractable~\cite{HHS06}.
However, the authors in~\cite{KLM23} proved that MPC$^{k+}$ is NP-hard for every $k \ge 4$
and they also presented a $0.25$-approximation algorithm for $4$PP.
Gong et al.~\cite{GEF24} improved the ratio to $0.5$ by presenting five local search operations,
each of which increases the number of covered vertices or the number of $4$-paths in the current solution.
Later on, Gong et al.~\cite{GCL25a} proposed a $0.533$-approximation algorithm for $4$PP by a completely different method,
which computes a maximum matching~\cite{MV80} and then transform as many edges in the matching into a feasible solution as possible.
The current best ratio is $0.6$~\cite{GCL25}, which is shown by a novel amortization scheme.
For a general $k \ge 4$, the authors in~\cite{GEF24} presented the first and the current best $\frac 1{0.4394k+O(1)}$-approximation algorithm for MPC$^{k+}$.
Specifically, it is a $\frac 7{19}$-approximation for MPC$^{5+}$ and 
the authors in~\cite{GCL24} improved the ratio to $0.398$ by non-trivially extending the techniques in~\cite{GCL25a}.

\subsection{Our contribution}

In this paper, we contribute a new $\frac {k+4}5$-approximation algorithm for $k$PP for $k \in \{ 9, 10 \}$
and a $(\frac {\sqrt{11}-2}7 k + \frac {9-\sqrt{11}}7)$-approximation algorithm for $k \ge 11$.
Our algorithms achieve the current best approximation ratios when $k \in \{ 9, 10, \ldots, 18 \}$.
In Table~\ref{tab01}, we compare our approximation ratios with the previous best ones in~\cite{IGK25}.

\begin{table}
\caption{Comparing our approximation algorithms against the previous bests}
\label{tab01}
\centering
\begin{tabular}{|c|c|c|c|c|c|c|c|c|c|c|c|c|}
\hline
$k$ & $9$  & $10$ & $11$ & $12$ & $13$ & $14$ & $15$ & $16$ & $17$ & $18$  \\
\hline
Our algorithms & $2.600$  & $2.800$ & $2.881$ & $3.069$ & $3.257$ & $3.445$ & $3.633$ & $3.821$ & $4.009$ & $4.198$ \\
\hline
The ones in~\cite{IGK25} & $2.875$  & $3.032$ & $3.186$ & $3.338$ & $3.488$ & $3.637$ & $3.786$ & $3.933$ & $4.080$ & $4.227$ \\
\hline
\end{tabular}
\end{table}

An important observation is that 
given a $k$-pp of $G$, the number of paths is the same as $|V|$ minus the number of edges.
Based on the observation, we propose a new problem, denoted by $k$PPE in Problem~\ref{pro02}, 
which asks for a $k$-pp whose number of edges is maximized.
In other words, we switch the objective to maximizing the number of edges in a $k$-pp.
We show that an $\alpha$-approximation of $k$PPE is a $((1-\alpha)k+\alpha)$-approximation of $k$PP.
Therefore, in fact, we present a $\frac 45$-approximation algorithm for $k$PPE for $k\in\{9,10\}$
and a $\frac {9-\sqrt{11}}7$-approximation algorithm for $k \ge 11$.

Our algorithms always start with a maximum triangle-free path-cycle cover $\mc{F}$ of $G$. 
Note that an optimal solution of $k$PPE has at most $|E(\mc{F})|$ edges since a $k$-pp must be a triangle-free path-cycle cover.
Therefore, we choose to transform $\mc{F}$ into a $k$-pp.
Obviously, each $4$-cycle in $\mc{F}$ can be turned into a $4$-path by removing one edge but 
the loss in the number of edges is $\frac 14$, which is relatively large.
So, when $k \in \{ 9, 10 \}$, we compute another maximum-weight path-cycle cover $\mc{W}$ in a suitably constructed auxiliary graph
to connect as many $4$-cycles as possible.
This way, we can form a $k$-pp from $\mc{F}$ and $\mc{W}$ without losing many edges in $\mc{F}$.
When $k \ge 11$, the maximum-weight path-cycle cover $\mc{W}$ connects as many $4$- and $5$-cycles as possible.
If we fails to cover a relatively large fraction of the edges in $\mc{F}$, we recursively call the algorithm on a smaller graph,
which finally guarantees a good solution.

The remainder of this paper is organized as follows. 
Section~2 gives some basic definitions.
Section~3 describes the initial common steps of our algorithms.
Section~4 details the $\frac 45$-approximation algorithm for $k$PPE with $k\in\{9,10\}$, while
Section~5 presents the $\frac {9-\sqrt{11}}7$-approximation algorithm with $k \ge 11$.
Finally, Section~6 concludes the paper.

\section{Basic Definitions}\label{sec:def}
We fix a simple undirected graph $G=(V, E)$ and a positive integer $k$ for discussion.
Let $|V| = n$ and $|E| = m$.
The {\em order} of $G$ is the number of vertices in $G$, i.e., $n$.
A graph $G' = (V', E')$ is a {\em subgraph} of $G$ if $V' \subseteq V$ and $E' \subseteq E$.
Furthermore, if $V' = V$ and $E' \subseteq E$, then $G'$ is a {\em spanning subgraph} of $G$.
For simplicity, we also use $V(G')$ and $E(G')$ to denote the set of vertices and edges in $G'$, respectively.

The {\em degree} of a vertex in $G$ is the number of edges incident to it in $G$. 
A {\em star} is a connected subgraph of $G$ such that exactly one vertex, called the {\em center}, 
is of degree at least $2$, and the other vertices, called the {\em satellites}, are of degree exactly $1$. 
A {\em cycle} in $G$ is a connected subgraph such that each vertex has degree exactly $2$
and a $k$-cycle is a cycle with order exactly $k$.
A $3$-cycle is also called a {\em triangle}.
A {\em $k$-path} $P$ of $G$ is a subgraph of $G$ such that the vertices can be ordered as a sequence $v_1, \ldots, v_k$ 
and $E(P) = \{ \{ v_i, v_{i+1} \}: i=1, \ldots, k-1 \}$.
Similarly, a $k^-$-path is a path with order at most $k$.

A {\em path-cycle cover} of $G$ is a spanning subgraph such that each connected component is a path or cycle.
If a path-cycle cover has no triangles, then it is called a {\em triangle-free path-cycle cover}.
A {\em maximum triangle-free path-cycle cover} is one with the maximum number of edges.
In~\cite{Har24}, the author presented an algorithm that computes a maximum triangle-free path-cycle cover in $O(n^3 m^2)$ time.

Given the positive integer $k$, a {\em $k^-$-path partition} of $G$, denoted by $k$-pp for short, 
is a spanning subgraph of $G$ in which each connected component is a $k^-$-path.
Hereafter, by a path in a $k$-pp, we always mean a connected component in the $k$-pp. 
We next formally define $k$PP.

\begin{problem}
\label{pro01}
{\em ($k$PP)}
Given $G$, 
the problem aims to find a {\em minimum} $k$-pp of $G$, 
where {\em ``minimum"} means that the number of paths is minimized.
\end{problem}

Given a $k$-pp $\mc{P}$ of $G$, let $|\mc{P}|$ denote the number of paths in $\mc{P}$ and
we always have 
\begin{equation}
\label{eq01}
|V| = |\mc{P}| + |E(\mc{P})|.
\end{equation}
Therefore, to some extent, minimizing the number of paths is equivalent to maximizing the number of edges in the $k$-pp.
This induces the following problem, denoted by {\em $k$PPE} for simplicity, 
where ``E'' indicates that the objective is to cover as many edges as possible.

\begin{problem}
\label{pro02}
{\em ($k$PPE)}
Given $G$, 
the problem aims to find a $k$-pp of $G$ such that the {\em total number of edges in the $k$-pp} is maximized.
\end{problem}

\begin{lemma}
\label{lemma01}
An $\alpha$-approximation algorithm for $k$PPE is a $((1-\alpha)k+\alpha)$-approximation algorithm for $k$PP.
\end{lemma}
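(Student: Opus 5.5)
Let $\mc{P}^*$ be a minimum $k$-pp of $G$ (an optimal solution for $k$PP), and let $\mc{Q}^*$ be a $k$-pp with the maximum number of edges (an optimal solution for $k$PPE). The plan is to rewrite everything through Eq.~(\ref{eq01}), $|V| = |\mc{P}| + |E(\mc{P})|$.

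\textbf{Step 1: the two objectives share an optimum.} Applying Eq.~(\ref{eq01}) to every $k$-pp shows that minimizing the number of paths is the same as maximizing the number of edges. Writing $p^* = |\mc{P}^*|$ and $e^* = |E(\mc{Q}^*)|$, we therefore get
\[
e^* = n - p^* .
\]

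\textbf{Step 2: the edge and vertex bounds.} Let $\mc{P}$ be the output of the $\alpha$-approximation for $k$PPE, so that $|E(\mc{P})| \ge \alpha e^*$. By Eq.~(\ref{eq01}),
\[
|\mc{P}| = n - |E(\mc{P})| \le n - \alpha(n - p^*) = (1-\alpha)n + \alpha p^* .
\]
Every path of $\mc{P}^*$ has at most $k$ vertices, so $n \le k p^*$. Since $1-\alpha \ge 0$, substituting gives
\[
|\mc{P}| \le (1-\alpha)k p^* + \alpha p^* = \bigl((1-\alpha)k+\alpha\bigr) p^* .
\]
This is the claimed ratio. Running time is polynomial because the same algorithm is used unchanged.

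\textbf{Main obstacle.} There is no real difficulty here. The only points needing care are that $\alpha \le 1$, so the substitution $n \le kp^*$ goes in the right direction, and that both problems range over the same feasible set of $k$-pps, so the optima are linked exactly by $e^* = n - p^*$.
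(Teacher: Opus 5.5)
Your proof is correct and is the same argument as the paper's: bound $|\mc{P}| = n - |E(\mc{P})| \le (1-\alpha)n + \alpha p^*$, then use $n \le k p^*$ together with $1-\alpha \ge 0$. The only difference is cosmetic. You identify the two optima exactly via $e^* = n - p^*$, whereas the paper only notes that a minimum $k$-pp is a feasible $k$PPE solution, so the $\alpha$-guarantee already gives $|E(\mc{P})| \ge \alpha$ times its edge count; either suffices.
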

\begin{proof}
Suppose that we have an $\alpha$-approximation algorithm $A$ for $k$PPE and let $\mc{P}$ be the $k$-pp produced by $A$.
Let $\mc{Q}$ be a minimum $k$PP for $G$ and thus $|\mc{Q}| \ge \frac {|V|}k$. 
Since $\mc{Q}$ is a feasible solution of $k$PPE and $A$ is $\alpha$-approximation, we have $|E(\mc{P})| \ge \alpha \times |E(\mc{Q})|$.
It follows from Eq.(\ref{eq01}) that 
\[
|\mc{P}| = |V|-|E(\mc{P})| \le |V|-\alpha |E(\mc{Q})| = \alpha |\mc{Q}| + (1-\alpha) |V|
\le (\alpha+(1-\alpha)k) |\mc{Q}|,
\]
which indicates that $A$ is indeed a $((1-\alpha)k+\alpha)$-approximation algorithm for $k$PP.
The proof is completed.
\end{proof}

By Lemma~\ref{lemma01}, in the following, we consider $k$PPE instead.
For ease of presentation, we hereafter fix an optimal $k$-pp $\mc{Q}$ for $k$PPE for discussion.

\section{Initial steps}

We will present a $\frac 45$-approximation algorithm and an improved $\frac {9-\sqrt{11}}7$-approximation for $k$PPE with 
$k\in\{9,10\}$ and $k \ge 11$ in Sections~\ref{9PP} and~\ref{11PP}, respectively.
By Lemma~\ref{lemma01}, this leads to a $\frac {k+4}5$-approximation algorithm and 
a $(\frac {\sqrt{11}-2}7 k + \frac {9-\sqrt{11}}7)$-approximation algorithm for $k$PP with $k\in\{9,10\}$ and $k \ge 11$, respectively.
The two algorithms share several common initial steps.
The first step is to compute a maximum triangle-free path-cycle cover $\mc{F}$ in the graph $G$ in $O(n^3 m^2)$ time~\cite{Har24}.
Note that $\mc{Q}$ is a triangle-free path-cycle cover and thus we have $|E(\mc{F})| \ge |E(\mc{Q})|$.
But $\mc{F}$ may not be feasible for $k$PPE because of the existence of cycles and $(k+1)^+$-paths.

We next prove two lemmas for the sake of analyzing our algorithms later.

\begin{lemma}
\label{lemma02}
If $\ell$ is an integer larger than $k$, then $\frac {\ell - \lceil \ell/k \rceil}\ell \ge \frac {k-1}{k+1}$.
\end{lemma}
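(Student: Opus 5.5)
We need $\ell - \lceil \ell/k\rceil \ge \frac{k-1}{k+1}\ell$, which after rearranging is equivalent to $\lceil \ell/k\rceil \le \frac{2\ell}{k+1}$. The plan is to parametrize $\ell$ by the value of the ceiling and reduce to checking the smallest $\ell$ in each range.

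Write $t = \lceil \ell/k \rceil$. Since $\ell > k$, we have $t \ge 2$. By the definition of the ceiling, $\ell \ge (t-1)k + 1$. The right-hand side $\frac{2\ell}{k+1}$ is increasing in $\ell$, so it suffices to show
\[
t \le \frac{2((t-1)k+1)}{k+1}.
\]
Multiplying out, this is equivalent to $t(k+1) \le 2(t-1)k + 2$, that is,
\[
0 \le (t-2)k + 2 - t = (t-2)(k-1).
\]
This holds because $t \ge 2$ and $k \ge 1$. Equality occurs exactly at $\ell = k+1$, which confirms that the constant $\frac{k-1}{k+1}$ is tight.

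There is no real obstacle. The only point needing care is using the lower bound $\ell \ge (t-1)k+1$, rather than $\ell \ge tk$, since the worst case occurs just after a multiple of $k$. Equivalently, one can note that $\frac{\ell - \lceil \ell/k\rceil}{\ell}$ is minimized, over each block of values with the same $t$, at $\ell=(t-1)k+1$, and the minimum over $t \ge 2$ occurs at $t=2$.
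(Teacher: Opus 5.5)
Your proof is correct and is essentially the paper's argument: the paper writes $\ell = ks+t$ with $s\ge 1$, $1\le t<k$, replaces $t$ by $1$ (the worst case just after a multiple of $k$, which is your bound $\ell\ge (t-1)k+1$ with the paper's $s$ equal to your $t-1$), and then uses $s\ge 1$ (your $t\ge 2$). Working with $\lceil \ell/k\rceil$ directly lets you absorb the paper's separate multiple-of-$k$ case. One small quibble with your side remark: for $k=1$ both sides equal $0$ for every $\ell$, so ``equality exactly at $\ell=k+1$'' holds only for $k\ge 2$.
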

\begin{proof}
If $\ell$ is a multiple of $k$, then we have $\frac {\ell - \lceil \ell/k \rceil}\ell = \frac {k-1}k > \frac {k-1}{k+1}$.
We next assume $\ell = ks+t$ where $s \ge 1$ and $1 \le t <k$.
It follows that 
\[
\frac {\ell - \lceil \ell/k \rceil}\ell = 1-\frac {s+1}{ks+t} \ge 1-\frac {s+1}{ks+1} = \frac {k-1}{k+1/s} \ge \frac {k-1}{k+1}.
\]
The proof is finished.
\end{proof}

\begin{lemma}
\label{lemma03}
Suppose that $C$ is an $\ell$-path or $\ell$-cycle.
The following two statements hold:
\begin{itemize}
\item[1.] If $k \in \{ 9, 10 \}$ and $\ell \ge 5$, then we can construct a $k$-pp of $C$ containing $\frac 45 \ell$ edges.

\item[2.] If $k \ge 11$ and $\ell \ge 6$, then we can construct a $k$-pp of $C$ containing $\frac 56 \ell$ edges.
\end{itemize}
\end{lemma}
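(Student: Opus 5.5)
The plan is to cut $C$ into $\lceil \ell/k\rceil$ consecutive pieces and then bound the fraction of edges that survive. If $C$ is an $\ell$-cycle, I first delete one edge to obtain an $\ell$-path. In either case I then split the resulting $\ell$-path $v_1,\ldots,v_\ell$ into $\lceil \ell/k\rceil$ consecutive subpaths, each with at most $k$ vertices. For example, take the segments $v_1..v_k$, $v_{k+1}..v_{2k}$, and so on, with the last segment possibly shorter.

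This gives a $k$-pp of $C$ with exactly $\ell-\lceil \ell/k\rceil$ edges, by Eq.~(\ref{eq01}) applied to the subgraph $C$. If $C$ is a path and $\ell\le k$, we can simply keep all $\ell-1$ edges, which is exactly $\ell-\lceil \ell/k\rceil$. So in all cases it suffices to show
\[
\ell-\lceil \ell/k\rceil \ge \alpha\,\ell,
\]
where $\alpha=\frac45$ in statement~1 and $\alpha=\frac56$ in statement~2.

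Next I split into two cases according to whether $\ell\le k$ or $\ell>k$.

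If $\ell\le k$, then $\lceil \ell/k\rceil=1$, and the inequality becomes $\ell-1\ge \alpha\ell$, that is, $\ell\ge \frac1{1-\alpha}$. This is exactly the hypothesis: $\ell\ge5$ for $\alpha=\frac45$, and $\ell\ge 6$ for $\alpha=\frac56$.

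If $\ell>k$, I invoke Lemma~\ref{lemma02}, which gives $\frac{\ell-\lceil \ell/k\rceil}{\ell}\ge \frac{k-1}{k+1}$. The function $\frac{k-1}{k+1}$ is increasing in $k$. For $k\in\{9,10\}$ it is at least $\frac{8}{10}=\frac45$, and for $k\ge 11$ it is at least $\frac{10}{12}=\frac56$, which finishes both statements.

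There is no real obstacle here. The only points needing care are two. First, the cycle case costs one extra edge, and this is already absorbed by the count $\ell-\lceil\ell/k\rceil$. Second, the thresholds $k\ge9$ and $k\ge11$ are precisely what make Lemma~\ref{lemma02} reach $\frac45$ and $\frac56$, respectively.
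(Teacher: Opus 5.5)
Your proof is correct and follows the paper's argument: you delete one edge from a cycle, cut the path into consecutive $k^-$-pieces to keep $\ell-\lceil \ell/k\rceil$ edges, and apply Lemma~\ref{lemma02} when $\ell>k$. You also spell out the check $\ell-1\ge \alpha\ell$ for $\ell\le k$, which is exactly where the hypotheses $\ell\ge 5$ and $\ell\ge 6$ are needed and which the paper dismisses with ``we are done.''
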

\begin{proof}
Note that if $C$ is an $\ell$-cycle, then it becomes an $\ell$-path after removing one edge in $C$.
Therefore, we assume $C$ is an $\ell$-path.

If $\ell \le k$, then $C$ itself is a $k$-pp with $\ell-1$ edges and we are done.
We next assume that $\ell \ge k+1$.
Clearly, the path $C$ can be cut into $\lfloor \ell/k \rfloor$ $k$-paths and an $(\ell-k \lfloor \ell/k \rfloor)$-path, 
which together form a $k$-pp $\mc{P}$ of $C$.
If $\ell$ is a multiple of $k$, then $\mc{P}$ contains $\ell/k\cdot (k-1) = \ell - \ell/k$ edges.
Otherwise, the number of edges in $\mc{P}$ is $\lfloor \ell/k \rfloor (k-1) + \ell - k\lfloor \ell/k \rfloor - 1 = \ell - \lceil \ell/k \rceil$.
By Lemma~\ref{lemma02}, $\mc{P}$ has at least $\frac {k-1}{k+1} \ell$ edges.
The lemma is proved since when $k \in \{ 9, 10 \}$, $\frac {k-1}{k+1} \ge \frac 45$ and when $k \ge 11$, $\frac {k-1}{k+1} \ge \frac 56$.
\end{proof}

Recall that $\mc{F}$ contains no triangles. In other words, every cycle in $\mc{F}$ has order at least $4$.
We say a cycle in $\mc{F}$ is {\em short} if its order is exactly $4$ or $5$.
If $\mc{F}$ has no $4$-cycles and $k \in \{ 9, 10 \}$, then by Statement~1 in Lemma~\ref{lemma03}, 
we can construct a $k$-pp for $G$ containing $\frac 45 |E(\mc{F})|$ edges.
One sees that this achieves an approximation ratio of $\frac 45$ because $|E(\mc{F})| \ge |E(\mc{Q})|$.
Similarly, if $\mc{F}$ has no short cycles and $k \ge 11$, then by Statement~2 in Lemma~\ref{lemma03}, 
we can easily achieve an approximation ratio of $\frac 56$.

\subsection{Connecting short cycles in $\mc{F}$ to the outside}
By the discussion in the last paragraph, we have to present a process to deal with short cycles in $\mc{F}$.
In fact, to connect as many short cycles as possible, our idea is to compute another path-cycle cover $\mc{W}$ in an auxiliary graph $G'$
so that more edges can be involved in the final solution.
The following definition is for constructing the graph $G'$.

\begin{definition}
\label{def01}
The auxiliary graph $G'$ is a spanning subgraph of $G$
where its edge set $E(G')$ consists of all edges $\{u, v\} \in E(G)$ 
such that $u$ and $v$ are in different connected components of $\mc{F}$ and at least one of them is in a short cycle.
\end{definition}

A short cycle $C$ in $\mc{F}$ is {\em saturated} by an edge $e \in E(G')$ if $e$ is incident to a vertex in $C$.
Similarly, a short cycle $C$ is {\em saturated} by a subset $E' \subseteq E(G')$, if $C$ is saturated by at least one edge in $E'$.

\begin{definition}
\label{def02}
Let $\eta = \mathbf{1}_{\{ k \ge 11 \}}$ where $\mathbf{1}$ is the indicator function.
Given a subset $E' \subseteq E(G')$, for $i \in \{ 4, 5 \}$, let $\mc{F}_i$ be the set of $i$-cycles in $\mc{F}$ saturated by $E'$.
The {\em weight} of $E'$ is 
\begin{equation}
\label{eq02}
|\mc{F}_4| + \eta \times |\mc{F}_5|.
\end{equation}
In other words, when $k \in \{ 9, 10 \}$, the weight is the number of $4$-cycles saturated by $E'$
and when $k \ge 11$, the weight becomes the number of short cycles saturated by $E'$.

A {\em maximum-weight} path-cycle cover of $G'$ is a path-cycle cover $\mc{W}$ of $G'$ such that 
the weight of $E(\mc{W})$ is maximized.
\end{definition}

\begin{lemma}
\label{lemma04}
A maximum-weight path-cycle cover $\mc{W}$ in $G'$ can be computed in $O(mn \log n)$ time.
\end{lemma}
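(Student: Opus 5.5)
We reduce the problem to computing a maximum-weight $[f,g]$-factor, or more simply a maximum-weight path-cycle cover (degree-constrained subgraph) in an auxiliary graph, which can be done in $O(mn\log n)$ time by Gabow's algorithm~\cite{Gab83}.

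\emph{Step 1: weight only needs one edge per cycle.} The weight of $E(\mc{W})$ counts each short cycle $C$ (a $4$-cycle, or also a $5$-cycle when $\eta=1$) at most once, no matter how many edges of $\mc{W}$ saturate it. So we shrink each short cycle $C$ of $\mc{F}$ into a single super-vertex $s_C$. The new graph $H$ has vertices $V(H)$ consisting of the super-vertices together with all vertices of $V$ not in short cycles. Each edge $\{u,v\}\in E(G')$ becomes the edge between the images of $u$ and $v$; we keep one copy per pair, or keep parallel copies, whichever Gabow's algorithm tolerates. No loops arise, because edges of $G'$ join different components of $\mc{F}$. On $H$ we set degree bounds: $g(s_C)=1$ and $f(s_C)=0$ for super-vertices, and $g(v)=2$, $f(v)=0$ otherwise. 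The edge weights are $w(e)=w(s_C)+w(s_{C'})$ over its super-vertex endpoints, where $w(s_C)=1$ if $C$ is a $4$-cycle and $w(s_C)=\eta$ if $C$ is a $5$-cycle. Because each super-vertex has degree at most $1$, any feasible subgraph $M$ of $H$ has total weight exactly equal to the number of weighted short cycles it touches.

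\emph{Step 2: correspondence with path-cycle covers of $G'$.}

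From $\mc{W}$ to $H$: given any path-cycle cover $\mc{W}$ of $G'$, for each short cycle saturated by $\mc{W}$ keep one saturating edge, and for every other vertex keep only the edges whose endpoints are both non-short. Doing this greedily yields a feasible subgraph of $H$ with weight at least the weight of $E(\mc{W})$. Since degrees only decrease, it remains feasible.

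From $H$ back to $\mc{W}$: given an optimal degree-constrained subgraph $M$ of $H$, lift each edge back to its original edge in $G'$. Each short-cycle vertex then receives degree at most $1$, because the super-vertex has degree at most $1$, and every other vertex has degree at most $2$. So the lifted set has maximum degree $2$. The one catch is that it might contain cycles, but cycles are allowed in a path-cycle cover. The lifted subgraph is therefore a path-cycle cover of $G'$ with the same weight. Hence the optimum over $H$ equals the optimum over path-cycle covers of $G'$.

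\emph{Step 3: running time.} $H$ has at most $n$ vertices and $m$ edges, and building it takes $O(m+n)$. A maximum-weight degree-constrained subgraph with $g\le 2$ is computed in $O(mn\log n)$ by~\cite{Gab83}, since $\sum_v g(v)=O(n)$.

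The main obstacle is Step 2: we must verify that the weight is faithfully encoded and that restricting super-vertices to degree $1$ loses nothing. This is exactly the greedy pruning argument above. We also need the solver to handle vertices with upper bound $1$ and the weights in Step 1, which are nonnegative edge weights; Gabow's algorithm does this.
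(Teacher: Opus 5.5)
\textbf{There is a genuine gap in the forward direction of your Step~2}, the one that maps a path-cycle cover $\mc{W}$ of $G'$ to a feasible subgraph of $H$. Capping every super-vertex at $g(s_C)=1$ does more than count the weight once: it shrinks the feasible set. An edge of $\mc{W}$ that joins two short cycles uses up the single unit of capacity at \emph{both} super-vertices. So ``keep one saturating edge per saturated cycle'' generally cannot respect the cap at both endpoints, and the loss is unavoidable.

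Here is a concrete instance. Take three $4$-cycles $C_1,C_2,C_3$ with no chords. For $i=1,2,3$ add one edge from $a_i\in V(C_i)$ to $b_{i+1}\in V(C_{i+1})$ (indices mod $3$), where $a_i,b_i$ are opposite vertices of $C_i$.
\begin{itemize}
\item The graph has $12$ vertices. Each $C_i$ has exactly two outside edges, so a $2$-factor uses $0$ or $2$ of them at $C_i$. Using $2$ would require a Hamiltonian path of $C_i$ between opposite vertices, which does not exist. Hence the three $4$-cycles form the unique $2$-factor, so they are $\mc{F}$.
\item $G'$ consists of exactly the three connecting edges. They form a matching that saturates all three cycles, so the maximum weight is $3$.
\item In your $H$ the three super-vertices form a triangle with $g\equiv 1$. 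Every feasible subgraph is then a matching in a triangle, i.e.\ at most one edge, so your optimum is $2$.
\end{itemize}
Your algorithm therefore returns a non-maximum $\mc{W}$. Lemma~\ref{lemma05}, which compares $\mc{W}$ against the cover $E(\mc{Q})\cap E(G')$, would then lose its justification. Your reverse direction (lifting an $H$-solution back to $G'$) is fine. The two optima just need not coincide.

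The underlying issue is that the objective, ``is $C$ touched at least once?'', is not linear in the chosen edges. You linearized it by restricting degrees, which discards optimal solutions. The paper instead keeps all original vertices and linearizes with a gadget:
\begin{itemize}
\item Every vertex $v$ of a short cycle $C_i$ gets $f(v)=g(v)=2$.
\item Unused degree at $C_i$ is absorbed by two auxiliary vertices $x_i,y_i$, each adjacent to all of $V(C_i)$, with $g(x_i)=g(y_i)=|V(C_i)|$.
\item The weight sits on one edge from a third vertex $z_i$ (with $g(z_i)=1$) to $x_i$ or $y_i$.
\end{itemize}
Taking that weight edge at, say, $x_i$ leaves $x_i$ capacity for only $|V(C_i)|-1$ cycle vertices. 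Some $v\in V(C_i)$ then has at most one edge to $y_i$, so it must take a real $G'$-edge. Conversely, any path-cycle cover of $G'$ extends to a factor of equal weight, because $C_i$ may keep \emph{all} of its $G'$-edges. That last property is exactly what your degree-$1$ cap destroys.
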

\begin{proof}
We first introduce the {\em maximum-weight $[f, g]$-factor} problem defined on an edge-weighted graph $G_1$
where $f, g$ are two functions mapping each vertex $v \in V(G_1)$ to non-negative integers $f(v), g(v)$ with $f(v) \le g(v)$.
An {\em $[f,g]$-factor} of $G_1$ is a spanning graph of $G_1$ 
such that the degree of each vertex $v$ is in the interval $[f(v), g(v)]$.
The {\em weight} of an $[f, g]$-factor of $G_1$ is the total weight of edges in the factor. 
A maximum-weight $[f, g]$-factor of $G_1$ can be computed in $O(m_1n_1\log n_1)$ time~\cite{Gab83}, 
where $m_1 = |E(G_1)|$ and $n_1 = |V(G_1)|$. 

We prove the lemma by a reduction to the maximum-weight $[f, g]$-factor problem.
We let $C_1, C_2, \ldots, C_s$ be the short cycles in $\mc{F}$.
From the graph $G'$ in Definition~\ref{def01}, we construct an edge-weighted graph $G_1 = (V \cup X, E(G') \cup F_1 \cup F_2)$ as follows:
\begin{itemize}
\parskip=0pt
\item $X = \{x_i, y_i, z_i \mid 1 \le i \le s \}$. 

\item $F_1 = \{\{x_i, v\}, \{y_i, v\} \mid v \in V(C_i), 1 \le i \le s \}$ and 
$F_2 = \{\{x_i, z_i\}, \{y_i, z_i\} \mid 1 \le i \le s \}$.

\item The weight of each edge in $E(G') \cup F_1$ is $0$.

\item The weights of edges in $F_2$ are defined as follows.
If $C_i$ is a $4$-cycle, then the weights of $\{ x_i, z_i \}, \{ y_i, z_i \}$ are both $1$.
Otherwise, $C_i$ is a $5$-cycle and the weights of $\{ x_i, z_i \}, \{ y_i, z_i \}$ are both $\eta$.

\item For each vertex $v$ in a short cycle, $f(v) = g(v) = 2$. 

\item For each vertex $v$ in $V$ but not in a short cycle, $f(v) = 0$ and $g(v) = 2$.

\item For each $i \in\{ 1, 2, \ldots, s \}$, $f(x_i) = f(y_i) = f(z_i) = 0$ and 
$g(x_i) = g(y_i) = |V(C_i)|$, $g(z_i) = 1$.
\end{itemize}
Clearly, $|V(G_1)|\le 4|V|$ and $|E(G_1)|\le |E(G')|+4|V|$.
Therefore, a maximum-weighted $[f, g]$-factor $\mc{S}$ of $G_1$ can be computed in $O(mn \log n)$ time because $G$ is a connected graph and hence $|E| \ge |V|-1$.
Let $\mc{W} = (V, E(G') \cap E(\mc{S}))$ and one sees that $\mc{W}$ is a path-cycle cover of $G'$ since $g(v) \le 2$ for each vertex $v \in V(G')$.
It suffices to show that the weight of $E(\mc{W})$ is maximized.

We first prove that the weight of $E(\mc{W})$ is not less than that of $\mc{S}$. 
By the construction of $G_1$, only the edges $\{x_i, z_i\}$ and $\{y_i, z_i\}$ may have positive weights.
Since $g(z_i) = 1$, at most one of $\{x_i, z_i\}$ and $\{y_i, z_i\}$, say $\{ x_i, z_i \}$ is in $E(\mc{S})$.
By $g(x_i) = |V(C_i)|$, there exists a vertex $v \in V(C_i)$ such that the edge $\{v, x_i\}$ is not in $E(\mc{S})$. 
Recall that $f(v) = g(v) = 2$ and thus $\mc{W}$ contains an edge incident to $v$.
In other words, $C_i$ is saturated by $E(\mc{W})$ and we are done. 

We next prove that given a path-cycle cover $\mc{W}_1$ of $G'$, the weight of $\mc{S}$ is not less than that of $E(\mc{W}_1)$.
We can construct an $[f, g]$-factor $\mc{S}_1$ for $G_1$ from $\mc{W}_1$ as follows: 
Initially, we set $\mc{S}_1 = \mc{W}_1$. 
For each short cycle $C_i$ and each vertex $v \in V(C_i)$, 
we perform the following operations according to the degree of $v$ in $\mc{W}_1$. 
\begin{itemize}
\parskip=0pt
\item If the degree of $v$ in $\mc{W}_1$ is~$0$, then add the edges $\{v, x_i\}, \{v, y_i\}$ to $E(\mc{S}_1)$.

\item If the degree of $v$ in $\mc{W}_1$ is~$1$, then add the edge $\{v, y_i\}$ to $E(\mc{S}_1)$
and further add the edge $\{x_i, z_i\}$ to $E(\mc{S}_1)$ if it is not in $E(\mc{S}_1)$.

\item If the degree of $v$ in $\mc{W}_1$ is~$2$, then add the edge $\{ x_i, z_i \}$ to $E(\mc{S}_1)$ if it is not in $E(\mc{S}_1)$.
\end{itemize}
One sees that $\mc{S}_1$ is an $[f, g]$-factor of $G_1$ and thus the weight of $\mc{S}$ is not less than that of $\mc{S}_1$.
It remains to show that $\mc{S}_1$ and $E(\mc{W}_1)$ have the same weight.
To prove this, let $C_i$ be a short cycle saturated by $E(\mc{W}_1)$. 
Then, there exists a vertex $v \in V(C_i)$ such that $\mc{W}_1$ contains an edge incident to $v$. 
By the construction of $\mc{S}_1$, $\mc{S}_1$ contains $\{x_i, z_i\}$ and we are done.
The proof is completed since we have proved that the weight of $E(\mc{W})$ is at least as large as that of $\mc{S}$.
\end{proof}

Given a maximum-weight path-cycle cover $\mc{W}$ of $G'$, we say that $\mc{W}$ is {\em stingy}
if for every edge $e \in \mc{W}$, removing $e$ from $\mc{W}$ makes the weight of $E(\mc{W})$ smaller. 
As long as $\mc{W}$ is not stingy, we can keep removing some edge $e$ from $E(\mc{W})$ without changing the weight of $E(\mc{W})$. 
So, we can hereafter assume that $\mc{W}$ is stingy.
We conclude the initial four steps in Figure~\ref{fig01}.

\begin{figure}[htb]
\begin{center}
\framebox{
\begin{minipage}{5.1in}
Algorithm {\sc Approx1}:\\
Input: A graph $G = (V, E)$;
\begin{itemize}
\parskip=0pt
\item[1.]
	Compute a maximum triangle-free path-cycle cover $\mc{F}$ of $G$.
\item[2.]
         Construct the auxiliary graph $G'$ in Definition~\ref{def01} and compute a maximum-weight path-cycle cover $\mc{W}$ of $G'$ as in Lemma~\ref{lemma04}.
\item[3.]
	Obtain a stingy $\mc{W}$ by repeatedly removing an edge $e$ from $E(\mc{W})$ 
	as long as the removal does not change the weight of $E(\mc{W})$.
\item[4.]
     Output the graph $(V, E(\mc{F}) \cup E(\mc{W}))$.
\end{itemize}
\end{minipage}}
\end{center}
\caption{The initial steps of our algorithms.} \label{fig01}
\end{figure}

In the sequel, $\mc{W}$ always means a stingy maximum-weight path-cycle cover of $G'$.
Let $\mc{F}+\mc{W}$ be the graph $(V, E(\mc{F}) \cup E(\mc{W}))$. 
For $i \in \{ 4, 5 \}$, let $\mc{I}_i$ be the set of $i$-cycles in $\mc{F}$ not saturated by $E(\mc{W})$.
A connected component $K$ of $\mc{F}+\mc{W}$ is {\em critical} if it belongs to 
$\mc{I}_4\cup \mc{I}_5$; otherwise, it is {\em non-critical}.
The next lemma will be crucial for proving our desired approximation ratios.

\begin{lemma}
\label{lemma05}
$|E(\mc{Q})| \le |E(\mc{F})| - |\mc{I}_4| - \eta |\mc{I}_5|$.
\end{lemma}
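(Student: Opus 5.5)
The plan is to turn $\mc{Q}$ into a triangle-free path-cycle cover $\mc{Q}'$ of $G$ with $|E(\mc{Q}')| \ge |E(\mc{Q})| + |\mc{I}_4| + \eta|\mc{I}_5|$. Since $\mc{F}$ is a maximum triangle-free path-cycle cover, $|E(\mc{Q}')| \le |E(\mc{F})|$, and the lemma follows. Let $\mc{I}^* = \mc{I}_4 \cup \mc{I}_5$ if $\eta = 1$ and $\mc{I}^* = \mc{I}_4$ if $\eta = 0$; these are exactly the unsaturated cycles that carry weight in Definition~\ref{def02}. The construction is as follows: for every $C \in \mc{I}^*$, delete from $\mc{Q}$ all edges with at least one endpoint in $V(C)$, and add the edges of $C$. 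Distinct cycles of $\mc{I}^*$ are vertex-disjoint, so the modifications do not interfere. The result is a path-cycle cover, because every vertex of $V(C)$ now has degree exactly $2$ inside $C$ and all other degrees only drop. It is triangle-free, because its cycles are either the added $4$- or $5$-cycles or none at all (what remains of $\mc{Q}$ is a set of subpaths).

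For the edge count, fix $C \in \mc{I}^*$ and let $i_C$ be the number of $\mc{Q}$-edges inside $V(C)$ and $o_C$ the number of $\mc{Q}$-edges with exactly one endpoint in $V(C)$. The net gain at $C$ is $|V(C)| - i_C - o_C$, and $i_C \le |V(C)| - 1$ because $\mc{Q}$ restricted to $V(C)$ is a linear forest. So everything reduces to the claim that $o_C = 0$ for every $C \in \mc{I}^*$, i.e., no edge of $\mc{Q}$ leaves a weighted unsaturated short cycle. Such an edge $e = \{u, v\}$ with $u \in V(C)$ joins two different components of $\mc{F}$ and has an endpoint in a short cycle, so $e \in E(G')$.

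The main step, and the obstacle, is to derive a contradiction with $\mc{W}$ from the existence of such an $e$. Suppose first that $v$ has $\mc{W}$-degree at most $1$. Since $u$ has $\mc{W}$-degree $0$ ($C$ is unsaturated), $\mc{W} + e$ is still a path-cycle cover of $G'$, and it newly saturates $C$, which has weight $1$. This contradicts the maximality of $\mc{W}$.

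The harder case is when $v$ has $\mc{W}$-edges $e_1 = \{v, w_1\}$ and $e_2 = \{v, w_2\}$. By stinginess, removing $e_1$ strictly lowers the weight. If $v$'s own component were the only cycle losing saturation, $e_2$ would still saturate it, so $e_1$ must be the unique saturating edge of the cycle containing $w_1$; the same holds for $e_2$ and $w_2$. The swap $\mc{W} - e_1 + e$ therefore only breaks even, and maximality plus stinginess alone do not suffice.

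Here I would try an augmenting-path argument in the symmetric difference of $\mc{W}$ and $\mc{Q} \cap E(G')$, both of which are path-cycle covers of $G'$. Start the alternating walk at $e$ and keep exchanging: each cycle losing its unique $\mc{W}$-edge is re-saturated by the next $\mc{Q}$-edge. Then show that the walk must terminate at a vertex of spare $\mc{W}$-degree, which gives a strict weight gain. The alternative is to charge each such $C$ against a distinct cycle in the $\mc{W}$-structure, which also preserves the inequality. If the pure exchange argument fails, the fallback is to refine the construction of $\mc{Q}'$: keep the leaving edge $e$ and instead redirect the loss to the cycle at $w_1$ uniquely saturated by $e_1$, keeping the per-cycle net gain at least $1$ by amortization.
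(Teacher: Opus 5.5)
Everything in your plan rests on the claim that $o_C = 0$ for every $C \in \mc{I}^*$. That claim is false, so the degree-2 case cannot be closed at all, not merely with difficulty. Here is a counterexample. Let $v$ be adjacent to one vertex $c, c_1, c_2$ of each of three $4$-cycles $C, C_1, C_2$, with no other edges.
\begin{itemize}
\item One maximum triangle-free path-cycle cover $\mc{F}$ is the three cycles plus $v$ as a $1$-path. No cycle passes through $v$, so $12$ edges is the maximum.
\item $G'$ consists of the three edges at $v$. Then $\mc{W} = \{\{v,c_1\},\{v,c_2\}\}$ is a stingy maximum-weight path-cycle cover, so $\mc{I}_4 = \{C\}$.
\item For $k \ge 9$, an optimal $\mc{Q}$ with $11$ edges is the $9$-path obtained by opening $C$ at $c$, passing through $v$, and opening $C_1$ at $c_1$, together with a $4$-path on $C_2$. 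It uses $\{c,v\}$.
\end{itemize}
So $o_C = 1$, the net gain of your $\mc{Q}'$ at $C$ is $4-3-1 = 0$, and you only obtain $|E(\mc{F})| \ge 11$. The lemma itself ($11 \le 12-1$) is still true, which shows the inequality holds only in aggregate, not cycle by cycle.

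Your alternating-walk fallback fails on the same instance. Starting from $\{c,v\}$, it drops $\{v,c_2\}$ and then gets stuck at $C_2$, which has no $\mc{Q}$-edge in $G'$, so there is no strict weight gain. The charging and amortization alternatives are not specified enough to check.

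The missing idea is to swap the roles of $\mc{W}$ and $\mc{Q}$. Work with the set $\mc{J}^*$ of weighted short cycles (the $4$-cycles, plus the $5$-cycles when $\eta = 1$) that are \emph{not saturated by $E(\mc{Q}) \cap E(G')$}.
\begin{itemize}
\item Since $(V, E(\mc{Q}) \cap E(G'))$ is itself a path-cycle cover of $G'$, the maximality of $\mc{W}$ immediately gives $|\mc{J}_4| + \eta|\mc{J}_5| \ge |\mc{I}_4| + \eta|\mc{I}_5|$.
\item For $C \in \mc{J}^*$ you get $o_C = 0$ for free: any $\mc{Q}$-edge leaving $C$ joins two components of $\mc{F}$, so it lies in $G'$ and would saturate $C$.
\end{itemize}
Running your own construction on $\mc{J}^*$ instead of $\mc{I}^*$ then yields a triangle-free path-cycle cover with at least $|E(\mc{Q})| + |\mc{J}_4| + \eta|\mc{J}_5|$ edges, which finishes the proof. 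In the example, $\mc{J}_4 = \{C_2\}$, and the gain there is $1$.

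This is precisely the paper's argument. The paper phrases the last step by restricting to $U$, the set of vertices outside the cycles of $\mc{J}^*$, and using that $\mc{F}[U]$ is a maximum triangle-free path-cycle cover of $G[U]$.
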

\begin{proof}
For $i \in \{ 4, 5 \}$, let $s_i$ be the number of $i$-cycles in $\mc{F}$, and $\mc{J}_i$ be the set of $i$-cycles in $\mc{F}$ 
not saturated by the edge set $E(\mc{Q}) \cap E(G')$.
Note that the weight of $E(\mc{Q}) \cap E(G'))$ and $\mc{W}$ is $s_4-|\mc{J}_4| + \eta(s_5-|\mc{J}_5|)$ and $s_4-|\mc{I}_4| + \eta(s_5-|\mc{I}_5|)$, respectively.
Since $\mc{Q}$ is a $k$-pp, the graph $(V, E(\mc{Q}) \cap E(G'))$ is a path-cycle cover of $G'$.
Therefore, $s_4-|\mc{J}_4| + \eta(s_5-|\mc{J}_5|) \le s_4-|\mc{I}_4| + \eta(s_5-|\mc{I}_5|)$ and 
it follows that $|\mc{J}_4|  + \eta |\mc{J}_5|  \ge |\mc{I}_4| + \eta |\mc{I}_5|$.

We prove the lemma by discussing whether or not $\eta = 1$.

\medskip
{\em Case~1: } $\eta = 0$ and thus $|\mc{J}_4| \ge |\mc{I}_4|$.
Let $U = V \setminus V(\mc{J}_4)$. By the definition of $\mc{J}_4$,
we know that for each edge $\{ u, v \} \in E(\mc{Q})$, either $\{u, v\} \subseteq U$, or some cycle in $\mc{J}_4$ contains both $u$ and $v$.
Since $\mc{Q}$ is a path-cycle of $G$, each cycle in $\mc{J}_4$ has at most $3$ edges in $\mc{Q}$.
Therefore, $|E(\mc{Q})| \le |E(\mc{Q}[U])| + 3|\mc{J}_4|$.
Similarly, $E(\mc{F})$ can be partitioned into two subsets $E(\mc{F}[U])$ and $E(\mc{J}_4)$.
Note that $\mc{F}[U]$ is a maximum triangle-free path cycle cover of $G[U]$ and 
thus $|E(\mc{F}[U])| \ge |E(\mc{Q}[U])|$.
We finally obtain
\[
|E(\mc{Q})| \le |E(\mc{Q}[U])| + 3|\mc{J}_4| \le |E(\mc{F}[U])| + 3|\mc{J}_4| 
= |E(\mc{F})| - |\mc{J}_4| \le |E(\mc{F})| - |\mc{I}_4|.
\]

\medskip
{\em Case~2: } $\eta = 1$.
In this case, $|\mc{J}_4|  + |\mc{J}_5|  \ge |\mc{I}_4| + |\mc{I}_5|$.
The proof is very similar to Case~1 but we may need to set $U = V \setminus (V(\mc{J}_4) \cup V(\mc{J}_5))$.
The lemma is proved.
\end{proof}

\subsection{Possible structures of connected components in $\mc{F}+\mc{W}$}

In this subsection, we analyze the possible structures of a connected component $K$ in $\mc{F}+\mc{W}$.
We remind the readers that the maximum-weight path-cycle cover $\mc{W}$ is stingy.

\begin{definition}
\label{def03}
Given $K$, let $(K)_m$ denote the graph obtained from $K$ by contracting each connected component of $\mc{F}$ to a single node.
In other words, the nodes of $(K)_m$ one-to-one correspond to the connected components of $\mc{F}$
and two nodes are adjacent in $(K)_m$ if and only if $\mc{W}$ contains an edge between the two corresponding components.
\end{definition}

\begin{lemma}
\label{lemma06}
$(K)_m$ is a single node, edge or star and the following three statements hold:
\begin{enumerate}
\parskip=0pt
\item If $(K)_m$ is an edge, then at least one endpoint of $(K)_m$ corresponds to a short cycle in $\mc{F}$.

\item If $(K)_m$ is a star, then each satellite of $(K)_m$ corresponds to a short cycle in $\mc{F}$.

\item If $k \in \{ 9, 10 \}$, then ``short cycle'' in the above two statements can be replaced by ``$4$-cycle''.
\end{enumerate}
\end{lemma}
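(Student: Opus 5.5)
The plan is to argue from stinginess of $\mc{W}$ edge by edge. The quantity to track is, for each edge $e \in E(\mc{W})$, which short cycles (for $k\in\{9,10\}$: $4$-cycles, i.e., cycles counted with positive weight) are saturated by $e$ and by no other edge of $\mc{W}$.

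First I will record the key consequence of stinginess. By Definition~\ref{def02}, removing $e=\{u,v\}$ from $\mc{W}$ lowers the weight exactly when some positively weighted cycle saturated by $e$ is saturated by no other edge of $\mc{W}$. So every edge $e$ of $\mc{W}$ has at least one endpoint lying in a positively weighted short cycle $C$ that $\mc{W}$ touches \emph{only} through $e$. Call such a $C$ a \emph{private} cycle of $e$. In particular, $C$ contributes exactly one $\mc{W}$-edge to $K$, so its node in $(K)_m$ has degree exactly $1$.

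Next comes the structure of $(K)_m$. Every edge of $(K)_m$ comes from some $e\in E(\mc{W})$. By Definition~\ref{def01}, $e$ joins distinct components of $\mc{F}$, so there are no loops. I then argue that every edge of $(K)_m$ has an endpoint of degree $1$ (a leaf). To see this, take any $\mc{W}$-edge realising it; it has a private cycle at one end, and that node is a leaf of $(K)_m$. There are also no parallel edges: if two $\mc{W}$-edges $e,e'$ joined the same two components $A$ and $B$, then the private cycle of $e$, say $A$, would be touched by $e'$ as well, a contradiction. Hence $(K)_m$ is a connected simple graph in which every edge has a leaf endpoint. Such a graph has at most one non-leaf vertex, since two non-leaf vertices on a path would yield an edge with no leaf end. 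Therefore $(K)_m$ is a single node, an edge, or a star.

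Then the statements. For the edge case, the leaf endpoint supplied by the private cycle corresponds to a short cycle, giving Statement~1. For a star, each edge has a private-cycle endpoint, and the center has degree $\ge 2$ and so is not private. Hence every satellite is a private cycle, i.e., a short cycle, giving Statement~2. For Statement~3, when $k\in\{9,10\}$ we have $\eta=0$, so only $4$-cycles carry weight, and private cycles are $4$-cycles.

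The main obstacle is making the private-cycle argument precise. Weight counts cycles, not edges. I must therefore check carefully that a private cycle of $e$ cannot be touched by another $\mc{W}$-edge, which gives degree $1$ in $(K)_m$. I also need that each non-trivial $K$ contains $\mc{W}$-edges, which is immediate from the definition of $K$ as a component of $\mc{F}+\mc{W}$.
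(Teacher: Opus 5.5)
Your proof is correct and uses the same key idea as the paper: stinginess forces every $\mc{W}$-edge to be the only edge saturating some positively weighted short cycle at one of its ends. The paper uses this to forbid $4$-paths and cycles in $(K)_m$ and to pin down the satellites, while you package it as a single ``private cycle'' property and derive all three statements from it. Your version also rules out two $\mc{W}$-edges joining the same pair of $\mc{F}$-components, which the simple contracted graph $(K)_m$ hides but which is needed for $\varepsilon(C)$ in Definition~\ref{def04} to be unique, and it gets Statement~3 directly together with Statement~1.
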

\begin{proof}
Clearly, if $(K)_m$ has no edges, then $(K)_m$ is a single node.
We next assume $(K)_m$ has at least one edge, which corresponds to an edge in $\mc{W}$.
Recall that $\mc{W}$ is stingy.
If there were a $4$-path or cycle in $(K)_m$, then removing a certain edge of $\mc{W}$ will not change the weight of 
$E(\mc{W})$, contradicting the stinginess of $\mc{W}$.
Therefore, $(K)_m$ contains neither a $4$-path nor a cycle.
In other words, $(K)_m$ is an edge or star.

We first assume $(K)_m = \{ u, v \}$ is an edge.
By the construction of $G'$, at least one of $u$ and $v$ corresponds to a short cycle of $\mc{F}$ and thus Statement~1 is proved.
When $k \in \{ 9, 10 \}$, $\eta = 0$; so, if neither $u$ nor $v$ corresponds to a $4$-cycle, 
then the edge in $\mc{W}$ corresponding to $\{ u, v \}$ can be deleted without changing the weight of $E(\mc{W})$, 
a contradiction against the stinginess of $\mc{W}$.
Therefore, Statement~3 holds when $(K)_m$ is an edge.

We next assume $(K)_m$ is a star centered at $u$.
Let $v$ be a satellite of $(K)_m$ and thus $\{ u, v \}$ is an edge in $(K)_m$.
If $v$ does not correspond to a short cycle in $\mc{F}$, then since $(K)_m$ is a star, 
removing the edge in $\mc{W}$ corresponding to $\{ u, v \}$ from $\mc{W}$ will not change the weight of $E(\mc{W})$, contradicting the stinginess of $\mc{W}$.
This proves Statement~2.
Similarly, when $k \in \{ 9, 10 \}$, $\eta = 0$ and $v$ must correspond to a $4$-cycle. 
This completes the proof.
\end{proof}

If $(K)_m$ is a single node or edge, then we select a node of $(K)_m$ as the {\em center} of $(K)_m$ as follows.
If $(K)_m$ is a single node, then it becomes the center.
If $(K)_m$ is an edge, then by Statements~1 and~3 in Lemma~\ref{lemma06}, $(K)_m$ has a node $v$ corresponding to 
a short cycle or more specifically a $4$-cycle for $k \in \{ 9, 10 \}$; 
one arbitrary such $v$ becomes the satellite, while the other node becomes the center.
In this way, even when $(K)_m$ is not a star, it has exactly one center and each satellite of $(K)_m$ corresponds to a short cycle in $\mc{F}$
or more specifically a $4$-cycle in $\mc{F}$ for $k \in \{ 9, 10 \}$.

\begin{definition}
\label{def04}
The {\em center element} (respectively, a {\em satellite element}) of $K$ is the connected component in $\mc{F}$ corresponding to 
the center (respectively, a satellite) of $(K)_m$.
Let $K_c$ denote the center element and call each vertex in $K_c$ an {\em anchor}.

A vertex in $K_c$ is a {\em $j$-anchor} if it is incident to exactly $j$ edges in $\mc{W}$.
Recall that $\mc{W}$ is a path-cycle cover and thus $j \in \{ 0, 1, 2 \}$. 

For a satellite element $C$ of a connected component of $\mc{F}+\mc{W}$, 
we use $\varepsilon(C)$ to denote the unique edge of $\mc{W}$ incident to a vertex of $C$, 
and use $\nu(C)$ to denote the endpoint of $\varepsilon(C)$ contained in $C$.
\end{definition}

Since $K_c$ is a path or $4^+$-cycle, we next define several notations for convenience in later discussion.

\begin{notation}
\label{nota01} 
Let $\ell$ be the order of $K_c$.
The vertices in $K_c$ can be ordered as a sequence $v_1, v_2, \ldots v_\ell$ in such a way 
that $\{v_i, v_{i+1}\}$ is an edge in $K_c$ for every $1\le i<\ell$.
If $K_c$ is a cycle, then $\{ v_\ell, v_1 \}$ is also an edge in $K_c$ and 
we define $v_{\ell+1} = v_1$, $v_0 = v_\ell$ for simplicity.

For each vertex $v_i \in V(K_c)$ that is a $1$-anchor, we use $C_i$ to denote the short cycle in $\mc{F}$ 
with $\varepsilon(C_i) = \{v_i, \nu(C_i)\}$.
Similarly, for each vertex $v_i \in V(K_c)$ that is a $2$-anchor, we use $C_i$ and $C'_i$ to denote the two 
short cycles in $\mc{F}$ with 
$\varepsilon(C_i) = \{v_i, \nu(C_i)\}$ and $\varepsilon(C'_i) = \{v_i, \nu(C'_i)\}$.

For two integers $i, j$ with $i \le j$, 
we use $K[i, j]$ to denote the graph obtained from $K$ by 
\begin{itemize} 
\item first removing the edges $\{ v_{i-1}, v_i \}$ and $\{ v_j, v_{j+1} \}$, if they are present; 
\item and then removing all vertices outside the connected component containing $v_i$ and $v_j$.
\end{itemize}
When $i = j$, we simplify $K[i, j]$ to $K[i]$. See Figure~\ref{fig02} for an illustration. 
We call the subgraph of $K_c$ still remaining in $K[i,j]$ the {\em center element} of $K[i,j]$, 
and call each sattelite element of $K$ still remaining in $K[i,j]$ a {\em satellite element} of $K[i,j]$.
Clearly, the center element of $K[i, j]$ is a path even if $K_c$ is a cycle. 
As in Definition~\ref{def03}, we define $(K[i,j])_m$ to be the graph obtained from $K[i,j]$ 
by contracting each of its center element and satellite elements to a single node.

\end{notation}

\input{fig02}

\section{A $\frac 45$-Approximation Algorithm for $k$PPE with $k \in \{ 9, 10 \}$}
\label{9PP}

In this section, we present a $\frac 45$-approximation algorithm for $k$PPE with $k \in \{ 9, 10 \}$,
which is indeed a $\frac {k+4}5$-approximation algorithm for $k$PP by Lemma~\ref{lemma01}.
We first call {\sc Approx1} in Figure~\ref{fig01} to obtain the spanning graph $\mc{F}+\mc{W}$.
Recall that $\mc{I}_4$ is the set of critical $4$-cycles in $\mc{F}+\mc{W}$.
Let $\mc{T}$ and $\mc{O}$ be the set of $2$-anchors and $1$-anchors in $\mc{F}+\mc{W}$, respectively. 
For a subgraph $H$ of $\mc{F}+\mc{W}$, we define $F_H = E(H) \cap E(\mc{F})$; 
in other words, $F_H$ consists of all edges in both $H$ and $\mc{F}$.
Moreover, 
we define $\mc{T}_H = \mc{T} \cap V(H)$ and $\mc{O}_H = \mc{O} \cap V(H)$; 
in other words, $\mc{T}_H$ (respectively, $\mc{O}_H$) consists of 2-anchors (respectively, 1-anchors) 
of $\mc{F}+\mc{W}$ appearing in $H$.

For the discussions in Lemmas~\ref{lemma07}--\ref{lemma09}, 
let $K$ be a connected component of $\mc{F}+\mc{W}$ not contained in $\mc{I}_4$.
Since $K \notin \mc{I}_4$, Lemma~\ref{lemma06} implies that (1)~$(K)_m$ is a single node but $K$ is not a $4$-cycle, 
(2)~$(K)_m$ is not a single node and $\mc{T}_K = \emptyset$, or (3)~$(K)_m$ is not a single node and $\mc{T}_K \ne \emptyset$.
We next prove that in any case, $K$ has a $k$-pp with at least $\frac 45 |F_K|$ edges. 
For this end, we use Notation~\ref{nota01}.

\begin{lemma}
\label{lemma07}
Suppose that $(K)_m$ is a single node and $K \notin \mc{I}_4$. 
Then, we can construct a $k$-pp of $K$ with at least $\frac 45 |F_K|$ edges.
\end{lemma}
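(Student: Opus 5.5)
If $(K)_m$ is a single node, then $K$ is one connected component of $\mc{F}$ and no edge of $\mc{W}$ touches it. So $K$ is a path or a cycle, $F_K = E(K)$, and it is enough to find a $k$-pp of $K$ with at least $\frac 45 |E(K)|$ edges. Let $\ell$ be the order of $K$. We split into cases on whether $K$ is a cycle or a path, and on $\ell$.

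\emph{$K$ is a cycle.} Since $\mc{F}$ is triangle-free, $\ell \ge 4$, and since $K \notin \mc{I}_4$, $K$ is not a $4$-cycle, so $\ell \ge 5$. Statement~1 of Lemma~\ref{lemma03} then gives a $k$-pp of $K$ with at least $\frac 45 \ell$ edges. This equals $\frac 45 |E(K)|$ because a cycle has $|E(K)| = \ell$.

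\emph{$K$ is a path.} Here $|E(K)| = \ell - 1$.
\begin{itemize}
\item If $\ell \le k$, then $K$ is itself a $k^-$-path and keeps all $|E(K)| \ge \frac 45 |E(K)|$ edges.
\item If $\ell \ge k+1 \ge 10$, then $\ell \ge 5$ and Statement~1 of Lemma~\ref{lemma03} gives a $k$-pp with at least $\frac 45 \ell > \frac 45(\ell-1) = \frac 45 |E(K)|$ edges.
\end{itemize}

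The only point needing care is excluding the $4$-cycle. For $k\in\{9,10\}$ a $4$-cycle not saturated by $\mc{W}$ belongs to $\mc{I}_4$, which is ruled out by hypothesis. A $5$-cycle (possibly in $\mc{I}_5$, which is not excluded here) is handled by Statement~1 anyway, by deleting one edge to leave $4 = \frac 45\cdot 5$ edges. So no step is a genuine obstacle; the argument is bookkeeping on top of Lemma~\ref{lemma03}.
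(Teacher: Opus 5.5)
Your proposal is correct and follows the paper's proof essentially step for step. You note that $F_K = E(K)$ because no $\mc{W}$-edge touches $K$, and you keep $K$ whole when it is a $k^-$-path. Otherwise you apply Statement~1 of Lemma~\ref{lemma03} to the $(k+1)^+$-path or $5^+$-cycle, with the $4$-cycle excluded by $K\notin\mc{I}_4$. Your explicit check $\frac45\ell > \frac45(\ell-1)$ in the long-path case just spells out a step the paper leaves implicit.
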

\begin{proof}
Note that $K$ is a path or $5^+$-cycle in $\mc{F}$.
Therefore, $F_K = E(K)$.
If $K$ is a $k^-$-path, then $K$ itself is a $k$-pp and we are done.
Otherwise, $K$ is a $(k+1)^+$-path or $5^+$-cycle.
By Statement~1 in Lemma~\ref{lemma03}, the proof is completed.
\end{proof}

\begin{lemma}
\label{lemma08}
Suppose that $(K)_m$ is not a single node and $\mc{T}_K = \emptyset$.
Then, we can construct a $k$-pp of $K$ with at least $\frac 45 |F_K|$ edges.
\end{lemma}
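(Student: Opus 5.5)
Since $\mc{T}_K=\emptyset$, every anchor of $K_c$ is a $0$- or $1$-anchor. By Lemma~\ref{lemma06} and the choice of center, every satellite element of $K$ is a $4$-cycle $C_i$ attached to its own $1$-anchor $v_i$ by the single edge $\varepsilon(C_i)=\{v_i,\nu(C_i)\}$. Let $t\ge 1$ be the number of satellites and $\ell$ the order of $K_c$. Then $|F_K| = |E(K_c)| + 4t$, where $|E(K_c)|$ is $\ell-1$ or $\ell$. The plan is to build a $k$-pp in two stages. First, for each satellite $C_i$, remove one of the two $\mc{F}$-edges of $C_i$ at $\nu(C_i)$; this turns $C_i$ into a $4$-path ending at $\nu(C_i)$ and costs one edge per satellite. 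Second, decide which edges $\varepsilon(C_i)$ to keep, and cut $K_c$ so that every resulting component is a $k^-$-path.

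I would charge per satellite. Each satellite contributes $4$ edges of $F_K$ and should leave at least $3+\frac15=\frac{16}5$ edges counted toward it, i.e., the lost $\mc{F}$-edge must be compensated by $\frac15$ of a gain from $\varepsilon(C_i)$ or from its neighborhood in $K_c$. The simplest scheme keeps $\varepsilon(C_i)$ and forms a path such as $C_i$-path $+\,v_i+$ a segment of $K_c$. The center element contributes edges at a rate of at least $\frac45$ by Lemma~\ref{lemma03}-type cutting, which I would adapt to segments.

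Concretely, I would treat $K_c$ as a path (if it is a cycle, delete an edge chosen next to a $1$-anchor when possible). Then I scan $v_1,\dots,v_\ell$ and greedily cut $K_c$ into consecutive segments $K[i,j]$. Each segment is chosen so that it contains at most two $1$-anchors, placed at its ends. Attaching the two corresponding $4$-paths to the two ends yields a single path of order (segment length)$+8\le k$, so the segment length is at most $1$ or $2$ for $k=9,10$. That is too short, so the real scheme is the following: a $1$-anchor at an end of its segment gets its satellite glued on, giving the path $C_i$-path $-\,v_i\cdots v_j$ of order $4+(j-i+1)\le k$. Each such piece with $p$ center vertices and one satellite has $F$-count $4+p$ (plus boundary edges cut) and yields $p+3+1=p+4$ edges. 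Both in the within-piece accounting and when we cut the connecting center edge between pieces, we lose one $\mc{F}$-edge per cut. The requirement then becomes that the number of cuts plus removed satellite edges, minus the kept $\varepsilon$ edges, is at most $\frac15|F_K|$.

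The main obstacle is the case of adjacent or densely packed $1$-anchors, where segments are forced to be short (a piece of order $\le k$ with a $4$-path attached leaves only $5$--$6$ center vertices). An alternative is needed there: for some satellites, drop $\varepsilon(C_i)$ and keep $C_i$ as a standalone $4$-path (ratio $\frac34$). The compensation must come from the observation that a piece consisting of $C_i$-path, $v_i$, and the next anchor's $C_{i+1}$-path forms a $9$-path ($4+1+4$) with $8$ edges against $F$-count $8+1$ or $+2$. So I would pair consecutive $1$-anchors $v_i,v_{i+1}$ into $9$-paths $C_i\,v_i\,$ with $C_{i+1}$ attached via $v_{i+1}$? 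This fails since $v_i,v_{i+1}$ plus two $4$-paths gives order $10$, which is fine for $k=10$ but not for $k=9$. I would therefore first try a careful case analysis on gaps between consecutive $1$-anchors: gaps of at least $1$ zero-anchor get the one-satellite-per-piece scheme, while adjacent anchors are handled by taking one satellite standalone and gluing the other, verifying the $\frac45$ amortization per block. I expect this bookkeeping for $k=9$ with consecutive $1$-anchors to be the crux.
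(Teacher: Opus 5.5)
Your proposal stops short of a proof. The per-satellite amortization is never carried out, and you leave the case of consecutive $1$-anchors as an unresolved ``crux.'' That difficulty is self-inflicted: you require each piece (a satellite $4$-path glued to a segment of $K_c$) to be a single path of order at most $k$, and this is what makes segments look ``forced to be short.'' No such restriction is needed.

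The missing idea is to delete the center edge $\{v_{i-1},v_i\}$ in front of \emph{every} $1$-anchor $v_i$. Each resulting component is then $H=K[i,j]$, whose unique $1$-anchor $v_i$ sits at its left end. There is one exception: when $K_c$ is a path with $v_1\notin\mc{O}_K$, there is also a satellite-free initial path, which either keeps all its edges or is handled by Lemma~\ref{lemma03}. Removing one edge of $C_i$ at $\nu(C_i)$ turns $H$ into a path of order $j-i+5\ge 5$, and this path may be arbitrarily long. Statement~1 of Lemma~\ref{lemma03} already cuts any such path into a $k$-pp with at least $\frac45(j-i+5)=\frac45|F_H|+\frac45$ edges, since $|F_H|=j-i+4$. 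The extra $\frac45$ per component exactly pays for the one deleted center edge charged to it. Summing over components gives $\frac45|F_K|$, with a surplus of $\frac45$ when $K_c$ is a path starting at a $1$-anchor. This is the paper's proof.

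In particular, adjacent $1$-anchors cause no trouble. If $v_{i+1}$ is also a $1$-anchor, the piece is $C_i$ plus $v_i$: a $5$-path with $4$ edges, charged against $5$ edges of $\mc{F}$ (the four of $C_i$ and $\{v_{i-1},v_i\}$), which is exactly ratio $\frac45$. More generally, a piece with $p\ge 1$ center vertices has $p+4$ vertices, hence $p+3$ edges (not $p+4$ as you wrote). It is charged $p+4$ edges of $\mc{F}$, and $\frac{p+3}{p+4}\ge\frac45$ for every $p\ge 1$. So pairing two satellites through adjacent anchors is unnecessary. So is the fallback of leaving some $C_i$ as a standalone $4$-path; at ratio $\frac34$ it would also create a deficit that you never show how to repay.
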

\begin{proof}
To construct a $k$-pp of $K$, we first construct a graph $K'$ from $K$ by removing 
all edges $\{ v_{i-1}, v_i \}$ of $K$ such that $1\le i \le \ell$ and $v_i \in \mc{O}_K$. 
Let $\Gamma$ be the set of connected components of $K'$.
It suffices to show that $K'$ contains a $k$-pp with at least $\frac 45 |F_K|$ edges.

\medskip
{\em Case~1. } $K_c$ is a cycle.
In this case, $K$ contains the edge $\{ v_{i-1}, v_i \}$ for every $v_i \in \mc{O}_K$.
Therefore, $|F_{K'}| = |F_K| - |\mc{O}_K|$ and $|\Gamma| = |\mc{O}_K|$.
Each $H \in \Gamma$ is clearly $K[i, j]$ for some $1 \le i \le j \le \ell$ and 
contains a unique element of $\mc{O}_K$ (namely, $v_i$). 
See Figure~\ref{fig03} for an illustration. 
Obviously, $|F_H| = j-i+4$ and we can turn $H$ into a $(j-i+5)$-path by removing one edge in $C_i$ incident to 
$\nu(C_i)$.
Therefore, by Statement~1 in Lemma~\ref{lemma03} and $j-i+5\ge 5$, 
$H$ contains a $k$-pp with $\frac 45 (j-i+5) = \frac 45 |F_H| + \frac 45$ edges.
By combining these $k$-pps of all $H \in \Gamma$, we obtain a $k$-pp of $K'$
with $\sum_{H \in \Gamma} (\frac 45 |F_H| + \frac 45)$ edges.
Recall that $|F_{K'}| = |F_K| - |\mc{O}_K|$, $|\Gamma| = |\mc{O}_K|$ and $|F_{K'}| = \sum_{H \in \Gamma} |F_H|$.
We now have
\[
\sum_{H \in \Gamma} \left(\frac 45 |F_H| + \frac 45\right) = \frac 45 |F_{K'}| + \frac 45 |\Gamma| 
= \frac 45 |F_{K'}| + \frac 45 |\mc{O}_K| = \frac 45 |F_K|.
\]
The proof is finished in this case.

\medskip
{\em Case~2. } $K_c$ is a path and $v_1$ is a $1$-anchor. 
In this case, we removed $|\mc{O}_K|-1$ edges from $K_c$ in the construction of $K'$ 
because $\{ v_{i-1}, v_i \}$ is not an edge in $K_c$ when $i = 1$.
It follows that $|F_{K'}| = |F_K| - |\mc{O}_K| + 1$ and $|\Gamma| = |\mc{O}_K|$.
Note that each $H \in \Gamma$  is $K[i, j]$ for some $1 \le i \le j \le \ell$ and contains a unique element $v_i$ in $\mc{O}_K$.
Similarly to Case~1, we can construct a $k$-pp of $H$ with $\frac 45 |F_H| + \frac 45$ edges.
Therefore, $K'$ contains a $k$-pp with at least 
\[
\sum_{H \in \Gamma} \left(\frac 45 |F_H| + \frac 45\right) = \frac 45 |F_{K'}| + \frac 45 |\Gamma| 
= \frac 45 |F_{K'}| + \frac 45 |\mc{O}_K| = \frac 45 |F_K| + \frac 45 > \frac 45 |F_K|
\]
edges.
The proof is finished in this case.

\medskip
{\em Case~3. } $K_c$ is a path and $v_1$ is not a $1$-anchor. 
In this case, $\{ v_{i-1}, v_i \}$ is an edge in $K_c$ for every $v_i \in \mc{O}_K$.
Therefore, $|F_{K'}| = |F_K| - |\mc{O}_K|$ and $|\Gamma| = |\mc{O}_K|+1$.
Let $H_1 \in \Gamma$ be the connected component of $\Gamma$ containing $v_1$.
Clearly, $H_1$ contains no $1$-anchors of $K$ and hence is a path.
By Lemma~\ref{lemma07}, $H_1$ has a $k$-pp with $\frac 45 |E(H_1)| = \frac 45 |F_{H_1}|$ edges.
Each $H \in \Gamma \setminus \{H_1\}$  is $K[i, j]$ for some $1 \le i \le j \le \ell$ and 
contains a unique $v_i \in \mc{O}_K$.
Similarly to Case~1, $H$ has a $k$-pp with $\frac 45 |F_H| + \frac 45$ edges.
Therefore, we can construct a $k$-pp of $K'$ with at least 
$\frac 45 |F_{H_1}| + \sum_{H \in \Gamma \setminus \{H_1\}} \left(\frac 45 |F_H| + \frac 45\right)$ edges.
The proof in this case is finished because 
\[
\frac 45 |F_{H_1}| + \sum_{H \in \Gamma \setminus \{H_1\}}\left (\frac 45 |F_H| + \frac 45\right) 
= \frac 45 |F_{K'}| + \frac 45 ( |\Gamma|-1 ) 
= \frac 45 |F_{K'}| + \frac 45 |\mc{O}_K| = \frac 45 |F_K|.
\]

Since Case~1, 2, or~3 must occur, the lemma is proved.
\end{proof}


\begin{lemma}
\label{lemma09}
Suppose that $(K)_m$ is not a single node and $\mc{T}_K \ne \emptyset$.
Then, we can construct a $k$-pp of $K$ with at least $\frac 45 |F_K|$ edges.
\end{lemma}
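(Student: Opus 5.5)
The plan is to cut $K$ at every $2$-anchor, so that each $2$-anchor becomes an isolated piece $K[i]$ that can be handled with no loss at all. What remains are pieces containing no $2$-anchors, and these are handled exactly as in Lemma~\ref{lemma08}. Concretely, let $K'$ be obtained from $K$ by deleting, for every $v_i \in \mc{T}_K$, the edges $\{v_{i-1},v_i\}$ and $\{v_i,v_{i+1}\}$ of $K_c$ that are present. Write $D$ for the set of deleted edges. Then $|D| \le 2|\mc{T}_K|$, $F_{K'} = F_K \setminus D$, and the components of $K'$ are of two kinds:
\begin{itemize}
\item the pieces $K[i]$ with $v_i \in \mc{T}_K$;
\item pieces $K[i,j]$ whose center element is a path containing no $2$-anchor, together with the satellite $4$-cycles attached to its $1$-anchors.
\end{itemize}
(If $K_c$ is a cycle, it is opened up by these deletions, since $\mc{T}_K \neq \emptyset$.)

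\emph{The $2$-anchor pieces.} Take $v_i \in \mc{T}_K$. By Lemma~\ref{lemma06}, Statement~3, both $C_i$ and $C'_i$ are $4$-cycles. Open $C_i$ into a $4$-path ending at $\nu(C_i)$ by deleting one edge of $C_i$ incident to $\nu(C_i)$, and open $C'_i$ in the same way at $\nu(C'_i)$. Joining these two paths through $\varepsilon(C_i)$, $v_i$ and $\varepsilon(C'_i)$ gives a $9$-path. Since $k \ge 9$, this is a valid $k^-$-path, and it has $8$ edges. The piece $K[i]$ contains exactly $8$ edges of $\mc{F}$, namely those of $C_i$ and $C'_i$. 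If we also charge to $v_i$ the at most two edges of $D$ incident to it, then $v_i$ accounts for at most $10$ edges of $F_K$. We obtain $8 \ge \frac45 \cdot 10$ edges, so the pieces $K[i]$ together cover their own $\mc{F}$-edges plus all of $D$ at ratio $\frac45$. Every edge of $D$ is incident to some $2$-anchor, so each is charged at least once; charging it twice only weakens the inequality we need, which is harmless.

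\emph{The remaining pieces.} Each other component $H = K[i,j]$ has a path center element and only $0$- and $1$-anchors, with a $4$-cycle hanging off each $1$-anchor. For these I would rerun the argument of Lemma~\ref{lemma08}, Cases~2 and~3, verbatim on $H$ (Case~1 cannot occur, because the center is a path). That is, cut the center path just before each $1$-anchor. Each resulting sub-piece that contains a $1$-anchor opens into a $5^+$-path, and Lemma~\ref{lemma03}, Statement~1, gives it at least $\frac45|F|+\frac45$ edges. A leading sub-piece without a $1$-anchor is a plain path; it has $\frac45$ of its edges by Lemma~\ref{lemma03}, or all of them if it is a $k^-$-path. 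The same telescoping computation then yields at least $\frac45|F_H|$ edges. If $H$ has no $1$-anchor at all, $H$ is just a subpath of $K_c$, and the bound follows directly from Lemma~\ref{lemma03}. Summing over all components,
\[
\text{(edges obtained)} \;\ge\; \tfrac45\bigl(|F_{K'}| + |D|\bigr) \;=\; \tfrac45 |F_K|.
\]

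The main obstacle, I expect, is making sure the Lemma~\ref{lemma08} argument really transfers to the sub-pieces $K[i,j]$. That lemma is stated for whole components, so I would observe that its proof uses only two facts: the center element is a path (or cycle), and each satellite is a $4$-cycle attached by a single $\mc{W}$-edge to a $1$-anchor. Both facts hold for the pieces $K[i,j]$. A second point to check is the tightness of the $2$-anchor accounting, which balances only because $k \ge 9$ and the satellites are exactly $4$-cycles. This is precisely where Statement~3 of Lemma~\ref{lemma06} is needed.
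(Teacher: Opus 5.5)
Your proposal is correct and matches the paper's proof. Both cut $K_c$ at every $2$-anchor, turn each $K[i]$ into a $9$-path with $8$ edges, and handle the remaining pieces by the constructions of Lemmas~\ref{lemma07} and~\ref{lemma08}. Your per-anchor charging ($8 \ge \frac45\cdot 10$) is simply a restatement of the paper's bound $\frac45|F_{K'}| + \frac85|\mc{T}_K| \ge \frac45|F_K|$.
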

\begin{proof}
To construct a $k$-pp of $K$, we first construct a graph $K'$ from $K$ by first 
removing the edges of $K_c$ incident to $v_i$ for every $v_i \in \mc{T}_K$. 
Let $\Gamma$ be the set of connected components of $K'$, and
$\Gamma_1$ be the set of connected components of $K'$ containing at least one vertex of $\mc{T}_K$.
Since we remove at most $2|\mc{T}_K|$ edges from $K$ to construct $K'$, we have $|F_{K'}| \ge |F_K| - 2|\mc{T}_K|$.

By the construction of $K'$, each $H \in \Gamma_1$ is $K[i]$ for some $v_i \in \mc{T}_K$, and hence $|F_H| = 8$. 
For an illustration, the subgraph $K[1]$ in the left graph in Figure~\ref{fig02} shows an example of $H$.
Moreover, $|\Gamma_1| = |\mc{T}_K|$.
Clearly, we can easily construct a $9$-path from each $H\in \Gamma_1$.
In other words, each $H\in \Gamma_1$ has a $k$-pp with $8 = \frac 45 |F_H| + \frac 85$ edges.
On the other hand, each $H \in \Gamma \setminus \Gamma_1$ is $K[i, j]$ for some $1 \le i \le j \le \ell$
such that $(H)_m$ is either a single node, or $(H)_m$ is not a single node but $\mc{T}_H = \emptyset$.
Although $H$ may not be a connected component of $\mc{F}+\mc{W}$, 
we can construct a $k$-pp for $H$ with at least $\frac 45 |F_H|$ edges using the same construction 
in Lemmas~\ref{lemma07} and~\ref{lemma08}.
By combining these $k$-pps of all $H\in\Gamma$, we obtain a $k$-pp of $K'$ with 
$\sum_{H \in \Gamma_1} \left(\frac 45 |F_H| + \frac 85\right) + \sum_{H \in \Gamma \setminus \Gamma_1} \frac 45 |F_H|$ edges.
Recall that $|F_{K'}| \ge |F_K| - 2|\mc{T}_K|$ and $|\Gamma_1| = |\mc{T}_K|$.
We have
\[
\sum_{H \in \Gamma_1}\left(\frac 45 |F_H| + \frac 85\right) + \sum_{H \in \Gamma \setminus \Gamma_1} \frac 45 |F_H|
= \frac 45 |F_{K'}| + \frac 85 |\Gamma_1| = \frac 45 |F_{K'}| + \frac 85 |\mc{T}_K| \ge \frac 45 |F_{K}|.
\]

The lemma is proved.
\end{proof}

\input{fig03}

\begin{theorem}
\label{thm01}
For each $k \in \{ 9, 10 \}$, there is an $O(n^3m^2)$-time $\frac 45$-approximation algorithm for $k$PPE.
\end{theorem}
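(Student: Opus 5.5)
The plan is to run {\sc Approx1} and then build a $k$-pp of $G$ component by component from $\mc{F}+\mc{W}$. Then compare the total number of edges with the upper bound on $|E(\mc{Q})|$ from Lemma~\ref{lemma05}. Since $k\in\{9,10\}$, we have $\eta=0$, and Lemma~\ref{lemma05} gives $|E(\mc{Q})|\le |E(\mc{F})|-|\mc{I}_4|$.

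First I partition the components of $\mc{F}+\mc{W}$ into two kinds.
\begin{itemize}
\item For each critical component $K\in\mc{I}_4$, which is an isolated $4$-cycle of $\mc{F}$, I delete one edge to obtain a $4$-path with $3=|F_K|-1$ edges.
\item For each non-critical component $K$, exactly one of the three cases listed before Lemma~\ref{lemma07} holds. Applying Lemma~\ref{lemma07}, \ref{lemma08} or~\ref{lemma09} accordingly yields a $k$-pp of $K$ with at least $\frac45|F_K|$ edges.
\end{itemize}
Components are vertex-disjoint and together cover $V$, and every edge of $\mc{F}$ lies in exactly one component. So the union of these $k$-pps is a $k$-pp $\mc{P}$ of $G$ with
\[
|E(\mc{P})|\ \ge\ \frac45\Bigl(|E(\mc{F})|-4|\mc{I}_4|\Bigr)+3|\mc{I}_4| .
\]

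The key inequality needed is $|E(\mc{P})|\ge \frac45(|E(\mc{F})|-|\mc{I}_4|)$. The right-hand side above equals $\frac45|E(\mc{F})|-\frac{16}{5}|\mc{I}_4|+3|\mc{I}_4| = \frac45|E(\mc{F})|-\frac15|\mc{I}_4|$. This is at least $\frac45|E(\mc{F})|-\frac45|\mc{I}_4|$. Combining with Lemma~\ref{lemma05} gives $|E(\mc{P})|\ge\frac45|E(\mc{Q})|$, which is the claimed ratio. The point is that each critical $4$-cycle loses only one edge, while the optimum also loses at least one edge per such cycle by Lemma~\ref{lemma05}. The comparison therefore has slack.

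For the running time, Step~1 (computing $\mc{F}$) takes $O(n^3m^2)$ by~\cite{Har24}. Step~2 takes $O(mn\log n)$ by Lemma~\ref{lemma04}. Step~3 (stingification) tries at most $n$ edge removals. Each trial only requires recounting saturated $4$-cycles, which costs $O(n)$ time, so this step is polynomial and dominated by Step~1. The per-component constructions in Lemmas~\ref{lemma07}--\ref{lemma09} are explicit cutting procedures that run in linear time.

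The main obstacle I expect is not the arithmetic but checking that the lemmas genuinely apply to every non-critical component. In particular, the application of Lemmas~\ref{lemma07} and~\ref{lemma08} to the subgraphs $K[i,j]$ inside Lemma~\ref{lemma09} must be valid, and the case split before Lemma~\ref{lemma07} must be exhaustive. Exhaustiveness follows from Lemma~\ref{lemma06}: if $(K)_m$ is a single node and $K\notin\mc{I}_4$, then $K$ is not an unsaturated $4$-cycle. Otherwise $\mc{T}_K$ is either empty or nonempty. With exhaustiveness established, the sum above and the comparison with Lemma~\ref{lemma05} complete the argument.
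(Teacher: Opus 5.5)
Your proposal is correct and follows the paper's proof essentially step for step. You use the same per-component construction (a $4$-path for each cycle in $\mc{I}_4$, Lemmas~\ref{lemma07}--\ref{lemma09} otherwise) and the same bound $\frac45(|E(\mc{F})|-4|\mc{I}_4|)+3|\mc{I}_4|$ combined with Lemma~\ref{lemma05} at $\eta=0$. You also make the same observation that computing $\mc{F}$ in $O(n^3m^2)$ time dominates the running time.
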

\begin{proof}
The algorithm proceeds as follows. 
First, it calls {\sc Approx1} to obtain the graph $\mc{F}+\mc{W}$.
Let $K$ be a connected component of $\mc{F}+\mc{W}$.
If $K \in \mc{I}_4$, the algorithm outputs a $4$-path of $K$. 
Otherwise, it proceeds as in the proofs of Lemmas~\ref{lemma07}, \ref{lemma08}, and~\ref{lemma09} 
to output a $k$-pp of $K$ with at least $\frac 45 |F_K|$ edges. 
The outputted paths (over all connected components $K$) together form a $k$-pp of $G$.

We next analyze the approximation ratio.
The number of edges in $\mc{F}$ but not in the 4-cycles in $\mc{I}_4$ is exactly $|E(\mc{F})|-4|\mc{I}_4|$.
Recall that $\eta = 0$.
By Lemmas~\ref{lemma05}, \ref{lemma07}, \ref{lemma08}, and~\ref{lemma09}, 
the number of edges in the final combined $k$-pp is at least 
\[
\frac 45 (|E(\mc{F})|-4|\mc{I}_4|) + 3|\mc{I}_4| = \frac 45 |E(\mc{F})| - \frac 15 |\mc{I}_4|
\ge \frac 45 |E(\mc{Q})| + \frac 35 |\mc{I}_4| \ge \frac 45 |E(\mc{Q})|.
\]

The time complexity of the algorithm is dominated by the computation of $\mc{F}$, 
because it requires $O(n^3 m^2)$ time but the other steps clearly need less time.
The proof is completed.
\end{proof}

\section{A $\frac {9-\sqrt{11}}7$-Approximation Algorithm for $k$PPE when $k \ge 11$}
\label{11PP}

In this section, we present an improved $\frac {9-\sqrt{11}}7$-approximation algorithm for $k$PPE when $k \ge 11$.
Let $r = \frac {9-\sqrt{11}}7 \approx 0.8119$ be the smaller root of the equation $7x^2-18x+10=0$.
Similarly to Section~\ref{9PP}, for a subgraph $H$ of $\mc{F}+\mc{W}$, 
we define the notations $F_H = E(H) \cap E(\mc{F})$, $\mc{T}_H = \mc{T} \cap V(H)$ and $\mc{O}_H = \mc{O} \cap V(H)$.
The next subsection discusses the case of non-critical connected components $K$ of $\mc{F}+\mc{W}$ 
such that $K_c$ has no $2$-anchors.

\subsection{Non-critical connected components of $\mc{F}+\mc{W}$ with no 2-anchors}

Throughout this subsection, $K$ means a non-critical connected component of $\mc{F}+\mc{W}$ such that $\mc{T}_K = \emptyset$.

\begin{lemma}
\label{lemma10}
Suppose that no satellite element of $K$ is a $4$-cycle. 
Then, we can construct a $k$-pp of $K$ with
\begin{itemize}
\item[1.] at least $r |F_K| + (5-6r) |\mc{O}_K|$ edges, or

\item[2.] at least $r |F_K| + r + (5-6r) |\mc{O}_K|$ edges if $K_c$ is a path and $v_1$ is a $1$-anchor.
\end{itemize}
\end{lemma}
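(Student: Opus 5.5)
Mirror the proof of Lemma~\ref{lemma08}. Build $K'$ from $K$ by deleting every edge $\{v_{i-1},v_i\}$ of $K_c$ with $v_i\in\mc{O}_K$, and let $\Gamma$ be the set of connected components of $K'$. Since $\mc{T}_K=\emptyset$, every satellite element is attached to $K_c$ through exactly one $1$-anchor. Because no satellite element is a $4$-cycle, Lemma~\ref{lemma06} says each one is a $5$-cycle. Every component $H\in\Gamma$ that contains a $1$-anchor is therefore $K[i,j]$ for some $v_i\in\mc{O}_K$, with $|F_H| = (j-i)+5$. Delete one edge of $C_i$ incident to $\nu(C_i)$; this turns $H$ into a single path of order $(j-i+1)+5 = |F_H|+1 \ge 6$. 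By Statement~2 of Lemma~\ref{lemma03}, with $k\ge 11$, this path has a $k$-pp with at least $\frac56(|F_H|+1)$ edges.

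The one nontrivial step is comparing $\frac56(|F_H|+1)$ with the target per-component gain. Each such $H$ should contribute $r|F_H| + (5-6r)$; because $r<\frac56$, this loses nothing. Write $|F_H| = 5+t$ with $t=j-i\ge 0$. Then
\[
\tfrac56(|F_H|+1) - r|F_H| - (5-6r) = \tfrac56(6+t) - r(5+t) - 5 + 6r = r + \left(\tfrac56 - r\right)t \ge r > 0,
\]
so each such $H$ in fact gives $r|F_H| + (5-6r) + r$. If this slack is not needed, I would use the cruder bound $|F_H|\ge 5$. I would also double-check the short cases, where the path has order at most $k$ and keeps all $|F_H|$ edges, which is even better. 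A component containing no $1$-anchor is a subpath of $K_c$; by Lemma~\ref{lemma03}, or because it is already a $k^-$-path, it has a $k$-pp with at least $\frac56|F_H|\ge r|F_H|$ edges.

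Summing follows the three cases of Lemma~\ref{lemma08}. If $K_c$ is a cycle, then $|F_{K'}| = |F_K| - |\mc{O}_K|$ and $|\Gamma| = |\mc{O}_K|$. The total is at least $r|F_{K'}| + (5-6r+r)|\mc{O}_K| = r|F_K| + (5-6r)|\mc{O}_K|$, because the extra $r$ per component exactly pays for the one deleted $\mc{F}$-edge per $1$-anchor, each of which costs $r$. If $K_c$ is a path and $v_1\notin\mc{O}_K$, the same accounting applies; the component containing $v_1$ contributes $r|F_{H_1}|$. If $K_c$ is a path and $v_1\in\mc{O}_K$, only $|\mc{O}_K|-1$ edges are deleted, so $|F_{K'}| = |F_K| - |\mc{O}_K| + 1$, and we gain an additional $r$. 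This yields Statement~2, and Statement~1 follows in every case.

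The main obstacle is the per-component inequality above. It is precisely where the extra slack $r$ is needed so that the deleted edge $\{v_{i-1},v_i\}$ does not spoil the bound. One must also check that the case of no anchors at all (a single node that is not a short cycle, which is non-critical) is covered by Lemma~\ref{lemma03}. Here $|\mc{O}_K|=0$ and the bound is $r|F_K|$. The non-critical hypothesis excludes an isolated $5$-cycle; any other single node is a cycle of order at least $6$ or a path, so its $k$-pp has at least $\frac56|F_K|\ge r|F_K|$ edges.
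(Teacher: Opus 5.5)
Your proposal is correct and follows essentially the same route as the paper. You cut $K$ at the edge $\{v_{i-1},v_i\}$ preceding each $1$-anchor, open each attached $5$-cycle at $\nu(C_i)$ to get a path of order $|F_H|+1\ge 6$, apply Statement~2 of Lemma~\ref{lemma03} to gain $r|F_H|+5-5r$ per component (your slack $r+(\frac56-r)t$ is the paper's inequality), and sum over the same three cases, with the extra $r$ in Statement~2 coming from one fewer deleted edge when $v_1\in\mc{O}_K$.
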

\begin{proof}
We first assume $K$ has no satellite elements, i.e., $(K)_m$ is a single node.
It suffices to show that $K$ has a $k$-pp with at least $r |F_K|$ edges.
Since $K$ is non-critical, $K$ is a path or $6^+$-cycle and thus $F_K = E(K)$.
If $K$ has at least $6$ vertices, then the lemma follows from Statement~2 in Lemma~\ref{lemma03} and 
the inequality $r < \frac 56$.
Otherwise, $K$ is a $5^-$-path which alone forms a $k$-pp of $K$. 
So, in any case, we are done.

We next assume $(K)_m$ is not a single node. 
Let $K'$ be the graph obtained from $K$ by 
removing all edges $\{ v_{i-1}, v_i \}$ of $K$ such that $1\le i \le \ell$ and $v_i \in \mc{O}_K$. 
Let $\Gamma$ be the set of connected components of $K'$. 
We finish the proof by distinguishing three cases as follows.

\medskip
{\em Case~1.} $K_c$ is a cycle.
In this case, the edges $\{ v_{i-1}, v_i \}$ exist for every $v_i \in \mc{O}_K$. Thus, $|F_{K'}| = |F_K| - |\mc{O}_K|$ and $|\Gamma| = |\mc{O}_K|$.
Obviously, each $H \in \Gamma$ is $K[i, j]$ for some $1\le i\le j\le\ell$ and $v_i$ is the unique element in $\mc{O}_K$.
Since no satellite element of $K$ is a $4$-cycle, $C_i$ is a $5$-cycle.
It follows that $|F_H| = j-i+5$ and 
we can turn $H$ into a $(j-i+6)$-path by removing one edge of $C_i$ incident to $\nu(C_i)$.
By Statement~2 in Lemma~\ref{lemma03} and the inequality $j-i+6\ge 6$, $H$ has a $k$-pp with \[
\frac 56 (j-i+6) = r(j-i+5)+r+\left(\frac 56-r\right)(j-i+6) \ge r |F_H| + 5-5r.
\]
edges.
Therefore, we can combine these $k$-pps of all $H \in \Gamma$ to form a $k$-pp of $K'$ 
with $\sum_{H \in \Gamma} (r |F_H| + 5-5r)$ edges.
Now, since $|F_{K'}| = |F_K| - |\mc{O}_K|$, $|\Gamma| = |\mc{O}_K|$, and $|F_{K'}| = \sum_{H \in \Gamma} |F_H|$, 
we have
\[
\sum_{H \in \Gamma} (r |F_H| + 5-5r) = r |F_{K'}| + (5-5r) |\Gamma| 
= r |F_{K'}| + (5-5r) |\mc{O}_K| = r |F_K| + (5-6r) |\mc{O}_K|.
\]
The proof is done in this case.

\medskip
{\em Case~2.} $K_c$ is a path and $v_1$ is not a $1$-anchor.
As in Case~1, $\{ v_{i-1}, v_i \}$ is an edge of $K_c$ for every $v_i \in \mc{O}_K$.
Therefore, $|F_{K'}| = |F_K| - |\mc{O}_K|$ and $|\Gamma| = |\mc{O}_K|+1$.
Let $H_1 \in \Gamma$ be the connected component of $K'$ containing $v_1$. 
Clearly, $H_1$ is a path, and hence we can show that $H_1$ has a $k$-pp with $r |F_{H_1}|$ edges 
as in the first paragraph of the proof. 
Obviously, each $H \in \Gamma$ other than $H_1$ is $K[i, j]$ for some $1\le i\le j\le\ell$ and 
contains a unique $v_i \in \mc{O}_K$.
As in Case~1, we can show that $H$ has a $k$-pp with $r |F_H|+5-5r$ edges.
By combining these $k$-pps of all $H \in \Gamma$, we obtain a $k$-pp of $K$
with $r |F_{H_1}| + \sum_{H \in \Gamma \setminus \{ H_1 \}} (r |F_H| + 5-5r)$ edges.
Since $|F_{K'}| = |F_K| - |\mc{O}_K|$, $|\Gamma| = |\mc{O}_K|+1$, and $|F_{K'}| = \sum_{H \in \Gamma} |F_H|$, we have 
\begin{eqnarray*}
&   & r |F_{H_1}| + \sum_{H \in \Gamma \setminus \{ H_1 \}} (r |F_H| + 5-5r) \\
& = & r |F_{K'}| + (5-5r)(|\Gamma|-1) \\
& = & r |F_{K'}| + (5-5r)|\mc{O}_K| = r |F_K| + (5-6r)|\mc{O}_K|.
\end{eqnarray*}
The proof is finished in this case.

\medskip
{\em Case~3.} $K_c$ is a path and $v_1$ is a $1$-anchor. 
In this case, $\{ v_{i-1}, v_i \}$ is not an edge of $K_c$ when $i = 1$. 
Thus, $|F_{K'}| = |F_K| - (|\mc{O}_K|-1)$ and $|\Gamma| = |\mc{O}_K|$. 
Moreover, each $H \in \Gamma$ is $K[i,j]$ for some $1\le i\le j\le \ell$ and $v_i$ is the unique $1$-anchor in $\mc{O}_K$.
Similarly to Case~1, we can observe that $H$ contains a $k$-pp with $r |F_H| + 5-5r$ edges.
It follows that $K$ has a $k$-pp with $\sum_{H \in \Gamma} (r |F_H| + 5-5r)$ edges.
Since $|F_{K'}| = |F_K| - (|\mc{O}_K|-1)$, $|\Gamma| = |\mc{O}_K|$, and $|F_{K'}| = \sum_{H \in \Gamma} |F_H|$, we now have
\[
\sum_{H \in \Gamma} (r |F_H| + 5-5r) = r |F_{K'}| + (5-5r) |\Gamma|  = r |F_{K'}| + (5-5r)|\mc{O}_K| = r |F_K| + r + (5-6r)|\mc{O}_K|.
\]
The proof is completed in this case.
\end{proof}

Recall that a satellite element of $K$ is a $4$- or $5$-cycle by Statements~1 and~2 in Lemma~\ref{lemma06}.
Further recall that $K_c$ is an $\ell$-path or $\ell$-cycle.
The next definition introduces four {\em special} structures of $K$. 

\begin{definition}
\label{def05}
$K$ is {\em special} if $v_1$ is a $1$-anchor, $C_1$ is a $4$-cycle, 
and $K_c$ is an $\ell$-path satisfying one of Conditions~S1 through~S4 below (see Figure~\ref{fig04}):

\begin{itemize}
\item[S1.] $\ell=2$, $v_2$ is a $1$-anchor, and $C_2$ is a $5$-cycle.

\item[S2.] $\ell=3$, $v_2$ is a $0$-anchor, $v_3$ is a $1$-anchor, and $C_3$ is a $4$-cycle.

\item[S3.] $\ell=4$, $v_3$ is a $0$-anchor, both $v_2$ and $v_4$ are $1$-anchors, 
$C_2$ is a $4$-cycle, and $C_4$ is a $5$-cycle.

\item[S4.] There exists an integer $i \in \{1,2, \ldots, \ell\}$ such that 
$v_1 \ldots v_i$ are $1$-anchors, $v_{i+1}, \ldots, v_\ell$ are $0$-anchors, and $C_1, \ldots C_i$ are $4$-cycles.
\end{itemize}

For $j \in \{ 1, 2, 3, 4 \}$, we say that $K$ is {\em of type-$j$} if it is special and Condition~$Sj$ holds.
\end{definition}

\input{fig04}

\begin{lemma}
\label{lemma11}
We can construct a $k$-pp of $K$ with at least $r |F_K| + 2r + (5-6r) |\mc{O}_K|$ edges
if $K$ satisfies one of the following conditions:
\begin{itemize}
\item[1.] $K$ is of type-$1$, $2$, or $3$. 
(See the top row in Figure~\ref{fig04}.)

\item[2.] $K$ is of type-$4$ and $K_c$ is a $4$-path all of whose vertices are $1$-anchors.
(See the rightmost graph in the bottom row in Figure~\ref{fig04}.)
\end{itemize}
\end{lemma}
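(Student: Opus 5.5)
The plan is to use the fact that each structure in the statement is a small, completely explicit graph, so no reduction to earlier lemmas is needed. For each case we compute $|F_K|$ and $|\mc{O}_K|$ directly and evaluate the target $r|F_K| + 2r + (5-6r)|\mc{O}_K|$. We then exhibit by hand a $k$-pp with that many edges, using only $k \ge 11$ and $r = \frac{9-\sqrt{11}}7 \approx 0.8119$.

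All constructions use one gadget. If $C$ is a satellite element with $\varepsilon(C) = \{v_i, \nu(C)\}$, we delete one edge of $C$ incident to $\nu(C)$. This turns $C$ into a Hamiltonian path of $C$ ending at $\nu(C)$, which we then extend through $\varepsilon(C)$ to the anchor $v_i$. Two such gadgets hung on consecutive anchors $v_i, v_{i+1}$, joined by the edge $\{v_i, v_{i+1}\}$ of $K_c$, form a single path on $|V(C)| + 2 + |V(C')|$ vertices. Each of its edges is either an edge of $\mc{F}$ or one of the two $\mc{W}$-edges, and the latter also count toward the $k$-pp.

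For type-1 we have $|F_K| = 1+4+5 = 10$ and $|\mc{O}_K| = 2$, so the target is exactly $12r + 10 - 12r = 10$. The gadget path $C_1, v_1, v_2, C_2$ spans all $11 \le k$ vertices and has $10$ edges.

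For type-2 we have $|F_K| = 2+4+4 = 10$ and $|\mc{O}_K| = 2$, so the target is again $10$. The path $C_1, v_1, v_2, v_3, C_3$ spans all $11 \le k$ vertices.

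For type-3 we have $|F_K| = 3+4+4+5 = 16$ and $|\mc{O}_K| = 3$, so the target is $18r + 15 - 18r = 15$. Here $17$ vertices are too many for one path, so we cut the edge $\{v_2, v_3\}$. We take the path $C_1, v_1, v_2, C_2$, which has $10$ vertices and $9$ edges, and the path $v_3, v_4, C_4$, which has $7$ vertices and $6$ edges. Both have order at most $k$ and together they give $15$ edges.

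For the type-4 case in Statement~2, all four $C_i$ are $4$-cycles, so $|F_K| = 3 + 16 = 19$ and $|\mc{O}_K| = 4$. The target is $21r + 20 - 24r = 20 - 3r$. We cut $\{v_2, v_3\}$ and take $C_1, v_1, v_2, C_2$ and $C_3, v_3, v_4, C_4$. These are two $10$-paths with $9$ edges each, so $18$ edges in total. The bound then reduces to $18 \ge 20 - 3r$, that is, $r \ge \frac 23$, which holds.

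The only step needing care is choosing where to cut $K_c$ in types 3 and 4 so that each piece has at most $k$ vertices while at most one edge is lost. In types 1 and 2 the bound is tight, because the $2r$ and $-6r|\mc{O}_K|$ terms cancel exactly. So in those cases one must check that the whole component really fits in a single path of order $11 \le k$. This is exactly where the hypothesis $k \ge 11$ is used.
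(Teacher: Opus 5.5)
Your proof is correct and takes the same route as the paper: build the explicit path(s) and count edges. The paper writes out only the type-1 case (the $11$-path with $10$ edges, meeting the target exactly) and calls the others similar, while your counts for type~2, type~3 and the all-$4$-cycle type-4 case ($10$, $15$, and $18 \ge 20-3r$) fill in those cases correctly. One small quibble: in type-3 a single path is impossible because the three pendant cycles $C_1$, $C_2$, $C_4$ cannot all be ends of one path, not because of the vertex count, which is fine once $k \ge 17$; since your two-path split works for every $k \ge 11$, this does not affect the argument.
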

\begin{proof}
We here only consider the case where $K$ is of type-1; the other cases are similar.
Suppose $K$ is of type-$1$. Then, $|F_K| = 10$ and $|\mc{O}_K| = 2$.
Clearly, we can turn $K$ into a $11$-path by removing one edge incident to $\nu(C_i)$ in $C_i$ for each $i \in \{ 1, 2 \}$. 
The path alone forms a $k$-pp of $K$ because $k \ge 11$. 
The number of edges in the path is $10 = r |F_K| + 2r + (5-6r) |\mc{O}_K|$.
\end{proof}

\begin{lemma}
\label{lemma12}
Suppose that $K$ is of type-$4$. 
Then, we can construct a $k$-pp of $K$ with at least $r |F_K| + r + (5-6r) |\mc{O}_K|$ edges
if one of the following conditions holds.
\begin{itemize}
\item[1.] $K_c$ is a $2^+$-path and $v_1$ is the unique $1$-anchor in $K_c$.
(See the leftmost graph in the bottom row in Figure~\ref{fig04}.)

\item[2.] $K_c$ is not a $3$-path and the 1-anchors in $K_c$ are $v_1$ and $v_2$ only.
(See the middle graph in the bottom row in Figure~\ref{fig04}.)
\end{itemize}
\end{lemma}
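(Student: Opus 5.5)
The plan is to build an explicit $k$-pp in each case and compare its edge count against the target $r|F_K| + r + (5-6r)|\mc{O}_K|$. Since $K$ is of type-4 and $K_c$ is an $\ell$-path, we have $|F_K| = (\ell-1) + 4|\mc{O}_K|$, because every satellite element is a $4$-cycle $C_1,\ldots,C_i$ with $i=|\mc{O}_K|$. The target then simplifies to
\[
r(\ell-1) + r + 4r\,|\mc{O}_K| + (5-6r)|\mc{O}_K| = r\ell + (5-2r)|\mc{O}_K|.
\]

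\emph{Condition~1} ($|\mc{O}_K|=1$, $\ell\ge 2$). The target is $r\ell + 5 - 2r$. Remove one edge of $C_1$ incident to $\nu(C_1)$. This turns $K$ into a single $(\ell+4)$-path that starts with the four vertices of $C_1$ and continues along $v_1,\ldots,v_\ell$.
\begin{itemize}
\item If $\ell+4\le k$, this path has $\ell+3$ edges. Since $r<1$, $\ell+3 \ge r\ell+5-2r$ is equivalent to $(1-r)\ell \ge 2-2r$, i.e.\ $\ell\ge 2$, which holds.
\item Otherwise $\ell+4\ge k+1\ge 12$. Statement~2 of Lemma~\ref{lemma03} gives $\frac56(\ell+4)$ edges, and we need $\frac56(\ell+4)\ge r\ell+5-2r$. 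This holds because $\frac56>r$ and $\ell\ge 8$: at $\ell=8$ the left side is $10$ and the right side is $6r+5\approx 9.87$, and the left side grows faster.
\end{itemize}

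\emph{Condition~2} ($\mc{O}_K=\{v_1,v_2\}$, $\ell\ne 3$). The target is $r\ell+10-4r$. Remove from each $C_j$ ($j=1,2$) an edge incident to $\nu(C_j)$, turning $C_1\cup C_2$ into two $4$-paths hanging at $v_1$ and $v_2$. The cases are:
\begin{itemize}
\item $\ell=2$: $K$ becomes a single $10$-path with $9$ edges. Since $9\ge 2r+10-4r=10-2r\approx 8.38$, this suffices.
\item $\ell\ge 4$: form the path $C_1\,v_1\,v_2\,C_2$ by joining through the $W$-edges. This is only possible if one of $C_1,C_2$ is attached at a path endpoint, which fails here, since $v_2$ is interior. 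So instead cut the edge $\{v_2,v_3\}$. This yields one $10$-path ($C_1$--$v_1$--$v_2$--$C_2$) with $9$ edges, plus the path $v_3\ldots v_\ell$ of order $\ell-2\ge 2$. That remainder has $\ell-3$ edges if $\ell-2\le k$, or at least $\frac56(\ell-2)$ edges by Lemma~\ref{lemma03} otherwise.
\item Verification for $\ell\ge4$: we need $9+(\ell-3)\ge r\ell+10-4r$, i.e.\ $(1-r)\ell\ge 4-4r$, i.e.\ $\ell\ge 4$, which is exactly the hypothesis. In the long case, $9+\frac56(\ell-2)\ge r\ell+10-4r$ follows as before since $\ell$ is large.
\item $\ell=3$ is excluded because there the remainder would be a $1$-path and we would lose one edge, which is precisely why the hypothesis rules it out.
\end{itemize}

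The main obstacle is the bookkeeping for long center paths, where Lemma~\ref{lemma03} must be applied to the leftover part. There I would use $\frac56\ge r$ together with the lower bound $\ell\ge k-3\ge 8$ forced by the length requirement to absorb the additive constants. Some boundary cases (e.g.\ $\ell-2$ just above $k$) may need a direct check that $\ell-\lceil(\ell-2)/k\rceil-2$ edges suffice.
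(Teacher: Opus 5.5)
Your proposal is correct and follows essentially the paper's proof. In Condition~1 you unfold $K$ into an $(\ell+4)$-path and apply Lemma~\ref{lemma03}; in Condition~2 you split $K$ into the $10$-path of $K[1,2]$ and the path $K[3,\ell]$, with the same short/long case analysis. Your closing worry about boundary cases is unnecessary: Lemma~\ref{lemma03} covers every path of order at least $6$, and $9+\frac56(\ell-2)-(r\ell+10-4r)=(\frac56-r)(\ell-4)+\frac23>0$ for all $\ell\ge 4$.
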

\begin{proof}
Recall that $\ell = |V(K_c)|$. 
We first assume that Condition~1 holds. 
Then, $\ell \ge 2$, $|F_K| = \ell+3$, and $|\mc{O}_K| = 1$.
Therefore, $5-6r \le (\frac 56-r)(\ell+4)$ and it remains to show that $K$ has a $k$-pp $\mc{P}$ with at least $\frac 56 (\ell+4)$ edges. 
To obtain $\mc{P}$, we first turn $K$ into an $(\ell+4)$-path $P$ by removing one edge incident to $\nu(C_1)$ in $C_1$. 
Because $\ell+4 \ge 6$, Statement~2 in Lemma~\ref{lemma03} ensures that $\mc{P}$ can be constructed from $P$.

We next assume that Condition~2 holds. 
Then, $|F_K| = \ell+7$, $|\mc{O}_K| = 2$, and either $\ell = 2$ or $\ell \ge 4$.
If $\ell = 2$, then $|F_K| = 9$, $|\mc{O}_K| = 2$, and we can turn $K$ into a $10$-path which alone forms a $k$-pp of $K$ with $9$ edges; 
consequently, we are done because $r \ge \frac 12$.
So, we may assume $\ell \ge 4$.
Then, $10-12r \le (\frac 56-r)(\ell+8)$ and it remains to show that $K$ has a $k$-pp $\mc{P}$ with at least $\frac 56 (\ell+8)$ edges.
To obtain $\mc{P}$, first note that $K[1, 2]$ contians a $10$-path $P$ and $K[3, \ell]$ is an $(\ell-2)$-path.
If $\ell-2 \le k$, then $P$ and $K[3, \ell]$ together form a $k$-pp of $K$ with $\ell+6\ge \frac 56 (\ell+8)$ edges.
Otherwise, by Statement~2 in Lemma~\ref{lemma03}, $K[3, \ell]$ has a $k$-pp with $\frac 56 (\ell-2)$ edges; 
this $k$-pp and $P$ together form a $k$-pp of $K$ with at least 
$\frac 56 (\ell-2)+9 \ge \frac 56 (\ell+8)$ edges.
This finishes the proof.
\end{proof}

We are now ready to prove two main lemmas in this subsection.

\begin{lemma}
\label{lemma13}
Suppose that $\mc{T}_K = \emptyset$ and $K_c$ is a path.
Then, we can construct a $k$-pp of $K$ with at least $r |F_K| + (5-6r) |\mc{O}_K|$ edges.
\end{lemma}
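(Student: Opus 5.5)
The plan is to prove a slightly stronger statement by induction on $|\mc{O}_K|$ (and, secondarily, on $\ell$). The stronger statement is that $K$ has a $k$-pp with at least $r|F_K| + (5-6r)|\mc{O}_K| + r\cdot[v_1\in\mc{O}_K]$ edges, except when $K$ is one of the special structures of Definition~\ref{def05}. For those, Lemmas~\ref{lemma11} and~\ref{lemma12} already give the needed bound, and a type-4 component not covered by them is handled by direct inspection. The extra $r$ when $v_1$ is a $1$-anchor mirrors Statement~2 of Lemma~\ref{lemma10}. Its role is to pay for the edge of $K_c$ we cut when we split $K$ into a prefix and a suffix.

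The base cases are as follows. If $\mc{O}_K=\emptyset$, then $K=K_c$ is a path, and the first paragraph of the proof of Lemma~\ref{lemma10} applies. If no satellite is a $4$-cycle, Lemma~\ref{lemma10} applies directly. Otherwise some $C_i$ is a $4$-cycle, and we split at a cut edge. We choose an index $j$ and remove $\{v_j,v_{j+1}\}$, where $v_{j+1}$ is a $1$-anchor. This cut costs one $F$-edge, which costs $r$, and the suffix $K[j+1,\ell]$ is again a path-centred component whose first vertex is a $1$-anchor. By induction the suffix therefore earns the bonus $r$, which exactly compensates the cut. So it suffices to show that the prefix $K[1,j]$, viewed as a component of the same shape, meets its own target.

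If $v_1$ is not an anchor, we take $j$ just before the first $1$-anchor; the prefix is a bare path and gets $r|F|$ by Lemma~\ref{lemma03}. If $v_1$ is a $1$-anchor, we take $j$ just before the next $1$-anchor. Let $C=C_1$ and $d=j$, so the prefix turns into a path on $|V(C)|+d$ vertices, and we need $r|F_H|+(5-5r)$ edges, consistent with the $5-5r$ per-segment gain in Lemma~\ref{lemma10}. When $C$ is a $5$-cycle, or when $C$ is a $4$-cycle with $d\ge 2$, the path has at least $6$ vertices. Statement~2 of Lemma~\ref{lemma03} then gives $\frac56$ of the vertices, which suffices as computed in Lemma~\ref{lemma10}.

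The main obstacle is the bad segment: $C_1$ a $4$-cycle with $d=1$. This gives only a $5$-path with $4<5-r$ edges. Here I would not cut after $v_1$. Instead I would look ahead and let the prefix absorb the next one or two anchors, forming a longer path (for example, two $4$-cycle anchors give a $10$-path, and a $4$-cycle anchor followed by a $5$-cycle anchor gives an $11$-path, both using $k\ge11$). I would then cut before the following $1$-anchor. The possible look-ahead patterns are exactly the shapes S1--S4, which is presumably why those structures and Lemmas~\ref{lemma11}--\ref{lemma12} were set up. For each pattern, either the prefix is a whole special component and the lemma's bonus ($2r$ or $r$) covers the deficit, or the prefix is followed by a further anchor, so the inductive bonus of the suffix covers it. The step I expect to need the most checking is a long run of consecutive $4$-cycle $1$-anchors. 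There I would group them in pairs, each pair a $10$-path with $9$ edges, and verify that the accumulated slack $(5-6r)$ per anchor together with the inductive bonus $r$ per cut keeps every group above target. An odd leftover is absorbed into a neighbouring pair or $0$-anchor tail, using the $S4$ configurations and Lemma~\ref{lemma12}.
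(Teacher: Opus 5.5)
\textbf{There is a genuine gap in the inductive step.} You assert that after cutting $\{v_j,v_{j+1}\}$ with $v_{j+1}$ a $1$-anchor, the suffix $K[j+1,\ell]$ ``earns the bonus $r$ by induction.'' Your strengthened statement, however, exempts special components, and two type-4 shapes provably cannot reach the bonus target $r|F_H|+r+(5-6r)|\mc{O}_H|$.
The first is a lone $4$-cycle anchor: $|F_H|=4$, so it has at most $4<5-r$ edges.
The second is a $3$-path center $v_1v_2v_3$ with $C_1,C_2$ $4$-cycles and $v_3$ a $0$-anchor. It has $11$ vertices, but deleting $v_2$ leaves three nonempty pieces, so there is no Hamiltonian path and at most $9<10-r$ edges.
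So ``direct inspection'' cannot supply the bonus for these, and your cutting rules produce exactly such suffixes.

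Concretely, let $K_c=v_1v_2$ with $v_1$ a $0$-anchor and $C_2$ a $4$-cycle. Lemma~\ref{lemma13} demands at least $5-r>4$ edges, which is achieved by the $6$-path through $v_1,v_2$ and $C_2$. Your rule for a $0$-anchor $v_1$ cuts $\{v_1,v_2\}$ and returns only $0+4$.

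Similarly, take $K_c=v_1v_2$ with $C_1$ a $5$-cycle and $C_2$ a $4$-cycle. This $K$ is not special, so your strengthened bound $10-r$ applies, but your rule gives $5+4=9$. Prepending a $0$-anchor to this center gives a component for which Lemma~\ref{lemma13} itself requires $10-r$, while your procedure still yields $9$. Your look-ahead repairs a deficient prefix, but nothing repairs a deficient suffix.

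\textbf{The missing idea} is to make the exempt piece the \emph{last} one, since that is the only place a lone $4$-cycle anchor can be stranded, and to make every other piece pay for the cut on its right. The paper does this with a left-to-right greedy segmentation (Steps P1--P5):
a $1$-anchor with a $5$-cycle forms $K[i]$ alone, a $6$-path;
a $1$-anchor with a $4$-cycle is always grouped with the next center vertex, whatever it is, giving a $6$-path, a $10$-path, or a type-1 $11$-path;
a run of $0$-anchors is kept together with the first following $1$-anchor, so the attached cycle is unrolled at the end of a path with at least $6$ vertices.

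By Lemmas~\ref{lemma10}--\ref{lemma12}, each non-final segment has at least $r|F_H|+r+(5-6r)|\mc{O}_H|$ edges. These bonuses pay exactly for the $|\Gamma|-1$ removed edges of $K_c$. The final segment, containing $v_\ell$, needs only $r|F_H|+(5-6r)|\mc{O}_H|$, which a lone $4$-cycle anchor meets because $4\ge 5-2r$. This avoids both the strengthened hypothesis with its exemption list and the pairing argument for long runs of $4$-cycles.
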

\begin{proof}
If $K_c$ has no $1$-anchors, then $K$ has no satellite elements and hence we are done by Statement~1 of Lemma~\ref{lemma10}. 
So, we may assume that $K_c$ has at least one $1$-anchor. 
To obtain a $k$-pp of $K$, we construct a graph $K'$ from $K$ by initializing $i=1$ and then proceeding as follows:
\begin{itemize}
\item[P1.] If $i = \ell$ or none of $v_i, \ldots, v_\ell$ is a $1$-anchor, then set $j = \ell$ and stop.

\item[P2.] If $v_i$ is a $1$-anchor and $C_i$ is a $5$-cycle, then set $j = i$, 
remove the edge $\{ v_j, v_{j+1} \}$ if it is present, and further proceed to Step~P5.

\item[P3.] If $v_i$ is a $1$-anchor and $C_i$ is a $4$-cycle, then set $j = i+1$, 
remove the edge $\{ v_j, v_{j+1} \}$ if it is present, and further proceed to Step~P5.

\item[P4.] If $v_i$ is a $0$-anchor, then set $j$ to be the smallest index such that $v_j$ is a $1$-anchor,
remove the edge $\{ v_j, v_{j+1} \}$ if it is present.

\item[P5.] If $j = \ell$, then stop; otherwise, reset $i = j+1$ and go back to Step~P1.
\end{itemize}
We use the type-$3$ structure in the top rightmost of Figure~\ref{fig04} as an example.
If $K$ is of type-$3$, then $K'$ consists of $K[1, 2]$ and $K[3, 4]$.
Let $\Gamma$ be the set of connected components of $K'$. 
Clearly, 
$|F_{K'}| = |F_K|-(|\Gamma|-1)$.
Moreover, each $H \in \Gamma$ is $K[i, j]$ for some $1 \le i \le j \le \ell$ and $j$ is found in one of Steps~P1 through~P4. 
$H$ may not be a connected component of $\mc{F}+\mc{W}$. 
However, when $H$ has the same structure specified in an aforementioned lemma, 
we can still apply that lemma to $H$ and obtain the same result.

\medskip
{\em Case~1.} We found $j$ in Step~P2.
Then, $H = K[i]$ and $|\mc{O}_H| = 1$. 
Since $C_i$ is a $5$-cycle, Statement~2 in Lemma~\ref{lemma10} ensures that $H$ has a $k$-pp with at least $r |F_H|+r+(5-6r) |\mc{O}_H|$ edges.

\medskip
{\em Case~2.} We found $j$ in Step~P3. 
Then, $H = K[i, i+1]$. 
If $v_{i+1}$ is a $0$-anchor or $C_{i+1}$ is a $4$-cycle, then $H$ is of type-$4$.
Otherwise, $C_{i+1}$ is a $5$-cycle and hence $H$ is of type-$1$. 
In any case, Statements~1 and~2 in Lemma~\ref{lemma12} and Statement~1 in Lemma~\ref{lemma11} ensure that 
$H$ has a $k$-pp with at least $r |F_H|+r+(5-6r) |\mc{O}_H|$ edges.

\medskip
{\em Case~3.} We found $j$ in Step~P4. 
In this case, $\mc{O}_H = \{v_j\}$; in other words, $H$ is isomorphic to the leftmost graph in the bottom row of Figure~\ref{fig04}. 
Thus, by Statement~1 in Lemma~\ref{lemma12}, $H$ has a $k$-pp with at least $r |F_H|+r+(5-6r) |\mc{O}_H|$ edges.

\medskip
Let $H_\ell \in \Gamma$ be the connected component containing $v_\ell$.
For each $H = K[i, j]$ in $\Gamma$ other than $H_\ell$, $j$ is found in Step~P2, P3, or P4 and 
in turn the discussion in Case~1, 2, or~3 
shows that $H$ has a $k$-pp with at least $r |F_H|+r+(5-6r) |\mc{O}_H|$ edges.
We claim that $H_\ell$ has a $k$-pp with at least $r |F_{H_\ell}| + (5-6r)|\mc{O}_{H_\ell}|$ edges.
The claim is clearly true if the construction of $K'$ is finished in Step~P5. 
So, we may assume that the construction of $K'$ is finished in Step~P1. 
Then, either $H_\ell$ is a path and $|\mc{O}_{H_\ell}| = 0$, or $H_\ell = K[\ell]$ and $v_\ell$ is a $1$-anchor. 
In the former case, Statement~1 of Lemma~\ref{lemma10} ensures that
$H_\ell$ has a $k$-pp with at least $r |F_{H_\ell}| + (5-6r)|\mc{O}_{H_\ell}|$ edges.
In the latter case, let $t \in \{ 4, 5 \}$ be the order of $C_\ell$. 
Then, $|F_{H_\ell}| = t$, $|\mc{O}_{H_\ell}| = 1$, and 
$H_\ell$ contains a $(t+1)$-path which alone forms a $k$-pp of $H_\ell$ with $t$ edges.
By $r \ge \frac 12$, $t \ge r |F_{H_\ell}| + (5-6r)|\mc{O}_{H_\ell}|$ and hence 
$H_\ell$ has a $k$-pp with at least $r |F_{H_\ell}| + (5-6r)|\mc{O}_{H_\ell}|$ edges. 
This completes the proof of the claim.

Now, recall that $|F_{K'}| = |F_K|-(|\Gamma|-1)$, $|\mc{O}_{K'}| = |\mc{O}_K|$, 
$|F_{K'}| = \sum_{H \in \Gamma} |F_H|$, and $|\mc{O}_{K'}| = \sum_{H \in \Gamma} |\mc{O}_H|$.
By combining the $k$-pps of all $H \in \Gamma$ described in the last paragraph,
we obtain a $k$-pp of $K$ with at least 
\begin{eqnarray*}
&   & r |F_{H_\ell}| + (5-6r)|\mc{O}_{H_\ell}| + \sum_{H \in \Gamma \setminus \{ H_\ell \}} (r|F_H|+r+(5-6r)|\mc{O}_H|) \\
& = & r |F_{K'}| + (5-6r) |\mc{O}_{K'}| + r(|\Gamma|-1) \\
& = & r |F_K| + (5-6r)|\mc{O}_K|
\end{eqnarray*}
edges. This completes the proof.
\end{proof}

\begin{lemma}
\label{lemma14}
Suppose that $K$ is non-critical such that $\mc{T}_K = \emptyset$ and $K_c$ is a cycle.
Then, we construct a $k$-pp of $K$ with at least $r |F_K| + (5-6r) |\mc{O}_K|$ edges.
\end{lemma}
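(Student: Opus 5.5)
The plan is to reduce Lemma~\ref{lemma14} to the path case, handled by the procedure in the proof of Lemma~\ref{lemma13}, by deleting one edge of the cycle $K_c$. The deleted edge must be chosen so that the loss of one $\mc{F}$-edge is paid back.

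First, the trivial case. If $K_c$ has no $1$-anchors, then $(K)_m$ is a single node. Since $K$ is non-critical, $K$ is a $6^+$-cycle, and the first paragraph of the proof of Lemma~\ref{lemma10} together with $r<\frac 56$ finishes. So assume $\mc{O}_K\neq\emptyset$.

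The bookkeeping is as follows. Delete an edge $\{v_0,v_1\}$ of $K_c$, where the cycle is relabelled so that $v_1$ is a $1$-anchor. This yields a graph $K^\ast$ whose center element is a path starting at a $1$-anchor, with $|F_{K^\ast}|=|F_K|-1$ and $\mc{O}_{K^\ast}=\mc{O}_K$. Run Steps~P1--P5 on $K^\ast$ to obtain the pieces $\Gamma$. Exactly as in the proof of Lemma~\ref{lemma13}, every piece other than the last piece $H_\ell$ gets a $k$-pp with at least $r|F_H|+r+(5-6r)|\mc{O}_H|$ edges, by Cases~1--3 there, which use Lemmas~\ref{lemma10}, \ref{lemma11} and~\ref{lemma12}. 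Summing these with $|F_{K^\ast}|=\sum_H|F_H|+(|\Gamma|-1)$ gives
\[
r|F_K|-r+(5-6r)|\mc{O}_K|+\Delta,
\]
where $\Delta$ is the surplus of $H_\ell$ over $r|F_{H_\ell}|+(5-6r)|\mc{O}_{H_\ell}|$. So the whole proof comes down to showing that, for a suitable choice of the cut edge, the last piece also earns the extra $r$, that is, $\Delta\ge r$.

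To arrange $\Delta\ge r$, I will choose the rotation so that the segment closed off at $v_\ell$ is itself one of the favourable structures. If the procedure ends in Step~P5 with $j=\ell$, then $H_\ell$ was produced in Step~P2, P3 or P4 and already carries $+r$, so we are done.

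If it ends in Step~P1 with a trailing run of $0$-anchors, I will rotate instead. The cut is placed just after the last $1$-anchor, so that this $0$-anchor run becomes a prefix of the path. A leading run of $0$-anchors followed by a $1$-anchor is exactly the Step~P4 situation, covered by Statement~1 of Lemma~\ref{lemma12} for a $4$-cycle and by a reflected Statement~2 of Lemma~\ref{lemma10} for a $5$-cycle. After this rotation the path ends at a $1$-anchor, and the last piece is closed by P2, P3 or P4.

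The remaining bad situation is a final single piece $H_\ell=K[\ell]$ at a lone $1$-anchor, where only $t\in\{4,5\}$ edges are obtained. Here I would shift the cut by one position, so that $v_\ell$ is absorbed into the first piece. Pairing a $4$-cycle anchor with its neighbour is what Step~P3 and the type-$1$/type-$4$ structures do. If all anchors are $1$-anchors with $4$-cycles, I would use the type-$4$ $4$-path with its $+2r$ from Statement~2 of Lemma~\ref{lemma11} to absorb the deficit.

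The main obstacle is this last step: finding, for every cyclic arrangement of $1$-anchors (with $4$- or $5$-cycles) and $0$-anchors, a cut edge for which the segment decomposition closes up cleanly. The worst candidates are short cycles such as $4$- or $5$-cycles $K_c$ in which all or almost all vertices are $1$-anchors with $4$-cycle satellites. There I expect to need direct constructions of a few long paths through consecutive satellites, exploiting $k\ge 11$, rather than the generic decomposition. I would first try the rule ``cut immediately before a $1$-anchor whose satellite is a $5$-cycle, if one exists; otherwise before a $1$-anchor preceded by a $0$-anchor; otherwise treat the all-$4$-cycle case by hand.''
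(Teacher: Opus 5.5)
Your bookkeeping is sound: after one cut of $K_c$, every piece of the P1--P5 parse except the last earns $+r$, so a cut with $\Delta\ge r$ would suffice. The problem is that the step you leave open is false in general, so no rule for choosing the cut can complete the plan.

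Here is a counterexample. Let $K_c=v_1v_2v_3v_4v_5$ be a $5$-cycle. Let $v_1,v_4$ be $0$-anchors, and let $v_2,v_3,v_5$ be $1$-anchors whose satellites are $4$-cycles. Then $|F_K|=17$ and $|\mc{O}_K|=3$. The target $r|F_K|+(5-6r)|\mc{O}_K|=15-r$ therefore forces $15$ edges.

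For every choice of cut edge and either direction of traversal, P1--P5 produces three pieces with $5+5+4$ or $9+5+0$ edges, i.e.\ $14$ edges:
\begin{itemize}
\item cutting $\{v_5,v_1\}$ gives $K[1,2]$, $K[3,4]$ and $K[5]$;
\item cutting $\{v_1,v_2\}$ gives $K[2,3]$, $K[4,5]$ and $K[1]$;
\item all other cuts and the reverse direction reduce to these by symmetry.
\end{itemize}
So $\Delta\in\{2r-1,0\}$ always. Your ``rotate'' and ``shift the cut'' steps just cycle among these parses. Crediting pieces with their true surplus (e.g.\ $3r-1$ for two adjacent $4$-cycle anchors) does not help either, since the actual total is $14<15-r$. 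Nor is this the all-$4$-cycle case you planned to treat by hand. A $15$-edge $k$-pp does exist: the $11$-path $C_5\,v_5v_1v_2\,C_2$ plus the $6$-path $C_3\,v_3v_4$. The first is a type-$2$ piece, which P1--P5 never produces.

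That is the missing idea. The paper first searches the cycle for a pattern worth $+2r$ by Lemma~\ref{lemma11}:
\begin{itemize}
\item Case~1: a $4$-cycle anchor followed by a $5$-cycle anchor;
\item Case~2: the pattern $4,0,4$;
\item Case~3: the pattern $4,4,0,5$;
\item Case~4: four consecutive $4$-cycle anchors.
\end{itemize}
It cuts such a pattern out using two edges of $K_c$, which the $2r$ pays for, and hands the remaining path to Lemma~\ref{lemma13}.

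Only when none of these occurs (Case~5) does the paper show that a $0$-anchor exists. It then reindexes so that $v_\ell$ is a $0$-anchor and $v_1$ a $1$-anchor, and uses a segmentation different from Lemma~\ref{lemma13}'s. Each piece is one of:
\begin{itemize}
\item a $1$-anchor together with its entire following $0$-run;
\item two consecutive $4$-cycle anchors, when the next vertex is a $1$-anchor;
\item two consecutive $4$-cycle anchors together with a following run of at least two $0$-anchors.
\end{itemize}
Excluding Cases~1--4 is exactly what makes each step well defined. For example, two $4$-cycle anchors followed by a single $0$-anchor and then a $1$-anchor would be Case~2 or~3. Every piece starts at a $1$-anchor and ends just before one, so the last piece ends exactly at $v_\ell$. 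Hence each of the $|\Gamma|$ cuts is paid for by a $+r$.
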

\begin{proof}
If $K$ has no $1$-anchors or no satellite element of $K$ is a $4$-cycle, then we are done by Statement~1 in Lemma~\ref{lemma10}.
So, we may assume that at least one satellite element of $K$ is a $4$-cycle.
Recall that $\ell = |V(K_c)| \ge 4$.
We prove the lemma by distinguishing five cases as follows.

\medskip
{\em Case~1.} There exists an index $i$ such that $C_i$ is a $4$-cycle and $C_{i+1}$ is a $5$-cycle.
Without loss of generality, we can assume $i = 1$ by reindexing the anchors in $K_c$ if necessary.
So, $H = K[1, 2]$ is isomorphic to a type-$1$ connected component of $\mc{F}+\mc{W}$.
By Statement~1 of Lemma~\ref{lemma11}, $H$ has a $k$-pp with at least $r |F_H| + 2r + (5-6r) |\mc{O}_H|$ edges.
Let $H' = K[3, \ell]$.
Since the center element of $H'$ is a path, similarly to Lemma~\ref{lemma13}, we can construct a $k$-pp for $H'$ with $r |F_{H'}| + (5-6r) |\mc{O}_{H'}|$ edges.
Note that $|F_K| = |F_H| + |F_{H'}| + 2$ and $|\mc{O}_K| = |\mc{O}_H| + |\mc{O}_{H'}|$.
By combining the $k$-pps of $H$ and $H'$, we obtain a $k$-pp of $K$ with at least 
\[
r |F_H| + 2r + (5-6r) |\mc{O}_H| + r|F_{H'}| + (5-6r) |\mc{O}_{H'}|
= r |F_K| + (5-6r) |\mc{O}_K|
\]
edges and we are done.

\medskip
{\em Case~2.} There exists an index $i$ such that both $C_i$ and $C_{i+2}$ are $4$-cycles and $v_{i+1}$ is a $0$-anchor.
Without loss of generality, we may assume $i = 1$. 
Then, $H = K[1, 3]$ is isomorphic to a type-$2$ connected component of $\mc{F}+\mc{W}$. 
Let $H' = K[4, \ell]$. 
By Statement~1 of Lemma~\ref{lemma11} and Lemma~\ref{lemma13}, we can finish the proof as in Case~1.

\medskip
{\em Case~3.} There exists an index $i$ such that both $C_i$ and $C_{i+1}$ are $4$-cycles, $C_{i+3}$ is a $5$-cycle, and $v_{i+2}$ is a $0$-anchor.  
Without loss of generality, we may assume $i = 1$. 
Then, $H = K[1, 4]$ is isomorphic to a type-$3$ connected component of $\mc{F}+\mc{W}$.
By Statement~1 in Lemma~\ref{lemma11}, $H$ contains a $k$-pp with at least $r |F_H| + 2r + (5-6r) |\mc{O}_H|$ edges.
If $K_c$ is a $4$-cycle, then $|F_K| = |F_H| + 1$ and $|\mc{O}_H| = |\mc{O}_K|$, from which the lemma follows clearly.
Otherwise, $K_c$ is a $5^+$-cycle and we can finish the proof as in Case~1
by setting $H' = K[5, \ell]$.

\medskip
{\em Case~4.} There exists an index $i$ such that $C_i$ through $C_{i+4}$ are all $4$-cycles.
Without loss of generality, we may assume $i = 1$. 
Then, $H = K[1, 4]$ is isomorphic to a type-$4$ connected component of $\mc{F}+\mc{W}$.
By Statement~2 in Lemma~\ref{lemma11} and Lemma~\ref{lemma13}, we can finish the proof as in Case~3.

\medskip
{\em Case~5.} None of Cases~1 through~4 occurs. 
We claim that $K$ has at least one $0$-anchor. 
Towards a contradiction, assume that each anchor in $K$ is a $1$-anchor. 
Recall that at least one satellite element of $K$, say $C_i$, is a $4$-cycle.
Since Case~1 does not happen, $C_{i+1}$ must be a $4$-cycle. 
By repeating this discussion, we see that every satellite element of $K$ is a $4$-cycle.
Now, since $K_c$ is a $4^+$-cycle, Case~4 occurs, a contradiction. 
So, the claim holds.

By the above claim, we may assume that $v_\ell$ is a $0$-anchor and $v_1$ is a $1$-anchor (if necessary, we reindex the vertices in $K_c$).
To obtain a $k$-pp of $K$, we construct a graph $K'$ from $K$ by initializing $i=1$ and 
then proceeding as follows always maitaining the invariant that $v_i$ is a $1$-anchor):

\begin{itemize}
\item[P1.] If either $C_i$ is a $5$-cycle, or $C_i$ is a $4$-cycle and $v_{i+1}$ is a $0$-anchor, 
then set $j$ to be the largest index such that $v_j$ is a $0$-anchor, 
remove the edge $\{ v_j, v_{j+1} \}$, and further proceed to Step~P3.

\item[P2.] If $C_i$ is a $4$-cycle and $v_{i+1}$ is a $1$-anchor, then $C_{i+1}$ is a $4$-cycle too 
(or else Case~1 would have occurred) and $i+2 \le \ell$ (because $v_\ell$ is a $0$-anchor); 
perform the following steps:

\begin{itemize}
\item[P2.1.] If $v_{i+2}$ is a $1$-anchor, then set $j = i+1$, 
remove the edge $\{ v_j, v_{j+1} \}$, and further proceed to Step~P3.

\item[P2.2.] If $v_{i+2}$ is a $0$-anchor, then 
$i+3 \le \ell$ (because otherwise, $i+2=\ell$ and $v_{i+3}=v_1$ is a $1$-anchor, 
and in turn Case~2 or~3 would have occurred depending on whether $C_1$ is a 4-cycle or 5-cycle), and 
$v_{i+3}$ is a $0$-anchor (because otherwise, Case~2 or~3 would have occurred depending on whether $C_{i+3}$ is a 4-cycle or 5-cycle);
set $j$ to be the largest index such that $v_j$ is a $0$-anchor, remove the edge $\{ v_j, v_{j+1} \}$, and further proceed to Step~P3.
\end{itemize}

\item[P3.] If $j = \ell$, then stop; otherwise, reset $i = j+1$ and go back to Step~P1.
\end{itemize}

Let $\Gamma$ be the set of connected components in $K'$.
Clearly, $|F_{K'}| = |F_K|-|\Gamma|$.
For each $H \in \Gamma$, $H$ is $K[i, j]$ for some $1 \le i \le j \le \ell$.
If $H$ is obtained in Step~P1, then by Statement~2 in Lemma~\ref{lemma10} and Statement~1 in Lemma~\ref{lemma12}, 
$H$ has a $k$-pp with at least $r |F_H| + r + (5-6r) |\mc{O}_H|$ edges. 
Similarly, if $H$ is obtained in Step~P2.1 or~P2.2, then by Statement~2 in Lemma~\ref{lemma12}, 
$H$ has a $k$-pp with at least $r |F_H| + r + (5-6r) |\mc{O}_H|$ edges. 
By combining these $k$-pps of all $H\in\Gamma$, we obtain a $k$-pp of $K$ 
with $\sum_{H \in \Gamma} (r |F_H| + r + (5-6r) |\mc{O}_H|)$ edges.
Since $|F_{K'}| = |F_K|-|\Gamma|$, $|\mc{O}_{K'}| = |\mc{O}_K|$, $|F_{K'}| = \sum_{H \in \Gamma} |F_H|$
and $|\mc{O}_{K'}| = \sum_{H \in \Gamma} |\mc{O}_H|$, 
we have
\[
\sum_{H \in \Gamma} (r |F_H| + r + (5-6r) |\mc{O}_H|) 
= r |F_{K'}| + r|\Gamma| + (5-6r) |\mc{O}_{K'}| 
= r |F_K| + (5-6r) |\mc{O}_K|.
\]
The proof is completed.
\end{proof}

\subsection{Three operations to modify $\mc{W}$}

From Lemmas~\ref{lemma13} and~\ref{lemma14}, we know that if $K$ has no $2$-anchors, i.e., $\mc{T}_K = \emptyset$,
then we can always find a good $k$-pp of $K$.
Therefore, in this subsection, we present three operations to decrease the number of $2$-anchors in $K$.

\begin{definition}
\label{def06}
$K$ is {\em balanced} if $K_c$ is a short cycle, exactly one vertex of $K_c$ is a $2$-anchor in $K$, and 
all the other vertices of $K_c$ are 0-anchors in $K$, while $K$ is {\em unbalanced} otherwise.
See Figure~\ref{fig05} for an illustration.
\end{definition}

\input{fig05}

\begin{lemma}
\label{lemma15}
Suppose $K$ is balanced. Then, it has a $k$-pp with at least $r |F_K| + 5-6r$ edges.
\end{lemma}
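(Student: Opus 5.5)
Since $K$ is balanced, its structure is fully determined up to the orders of the three cycles involved. $K_c$ is a short cycle of order $t\in\{4,5\}$ with exactly one $2$-anchor, say $v_1$, and no $1$-anchors. The two satellite elements $C_1$ and $C'_1$ are short cycles of orders $a,b\in\{4,5\}$. Hence $|F_K| = t+a+b$, $|V(K)| = t+a+b$, and $\mc{O}_K=\emptyset$. The target is therefore $r(t+a+b)+5-6r$ edges, and the proof reduces to checking a small number of configurations.

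The plan is to build one long path covering almost everything and then cut it. We remove the edge of $C_1$ incident to $\nu(C_1)$, the edge of $C'_1$ incident to $\nu(C'_1)$, and one edge of $K_c$ incident to $v_1$, say $\{v_1,v_2\}$. This turns $K$ into a spanning tree that is a spider centred at $v_1$ with three legs of lengths $a$, $b$ and $t-1$, so it is not a path. Instead we delete one more edge of $K_c$ at $v_1$, namely $\{v_t,v_1\}$. This gives a path of order $a+b+1$, running through $C_1$, then $v_1$, then $C'_1$, together with a separate $(t-1)$-path $v_2\cdots v_t$.

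The first piece has at most $11\le k$ vertices, so it is feasible as is. The two pieces together contribute $(a+b)+(t-2)$ edges, that is, $|F_K|-2$ edges.

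It remains to verify $|F_K|-2 \ge r|F_K|+5-6r$, which is equivalent to $(1-r)|F_K| \ge 7-6r$. Since $|F_K|\ge 12$, it suffices that $12(1-r)\ge 7-6r$, i.e. $5\ge 6r$, and this holds because $r<\frac56$. No further case analysis on $t,a,b$ is needed.

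The main point to double-check is that this construction is allowed by the structure of $K$. First, $v_1$ is adjacent in $\mc{W}$ to both satellites, so the path $C_1$–$v_1$–$C'_1$ is valid. Second, the order $a+b+1\le 11$ must not exceed $k$, which is guaranteed by $k\ge 11$. If a tighter bound were ever required, one could instead attach $v_2\cdots v_t$ to one of the satellite paths when the order permits, but the simple bound above already suffices.
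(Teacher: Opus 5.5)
Your proposal is correct and is essentially the paper's proof. You split $K$ into the $(a+b+1)$-path through $C_1$, $v_1$, $C'_1$ and the $(t-1)$-path $v_2\cdots v_t$, obtaining $|F_K|-2$ edges, and conclude from $|F_K|\ge 12$ and $r<\frac56$; the paper verifies the final inequality via the intermediate bound $\frac56|F_K|$, while you check $(1-r)|F_K|\ge 7-6r$ directly.
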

\begin{proof}
Without loss of generality, we may assume that $v_1$ is a $2$-anchor. 
Let $j_0$, $j_1$ and $j'_1$ be the number of vertices in $K_c$, $C_1$, and $C'_1$, respectively. 
Then, $\{j_0, j_1, j'_1\}\subseteq \{ 4, 5 \}$ and $|F_K| = j_0+j_1+j'_1$.
Obviously, $K[1]$ can be turned into a $(j_1+j'_1+1)$-path and $\bar{K}[1]$ is a $(j_0-1)$-path.
Since $j_1+j'_1+1 \le 11$ and $j_0 \le 5$, we can obtain a $k$-pp of $K$ with $j_0+j_1+j'_1-2$ edges 
by combining the $(j_1+j'_1+1)$-path and the $(j_0-1)$-path. 
Now, since $j_0+j_1+j'_1 \ge 12$ and $r < \frac 56$, we have
\[
j_0+j_1+j'_1-2 \ge \frac 56 (j_0+j_1+j'_1) \ge r (j_0+j_1+j'_1) + 10-12r > r |F_K| + 5-6r.
\]
This completes the proof.
\end{proof}

By Lemma~\ref{lemma15}, we only need to worry about unbalanced connected components of $\mc{F}+\mc{W}$ 
when defining operations for modifying $\mc{W}$.


\begin{operation}
\label{op01}
Suppose that $K$ is unbalanced with a 2-anchor $v_i$ and $G$ has an edge $\{ w, v \} \notin E(\mc{F})\cup\mc{W}$ 
such that $w\in V(C_i)$ and $v$ is a $0$-anchor. 
Then, the operation updates $\mc{W}$ by replacing $\varepsilon(C_i)$ with the edge $\{ w, v \}$.
(See Figure~\ref{fig06}.)
\end{operation}

\input{fig06}

After Operation~\ref{op01}, $\mc{W}$ is still a stingy maximum-weight path-cycle cover 
because $C_i$ is saturated by the edge $\{ w, v \}$ now. 
Obviously, Operation~\ref{op01} decreases the number of $2$-anchors 
in unbalanced connected components of $\mc{F}+\mc{W}$ by~1.

In the next definition, it is possible that $K = \hat{K}$ and $\hat{C}_j = C'_i$.

\begin{operation}
\label{op02}
Suppose that $K$ and $\hat{K}$ are connected components of $\mc{F}+\mc{W}$ such that 
$K$ is unbalanced with a 2-anchor $v_i$, 
$(\hat{K})_m$ is not an edge or $\hat{K}_c$ is not a short cycle, and $\hat{K}$ has a satellite element 
$\hat{C}_j \ne C_i$ containing a vertex $v$ adjacent to a vertex $w$ of $C_i$ in $G$. 
Then, the operation updates $\mc{W}$ by replacing 
$\varepsilon(C_i)$ and $\varepsilon(\hat{C}_j)$ with the edge $\{ w, v \}$.
\end{operation}

\input{fig07}

We need to analyze the effect of Operation~\ref{op02}. 
Obviously, Operation~\ref{op02} decreases the number of $2$-anchors in unbalanced connected components of $\mc{F}+\mc{W}$ by at least~1.
We claim that if $C$ is a short cycle saturated by $\mc{W}$ before Operation~2, then it remains saturated by $\mc{W}$ 
after Operation~2. 
Towards a contradiction, assume that the claim fails for some $C$. 
Then, it is clear that $C$ is either $K_c$ or $\hat{K}_c$. 
If $C$ is $K_c$, then $K = \hat{K}$ and $C'_i = \hat{C}_j$, and hence $K$ is balanced, a contradiction. 
So, $C$ is $\hat{K}_c$. 
But then, $(\hat{K})_c$ is an edge and $\hat{K}_c$ is a short cycle, a contradiction. 
So, the claim holds.

\begin{operation}
\label{op03}
Suppose that neither Operation~1 nor Operation~2 is applicable but $K$ and $\hat{K}$ are connected components of $\mc{F}+\mc{W}$ such that 
$K$ is unbalanced with a 2-anchor $v_i$, $(\hat{K})_m$ is an edge, $\hat{K}_c$ is a short cycle,  and $\hat{K}$ has 
a satellite element $\hat{C}_j \ne C_i$ containing a vertex $v$ adjacent to a vertex $w$ of $C_i$ in $G$. 
Then, the operation updates $\mc{W}$ by replacing $\varepsilon(C_i)$ with $\{ w, v \}$.
\end{operation}

\input{fig08}

Since $(K)_m$ is not an edge, $K \ne \hat{K}$.
Note that after Operation~3, $\hat{C}_j$ becomes the center element of the augmented $\hat{K}$. 
Clearly, if $C$ is a short cycle saturated by $\mc{W}$ before the operation, then it remains so after the operation. Moreover, if $v \ne \nu(\hat{C}_j)$, then the operation clearly decreases the number of $2$-anchors in unbalanced connected components of $\mc{F}+\mc{W}$ by~$1$. 
On the other hand, if $v = \nu(\hat{C}_j)$, then after the operation, the augmented 
$\hat{K}$ becomes balanced and hence the operation decreases the number of $2$-anchors in 
unbalanced connected components of $\mc{F}+\mc{W}$ by~$1$.

\begin{lemma}
\label{lemma16}
Operations~\ref{op01}, \ref{op02}, and \ref{op03} can be performed at most $n$ times in total.
\end{lemma}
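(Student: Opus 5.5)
A potential-function argument is the natural approach. Let $\Phi$ be the number of $2$-anchors lying in unbalanced connected components of $\mc{F}+\mc{W}$. Every $2$-anchor is a vertex of $G$, so $0 \le \Phi \le n$ at all times. I will show that each application of Operation~\ref{op01}, \ref{op02}, or~\ref{op03} decreases $\Phi$ by at least~$1$. Since $\Phi$ is a non-negative integer bounded by $n$, there can then be at most $n$ operations in total.

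The key steps follow the discussion after each operation, with one extra point to verify. The operations may create new $2$-anchors or move vertices between balanced and unbalanced components. So the decrease must be measured as a net change in $\Phi$, not just as the loss of the single anchor $v_i$.

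\emph{Operation~\ref{op01}.} The edge $\varepsilon(C_i)$ is removed, so $v_i$ becomes a $1$-anchor. The new edge $\{w,v\}$ makes the $0$-anchor $v$ a $1$-anchor, which is not a $2$-anchor. The component containing $K_c$ keeps the same center element, and its other anchors are unchanged. The only new component is the one hanging $C_i$ on $v$, which is still inside $K$. Hence $\Phi$ drops by exactly one.

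\emph{Operation~\ref{op02}.} Removing $\varepsilon(C_i)$ turns $v_i$ into a $1$-anchor. Removing $\varepsilon(\hat C_j)$ can only lower the degree of an anchor of $\hat K$. The new edge $\{w,v\}$ joins two satellite cycles into a new component. In that component, by Lemma~\ref{lemma06}'s structure, one of $C_i,\hat C_j$ acts as the center, and the new edge gives it at most one incident $\mc{W}$-edge, so it is a $1$-anchor. The claim proved after Operation~\ref{op02} guarantees that all short cycles stay saturated, so $\mc{W}$ remains a maximum-weight cover and the component structure is still as in Lemma~\ref{lemma06}. No new $2$-anchor appears. A residual component can change from unbalanced to balanced, which only helps. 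It cannot change from balanced to unbalanced, because a $2$-anchor never appears in a component that had none. Thus $\Phi$ decreases by at least one.

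\emph{Operation~\ref{op03}.} Again $v_i$ loses an edge. If $v\neq\nu(\hat C_j)$, then $\hat C_j$ becomes the center of the augmented $\hat K$ with two $1$-anchors, and no new $2$-anchor is created. If $v=\nu(\hat C_j)$, then $v$ becomes a $2$-anchor. However, the augmented $\hat K$ has center the short cycle $\hat C_j$, exactly one $2$-anchor $v$, and all other vertices are $0$-anchors, so it is balanced and $v$ does not count in $\Phi$. In both subcases $\Phi$ drops by one.

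The main obstacle is bookkeeping: checking that no operation converts a previously balanced component into an unbalanced one, or makes some other vertex a $2$-anchor in an unbalanced component. I would handle this by listing exactly which $\mc{W}$-edges change in each operation; there are at most three. Only the degrees of their endpoints change, and only the components containing those endpoints are affected. Each affected component is then checked directly against Definition~\ref{def06}. With this verified, the potential argument gives the bound of $n$.
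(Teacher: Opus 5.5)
Your potential $\Phi$ is the same one the paper uses; its proof simply asserts that each operation lowers the number of $2$-anchors in unbalanced components. However, your verification for Operation~\ref{op01} fails in one case. You assume the $0$-anchor $v$ lies in $K$ (``the one hanging $C_i$ on $v$, which is still inside $K$''), but the operation allows $v$ to be a $0$-anchor of \emph{any} component. Lemma~\ref{lemma17}, and through it Lemma~\ref{lemma20}, needs exactly this generality, and Figure~\ref{fig06} draws $v$ outside $K$. Now take $v$ to be a $0$-anchor of a balanced component $\hat K\ne K$, whose center $\hat K_c$ carries the single $2$-anchor $u$. After the operation $v_i$ is a $1$-anchor, but $v$ is now a $1$-anchor of $\hat K$. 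So $\hat K$ becomes unbalanced and $u$ starts counting in $\Phi$, giving a net change of $-1+1=0$. The bookkeeping you name as the main obstacle (no balanced component turning unbalanced) genuinely fails here. The paper's one-line proof skips the same case.

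The statement still holds with a modified potential. Let $T$ be the total number of $2$-anchors, whether in balanced or unbalanced components. The three operations affect $\Phi$ and $T$ as follows:
\begin{itemize}
\item Operation~\ref{op01} lowers $T$ by exactly one, and changes $\Phi$ by $-1$ or $0$, since a balanced $\hat K$ has only one $2$-anchor.
\item Operation~\ref{op02} lowers both $\Phi$ and $T$.
\item Operation~\ref{op03} lowers $\Phi$ by one and does not raise $T$; $T$ is unchanged exactly when $v=\nu(\hat C_j)$.
\end{itemize}
These counts rely on one observation: whenever some $(\cdot)_m$ degenerates to an edge and its center is re-chosen, the new anchor carries only one $\mc{W}$-edge, so no $2$-anchor is created. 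Hence $\Phi+T$ drops by at least one per operation. Finally, the sets $\{v_i\}\cup V(C_i)\cup V(C'_i)$ over all $2$-anchors $v_i$ are pairwise disjoint and each has at least $9$ vertices. Therefore $\Phi+T\le 2T\le 2n/9<n$, which gives the claimed bound.
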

\begin{proof}
For each $i\in\{1,2,3\}$, performing Operation~$i$ decreases the number of $2$-anchors in unbalanced connected 
components of $\mc{F}+\mc{W}$ by at least~1. Therefore, the operations can be performed at most $n$ times in total.
\end{proof}

\begin{lemma}
\label{lemma17}
Suppose that none of Operations~1, 2, and 3 is applicable but an unbalanced connected component $K$ of 
$\mc{F}+\mc{W}$ has a satellite element $C_i$ such that $\varepsilon(C_i)$ connectes $\nu(C_i)$ to 
a $2$-anchor in $K$ and $G$ has an edge $\{ w, v \} \not\in E(\mc{F}) \cup \mc{W}$ with $w\in V(C_i)$. 
Then, either $C_i$ contains both $v$ and $w$, or $v$ is a $1$- or $2$-anchor.
\end{lemma}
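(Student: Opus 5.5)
We argue by contradiction, using the inapplicability of the three operations. Assume $v\notin V(C_i)$ and that $v$ is not a $1$- or $2$-anchor. We enumerate where $v$ can lie in $\mc{F}+\mc{W}$ and, in every case, exhibit an applicable operation or violate the maximality or stinginess of $\mc{W}$.

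First, $v$ lies in a connected component of $\mc{F}$ different from $C_i$. Since $C_i$ is a short cycle, Definition~\ref{def01} puts $\{w,v\}$ in $E(G')$. Let $\hat K$ be the connected component of $\mc{F}+\mc{W}$ containing $v$; possibly $\hat K=K$. By Lemma~\ref{lemma06}, $v$ lies either in the center element $\hat K_c$ or in a satellite element $\hat C_j$ of $\hat K$.

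We split into cases.

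\emph{Case (a): $v\in \hat K_c$.} Since $v$ is not a $1$- or $2$-anchor, it is a $0$-anchor. If $\hat K=K$, Operation~\ref{op01} applies directly. If $\hat K\ne K$, Operation~\ref{op01} still applies as long as it does not require $v$ to lie in $K$. If it does, we instead replace $\varepsilon(C_i)$ by $\{w,v\}$. The result is still a path-cycle cover: $w$ had degree $0$ in $\mc{W}$ unless $w=\nu(C_i)$, and in that case we are merely swapping the one edge at $w$, while $v$ had $\mc{W}$-degree $0$. All short cycles remain saturated, so the weight is unchanged. This exchange is exactly the move Operation~\ref{op01} is meant to capture, and we treat it as an instance of that operation.

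\emph{Case (b): $v\in \hat C_j$ with $\hat C_j\ne C_i$.} If $(\hat K)_m$ is not an edge or $\hat K_c$ is not a short cycle, Operation~\ref{op02} applies. Otherwise Operation~\ref{op03} applies; its precondition that Operations~\ref{op01} and~\ref{op02} are inapplicable holds by our hypothesis.

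\emph{Case (c): $v$ lies in a component of $\mc{F}$ that is a critical component.} Such a component is an unsaturated short cycle in $\mc{I}_4\cup\mc{I}_5$. Replacing $\varepsilon(C_i)$ by $\{w,v\}$ keeps $C_i$ saturated and additionally saturates that cycle. This gives a path-cycle cover of $G'$ of larger weight whenever the cycle counts toward the weight. Since $k\ge 11$ in this section, $\eta=1$, so it always counts, contradicting the maximality of $\mc{W}$. Degrees stay at most $2$ because $v$ had $\mc{W}$-degree $0$.

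Because $\{w,v\}\notin \mc{W}\cup E(\mc{F})$ and every vertex of $G$ lies in exactly one of these situations, all possibilities are exhausted and the lemma follows.

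The main obstacle is Case (a) when $\hat K\ne K$. Operation~\ref{op01} is phrased with "$v$ is a $0$-anchor", and it must be checked that this means a $0$-anchor of any component, not only of $K$. I would read it that way and then verify that the swap preserves stinginess and the cover property. A second delicate point is $v$ lying in a center element that is itself a short cycle of a one-node component. That is exactly Case (c), and it needs the observation that weight strictly increases.
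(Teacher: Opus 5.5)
Your proof is correct and is essentially the paper's own argument: if $v\notin V(C_i)$ and $v$ is not a $1$- or $2$-anchor, then either $v$ is a $0$-anchor, so Operation~1 would apply, or $v$ lies in a satellite element $\hat C_j\ne C_i$, so Operation~2 or~3 would apply. Your separate Case~(c) and the hedge in Case~(a) are unnecessary. Operation~1 does not require $v$ to lie in $K$, and a vertex of a critical cycle is already a $0$-anchor, because that cycle is the center element of its single-node component.
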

\begin{proof}
Towards a contradiction, assume that $v \not\in V(C_i)$ and $v$ is not a $1$- or $2$-anchor. 
Then, either $v$ is a $0$-anchor or $v$ appears in a satellite element $\hat{C}_j$ of a connected component $\hat{K}$ of $\mc{F}+\mc{W}$ with $\hat{C}_j \ne C_i$.
In the former case, Operation~1 would have been applicable, a contradiction. 
In the latter case, Operation~2 or~3 would have been applicable, a contradiction.
The lemma is proved.
\end{proof}

By Lemma~\ref{lemma16}, we repeatedly apply one of the three operations to update $\mc{W}$ until none of them is applicable.
We remark that $\mc{W}$ is always a stingy maximum-weight path-cycle cover of $G'$.

\subsection{The complete algorithm}

We inherit Notation~\ref{nota01}. 
In addition, the following notations are necessary for both the algorithm description and its analysis:

\begin{itemize}
\item $b_1$: The number of $2$-anchors $v_i$ in unbalanced connected components $K$ of $\mc{F}+\mc{W}$
 such that both $C_i$ and $C'_i$ are $4$-cycles.

\item $b_2$: The number of $2$-anchors $v_i$ in unbalanced connected components $K$ of $\mc{F}+\mc{W}$ 
such that $C_i$ or $C'_i$ is a $5$-cycle. 
We always assume that $C_i$ is a $5$-cycle.

\item $g_1$: The number of $1$-anchors in $\mc{F}+\mc{W}$.

\item $g_2$: The number of $2$-anchors in balanced connected components $K$ of $\mc{F}+\mc{W}$.

\item $V_b$: The set of all $2$-anchors $v_i$ in unbalanced connected components $K$ of $\mc{F}+\mc{W}$
together with the vertices in $V(C_i) \cup V(C'_i)$.

\item $G_b = G[V \setminus V_b]$, i.e., the graph induced by $V \setminus V_b$.
\end{itemize}

By Lemmas~\ref{lemma13}, \ref{lemma14}, and \ref{lemma15}, only for unbalanced connected components $K$
of $\mc{F}+\mc{W}$ with at least one $2$-anchor, we have to settle for a bad $k$-pp of $K$. 
If the total number of $2$-anchors in unbalanced connected components of $\mc{F}+\mc{W}$ is relatively small 
compared to the total number of $1$- or $2$-anchors in connected components of $\mc{F}+\mc{W}$, then 
the bad $k$-pps over all such $K$ will only have a small effect on the approximation ratio of our algorithm. 
Otherwise, the bad $k$-pps over all such $K$ are unacceptable and we have to find another way to deal with 
the $2$-anchors in unbalanced connected components of $\mc{F}+\mc{W}$. 
Our idea is to use recursion in this case.

Recall that $r = \frac {9-\sqrt{11}}7$ is the smaller root of the equation $7x^2-18x+10=0$.
The whole algorithm is presented in Figure~\ref{fig09}.
To analyze its approximation ratio, we distinguish three cases: 1) $2b_1+b_2 = 0$; 
2) $2b_1+b_2 \ne 0$ and $\frac {g_1+g_2}{2b_1+b_2} \ge \frac {2-2r}r$;
3) $2b_1+b_2 \ne 0$ and $\frac {g_1+g_2}{2b_1+b_2} < \frac {2-2r}r$.

\begin{figure}[htb]
\begin{center}
\framebox{
\begin{minipage}{5.3in}
Algorithm {\sc Approx2}:\\
Input: A graph $G = (V, E)$.
\begin{itemize}
\parskip=0pt
\item[1.]
         If $V$ contains exactly one vertex, then return the vertex as a $1$-path.

\item[2.]
	Call {\sc Approx1} to obtain the graph $\mc{F}+\mc{W}$.

\item[3.]
         Apply Operations~\ref{op01}, \ref{op02}, and \ref{op03} to modify $\mc{W}$ until none of them is applicable.

\item[4.] 
         If $2b_1+b_2=0$ or $\frac {g_1+g_2}{2b_1+b_2} \ge \frac {2-2r}r$,
         then perform the following steps:
  \begin{itemize}
  \item[4.1.] For each unbalanced connected component $K$ of $\mc{F}+\mc{W}$, 
	  remove $\varepsilon(C_i)$ from $K$ for every $2$-anchor $v_i$ in $K$.
  \item[4.2.] For each connected component $K$ of $\mc{F}+\mc{W}$, if $K$ is not a short cycle, 
         then construct a $k$-pp of $K$ as in Lemmas~\ref{lemma13}, \ref{lemma14}, and \ref{lemma15}; 
         otherwise, construct a $k$-pp of $K$ by removing an arbitrary edge from $K$. 
  \item[4.3.] Return the union of the $k$-pps computed in Step~4.2.
  \end{itemize}

\item[5.] 
        If $2b_1+b_2 \ne 0$  and $\frac {g_1+g_2}{2b_1+b_2} < \frac {2-2r}r$, then 
         perform the following steps:
  \begin{itemize}
  \item[5.1.] Recursively call the algorithm on $G_b$ to obtain a $k$-pp of $G_b$.
  \item[5.2.] For each unbalanced connected component $K$ of $\mc{F}+\mc{W}$ and for each 2-anchor $v_i$ in $K$, 
           obtain a $k$-pp of $G[\{v_i\}\cup V(C_i) \cup V(C'_i)]$ by removing one edge of $C_i$ incident to
           $\nu(C_i)$ and removing one edge of $C'_i$ incident to $\nu(C'_i)$.
  \item[5.3.] Return the union of the $k$-pps computed in Steps~5.2 and~5.3.
  \end{itemize}
\end{itemize}
\end{minipage}}
\end{center}
\caption{The whole approximation algorithm.} \label{fig09}
\end{figure}

Let $\mc{P}$ be the $k$-pp of $G$ returned by {\sc Approx2}.

\begin{lemma}
\label{lemma18}
Suppose $2b_1+b_2=0$. Then, $|E(\mc{P})| \ge r |E(\mc{Q})|$.
\end{lemma}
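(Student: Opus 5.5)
Since $2b_1+b_2=0$, no unbalanced connected component of $\mc{F}+\mc{W}$ contains a $2$-anchor. So {\sc Approx2} executes Step~4, and Step~4.1 removes nothing. I will bound the number of edges of $\mc{P}$ component by component in terms of $F_K$, sum over components, and compare the total with Lemma~\ref{lemma05}.

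\textbf{Classification of components.} Every connected component $K$ of $\mc{F}+\mc{W}$ (for the final $\mc{W}$ after Step~3) falls into exactly one of three classes.
\begin{itemize}
\item[(a)] $K$ is critical, i.e.\ $K\in\mc{I}_4\cup\mc{I}_5$. An unsaturated short cycle is a whole component by itself. Here Step~4.2 removes one edge, giving $3$ or $4$ edges respectively.
\item[(b)] $K$ is balanced. By Lemma~\ref{lemma15} we get at least $r|F_K|+5-6r\ge r|F_K|$ edges, using $5-6r>0$ (since $r\approx 0.812$).
\item[(c)] $K$ is non-critical and unbalanced. Then $\mc{T}_K=\emptyset$ by the hypothesis. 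Lemmas~\ref{lemma13} and~\ref{lemma14} (path or cycle center) give at least $r|F_K|+(5-6r)|\mc{O}_K|\ge r|F_K|$ edges.
\end{itemize}
One point needs care in class (c): a short cycle that is a lone component but saturated cannot occur, since saturation means some $\mc{W}$-edge is incident to it. Hence every non-critical single-node component is a path or a $6^+$-cycle, or is covered by the lemmas above.

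\textbf{Summing.} The components of $\mc{F}+\mc{W}$ are unions of components of $\mc{F}$, so the sets $F_K$ partition $E(\mc{F})$. Summing the three cases gives
\[
|E(\mc{P})| \ge r\bigl(|E(\mc{F})|-4|\mc{I}_4|-5|\mc{I}_5|\bigr)+3|\mc{I}_4|+4|\mc{I}_5|
= r\bigl(|E(\mc{F})|-|\mc{I}_4|-|\mc{I}_5|\bigr)+(3-3r)|\mc{I}_4|+(4-4r)|\mc{I}_5|.
\]
Since $k\ge 11$ we have $\eta=1$. Lemma~\ref{lemma05} then gives $|E(\mc{F})|-|\mc{I}_4|-|\mc{I}_5|\ge|E(\mc{Q})|$. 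The remaining terms are nonnegative because $r<1$, which yields $|E(\mc{P})|\ge r|E(\mc{Q})|$.

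\textbf{Main obstacle.} The delicate step is justifying that Lemma~\ref{lemma05} still applies after Step~3. Its proof uses only that $\mc{W}$ is a maximum-weight path-cycle cover of $G'$, with $\mc{I}_4,\mc{I}_5$ taken with respect to that $\mc{W}$. The discussion after each operation shows that every previously saturated short cycle stays saturated. So the weight does not drop, $\mc{W}$ remains maximum-weight (and stingy), and the lemma applies to the final $\mc{W}$. I would also check that Lemmas~\ref{lemma13} and~\ref{lemma14} need only $\mc{T}_K=\emptyset$ and the structure from Lemma~\ref{lemma06}, which stinginess preserves.
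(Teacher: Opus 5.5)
Your proposal is correct and follows the paper's proof. Both arguments split the components into critical short cycles, balanced components, and non-critical components with $\mc{T}_K=\emptyset$, then use Lemmas~\ref{lemma13}--\ref{lemma15} to get $r|F_K|$ edges per non-critical component, giving $r|E(\mc{F})| + (3-4r)|\mc{I}_4| + (4-5r)|\mc{I}_5|$, and finish with Lemma~\ref{lemma05} ($\eta=1$) and $r<1$. Your extra check that Lemma~\ref{lemma05} still applies after Step~3 matches the paper's remark that $\mc{W}$ remains a stingy maximum-weight path-cycle cover of $G'$ after the operations.
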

\begin{proof} 
Let $K$ be a connected component of $\mc{F}+\mc{W}$ immediately before Step~4. 
If $K$ is a short cycle, then $|F_K|-1$ edges of $K$ are included in $\mc{P}$ in Step~4.2. 
Otherwise, since $b_1 = b_2 = 0$, either $K$ is balanced, or unbalanced but with no 2-anchors.
In either case, by Lemmas~\ref{lemma13}, \ref{lemma14}, and \ref{lemma15} and the inequality $r < \frac 56$, 
at least $r |F_K|$ edges are included in $\mc{P}$ in Step~4.2.
Therefore, the number of edges in $\mc{P}$ is at least 
\[ 
r \sum_{K \notin \mc{I}_4 \cup \mc{I}_5} |F_K| + 3 |\mc{I}_4| + 4 |\mc{I}_5| 
= r |E(\mc{F})| + (3-4r) |\mc{I}_4| + (4-5r) |\mc{I}_5|.
\]
Now, the proof is finished by applying Lemma~\ref{lemma05} with $\eta=1$ and using the inequality $r < 1$.
\end{proof}

\begin{lemma}
\label{lemma19}
Suppose $2b_1+b_2 \ne 0$ and $\frac {g_1+g_2}{2b_1+b_2} \ge \frac {2-2r}r$. Then, $|E(\mc{P})| \ge r |E(\mc{Q})|$.
\end{lemma}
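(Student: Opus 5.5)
We follow the accounting used in the proof of Lemma~\ref{lemma18}, charging every component of $\mc{F}+\mc{W}$ separately. The difference is that Step~4.1 now deletes the edges $\varepsilon(C_i)$ and $\varepsilon(C'_i)$ at each $2$-anchor $v_i$ of an unbalanced component. This is fine for the ratio: we do not need a path-cycle cover, only a lower bound on the edges of $\mc{F}$ that are kept.

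\textbf{Components after Step~4.1.} Fix an unbalanced component $K$ with $2$-anchors, and apply Step~4.1 to it. The following pieces remain.
\begin{itemize}
\item The modified $K$. Every vertex that was a $2$-anchor is now a $0$-anchor, so $\mc{T}=\emptyset$ for it. By Lemmas~\ref{lemma13} and~\ref{lemma14} its $k$-pp has at least $r|F| + (5-6r)|\mc{O}|$ edges, where $F$ is its set of $\mc{F}$-edges. If the modified $K$ is itself a short cycle (so every anchor was a $2$-anchor), we instead keep $|F|-1 \ge r|F|$ edges.
\item Each detached cycle $C_i$ or $C'_i$. It is now an isolated short cycle, and we keep $3$ or $4$ of its edges, i.e. we lose exactly one edge per detached cycle.
\end{itemize}
Balanced components contribute at least $r|F_K|+5-6r$ by Lemma~\ref{lemma15}. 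Untouched critical cycles contribute $3|\mc{I}_4|+4|\mc{I}_5|$. Every other non-critical component contributes $r|F_K|+(5-6r)|\mc{O}_K|$.

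\textbf{Summing.} Summing over all components gives
\[
|E(\mc{P})| \ge r|E(\mc{F})| + (3-4r)|\mc{I}_4| + (4-5r)|\mc{I}_5| + (5-6r)(g_1+g_2) - D,
\]
where $D$ is the total deficit of the detached cycles relative to the rate $r$. A detached $4$-cycle yields $3 = 4r-(4r-3)$ edges, so its deficit is $4r-3$. A detached $5$-cycle has deficit $5r-4$. Hence
\[
D = 2(4r-3)b_1 + \bigl((4r-3)+(5r-4)\bigr)b_2 \le (4r-3)(2b_1+b_2) + (5r-4)b_2 .
\]
Because $5r-4<4r-3$, this gives $D \le (4r-3)(2b_1+b_2)$. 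Note that each balanced $2$-anchor gives a bonus of at least $5-6r$, which is why it counts toward $g_2$.

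\textbf{Finishing.} By Lemma~\ref{lemma05} (with $\eta=1$) and $r<1$, the terms $r|E(\mc{F})| + (3-4r)|\mc{I}_4| + (4-5r)|\mc{I}_5|$ already dominate $r|E(\mc{Q})|$. It therefore remains to show
\[
(5-6r)(g_1+g_2) \ge (4r-3)(2b_1+b_2).
\]
Using the hypothesis $g_1+g_2 \ge \frac{2-2r}{r}(2b_1+b_2)$, this reduces to $(5-6r)(2-2r) \ge r(4r-3)$. Expanding, this is $12r^2-22r+10 \ge 4r^2-3r$, i.e. $8r^2-19r+10\ge0$.

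This is the step to scrutinise. Since $7r^2-18r+10=0$, we have $8r^2-19r+10 = r^2-r = r(r-1) < 0$, so the inequality is slightly false. The charge estimate must therefore be tightened in one of two ways.
\begin{itemize}
\item Use the exact count $|F_{K'}| = |F_K|-2|\mc{T}_K|$ carefully together with the $+r$ slack of type-$4$ pieces.
\item Observe that the center edges around a former $2$-anchor are kept. Then a detached $4$-cycle pair loses only $2$ against $r\cdot 8$, giving per-anchor deficit $8r-6$ measured against $|F|$ and not against $|F|+1$.
\end{itemize}
We would recompute the per-$2$-anchor deficit as $2(4r-3)-r = 7r-6$, crediting the removed $\mc{W}$-edges via the $(5-6r)$ bookkeeping. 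The target inequality then becomes $(5-6r)(2-2r)\ge r(7r-6)/2$ or the analogous form. We expect the correct accounting to yield exactly the defining equation $7r^2-18r+10=0$, and pinning down that per-anchor constant is the main obstacle.
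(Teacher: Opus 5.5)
Your proposal does not prove the lemma. It ends with an inequality that you correctly recognize as false ($8r^2-19r+10=r(r-1)<0$), and it only speculates about a repair. The cause is a misreading of Step~4.1, not slack in the constants.

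Step~4.1 removes \emph{only} $\varepsilon(C_i)$ at each $2$-anchor $v_i$ of an unbalanced component, with the convention that $C_i$ is the $5$-cycle when $v_i$ is counted in $b_2$. The edge $\varepsilon(C'_i)$ is kept. This changes your accounting in two ways:
\begin{itemize}
\item Only one cycle per such anchor is detached. After Step~4.1 there are exactly $|\mc{I}_4|+b_1$ isolated $4$-cycles and $|\mc{I}_5|+b_2$ isolated $5$-cycles, so the deficit is $4r-3$ per $b_1$-anchor and $5r-4$ per $b_2$-anchor.
\item This is the missing idea: $v_i$ becomes a $1$-anchor, still attached to $C'_i$. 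So the components handled by Lemmas~\ref{lemma13} and~\ref{lemma14} contain $g_1+b_1+b_2$ one-anchors in total, and every former $2$-anchor earns the bonus $5-6r$.
\end{itemize}
Under your reading both credits are lost. At the threshold $g_1+g_2=\frac{2-2r}{r}(2b_1+b_2)$ with $b_2=0$, the surplus is $2(5r-4)b_1-2(4r-3)b_1=(2r-2)b_1<0$. No re-bookkeeping of that version can close, including a per-anchor constant such as $7r-6$. Separately, your step $D\le(4r-3)(2b_1+b_2)$ is invalid, because it silently drops the positive term $(5r-4)b_2$.

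With the correct reading, summing over components gives
\[
|E(\mc{P})| \ge r|E(\mc{F})| + (3-4r)|\mc{I}_4| + (4-5r)|\mc{I}_5| + (5-6r)(g_1+g_2) + (8-10r)b_1 + (9-11r)b_2 ,
\]
where $8-10r=(3-4r)+(5-6r)$ and $9-11r=(4-5r)+(5-6r)$. Lemma~\ref{lemma05}, with $r<1$, disposes of the first three terms as you say. The defining equation $7r^2-18r+10=0$ is equivalent to $(5-6r)\cdot\frac{2-2r}{r}=5r-4$. Hence the hypothesis yields $(5-6r)(g_1+g_2)\ge(5r-4)(2b_1+b_2)$, and then
\[
(5r-4)(2b_1+b_2)+(8-10r)b_1+(9-11r)b_2=(5-6r)b_2\ge 0 .
\]
The threshold $\frac{2-2r}{r}$ is chosen exactly so that the $b_1$ coefficient cancels. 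This is where the quadratic enters the argument.
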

\begin{proof}
Immediately after Step 4.1, $E(\mc{F})$ can be partitioned into $F_1 \cup F_2$, where 
$F_1$ consists of the edges in connected components of $\mc{F}+\mc{W}$ that are short cycles, 
while $F_2 = E(\mc{F}) \setminus F_1$. 

We first discuss the edges in $F_1$. 
Immediately after Step 4.1, exactly $|\mc{I}_4|+b_1$ connected components of 
$\mc{F}+\mc{W}$ are 4-cycles and exactly $|\mc{I}_5| + b_2$ connected components of $\mc{F}+\mc{W}$ are $5$-cycles.
For these cycles, exactly $3|\mc{I}_4|+3b_1+ 4|\mc{I}_5| +4b_2$ edges are included in $\mc{P}$ in Step~4.2.

We next discuss the edges in $F_2$. 
Clearly, $|F_2| = |E(\mc{F})| - 4|\mc{I}_4|-4b_1-5|\mc{I}_5| -5b_2$.
Immediately after Step 4.1, each connected component of $\mc{F}+\mc{W}$ that is not a short cycle is 
either balanced, or unbalanced but with no $2$-anchors. 
The total number of 1-anchors in such unbalanced components is clearly $g_1+b_1+b_2$ and there are $g_2$ balanced components.
Moreover, by Lemmas~\ref{lemma13}, \ref{lemma14}, and \ref{lemma15}, 
we put at least
\[
r( |E(\mc{F})| - 4|\mc{I}_4|-4b_1-5|\mc{I}_5| -5b_2)
+ (5-6r)(g_1+g_2+b_1+b_2)
\]
edges of such connected components into $\mc{P}$ in Step~4.2.
Since $7r^2-18r+10=0$, we have $2b_1+b_2 \le \frac r{2-2r} (g_1+g_2) = \frac {5-6r}{5r-4} (g_1+g_2)$. 
Now, applying Lemma~\ref{lemma05} with $\eta=1$ and using the inequality $r < 1$, 
we see that $|E(\mc{P})|$ is at least 
\begin{eqnarray*}
&  & r|E(\mc{F})| + (3-4r) |\mc{I}_4| + (4-5r) |\mc{I}_5|
+ (5-6r)(g_1+g_2) + (8-10r)b_1 + (9-11r)b_2 \\
& \ge & r|E(\mc{Q})| + (4-4r) |\mc{I}_4| + (5-5r) |\mc{I}_5| + (5-6r)(g_1+g_2) + (8-10r)b_1 + (9-11r)b_2 \\
& \ge & r|E(\mc{Q})| + (5-6r)(g_1+g_2) + (8-10r)b_1 + (9-11r)b_2 \\
& \ge & r|E(\mc{Q})| + (5r-4) (2b_1+b_2) + (8-10r)b_1 + (9-11r)b_2 \\
& = &  r |E(\mc{Q})|  + (5-6r) b_2.
\end{eqnarray*}
Since $r < \frac 56$, the lemma holds.
\end{proof}

\begin{lemma}
\label{lemma20}
Suppose that $2b_1 + b_2\ne 0$ and $\frac {g_1+g_2}{2b_1+b_2} < \frac {2-2r}r$.
Then, $G_b$ has a $k$-pp with at least $|E(\mc{Q})| - (8b_1+10b_2+2g_1+2g_2)$ edges.
\end{lemma}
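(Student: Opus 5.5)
The approach is to restrict the optimal $k$-pp $\mc{Q}$ to $G_b$ and count how many edges of $\mc{Q}$ are lost. Take $\mc{Q}_b = \mc{Q}[V\setminus V_b]$, i.e., delete from $\mc{Q}$ every vertex of $V_b$ together with its incident edges. Each component of $\mc{Q}_b$ is a subpath of a $k^-$-path, so $\mc{Q}_b$ is a $k$-pp of $G_b$. It therefore suffices to show that
\[
|E(\mc{Q})| - |E(\mc{Q}_b)| \le 8b_1+10b_2+2g_1+2g_2 .
\]
The lost edges are exactly those with at least one endpoint in $V_b$. I split them into two groups.

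\emph{Group~1: edges with both endpoints in $V_b$, or with endpoints in two different blocks.} Here a block is a set $\{v_i\}\cup V(C_i)\cup V(C'_i)$ for a $2$-anchor $v_i$ in an unbalanced component. Every vertex has degree at most $2$ in $\mc{Q}$, so the number of edges of $\mc{Q}$ touching $V_b$ is at most $\sum_{x\in V_b}\deg_{\mc{Q}}(x)$, minus a correction for edges with both ends in $V_b$. A block contributes $1+4+4=9$ vertices when counted in $b_1$ and $1+5+4=10$ vertices when counted in $b_2$. The target charge per block is $8$ for $b_1$ and $10$ for $b_2$, i.e., roughly one edge per vertex. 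This suggests the following accounting: charge each edge of $\mc{Q}$ incident to $V_b$ to one of its endpoints in $V_b$, preferably to an endpoint in a satellite cycle. Since $\mc{Q}$ restricted to a short cycle $C$ has at most $|V(C)|-1$ internal edges, a cycle can absorb its internal edges plus a bounded number of outgoing ones. The budget is then essentially one edge per vertex of $V_b$, adjusted by $-1$ for $b_1$ blocks.

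\emph{Group~2: edges leaving $V_b$ into $V\setminus V_b$.} These are controlled by Lemma~\ref{lemma17}. Since none of Operations~1--3 applies, any edge of $G$ not in $E(\mc{F})\cup\mc{W}$ from a vertex $w\in V(C_i)$ (with $v_i$ a $2$-anchor of an unbalanced component) goes either inside $C_i$ or to a $1$- or $2$-anchor. Hence every edge of $\mc{Q}$ leaving a satellite cycle of a block either is $\varepsilon(C_i)$ (landing at $v_i\in V_b$), lies in $\mc{F}$ (and so stays inside $C_i$), or ends at a $1$-anchor or $2$-anchor. Such an anchor is either in $V_b$ (an unbalanced $2$-anchor) or is one of the $g_1$ $1$-anchors or $g_2$ balanced $2$-anchors. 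Each such anchor has degree at most $2$ in $\mc{Q}$, which explains the term $2g_1+2g_2$: charge these crossing edges to their anchor endpoints outside $V_b$, at most $2$ per anchor. The remaining crossing edges are incident to the $2$-anchors $v_i$ themselves, at most $2$ each, and these fit into the per-block budget.

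The main obstacle is the exact per-block accounting: showing that edges of $\mc{Q}$ internal to a block, together with edges from $v_i$ to outside, total at most $8$ for a $4$-$4$ block and at most $10$ for a $5$-$4$ block. I would bound this by noting that $\mc{Q}$ restricted to the two cycles and $v_i$ is a linear forest on $9$ (respectively $10$) vertices. It therefore has at most $8$ (respectively $9$) edges. Adding the at most two edges from $v_i$ to outside overshoots, so I would use maximality of the triangle-free $\mc{F}$, or refine the charging, to win back the surplus. If that fails, I would instead charge the edges at $v_i$ to neighbouring anchors and accept the weaker count.
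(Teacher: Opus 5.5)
There is a genuine gap, and it lies in the step you call the main obstacle. Neither proposed repair closes it. Maximality of the triangle-free $\mc{F}$ is irrelevant here, and settling for a ``weaker count'' does not prove the stated bound.

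Your block bound treats $\{v_i\}\cup V(C_i)\cup V(C'_i)$ as an arbitrary linear forest and then adds up to two more edges at $v_i$. This charges $v_i$ against two separate budgets. More importantly, it allows $\mc{Q}$ to use edges between $C_i$ and $C'_i$. Even with the correct combined budget $\deg_{\mc{Q}}(v_i)\le 2$, a generic linear forest on $V(C_i)\cup V(C'_i)$ gives $7+2=9>8$ for a $4$-$4$ block. Also, $b_2$ counts blocks in which both $C_i$ and $C'_i$ are $5$-cycles, so a block can have $11$ vertices, not only $10$.

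The missing step is to apply Lemma~\ref{lemma17} to \emph{every} edge of $\mc{Q}$ at a vertex $w\in V(C_i)$, not just to edges leaving $V_b$. Such an edge falls into one of three cases:
\begin{enumerate}
\item it lies in $\mc{F}$, so it is inside $C_i$;
\item it is $\varepsilon(C_i)$, so it ends at $v_i$;
\item it lies outside $E(\mc{F})\cup\mc{W}$, and then by Lemma~\ref{lemma17} it lies inside $C_i$ or ends at a $1$- or $2$-anchor.
\end{enumerate}
Vertices of satellite elements are never anchors. Hence $\mc{Q}$ has no edge between $C_i$ and $C'_i$, nor between satellite cycles of different blocks.

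With this, the count is immediate and matches the paper's. Lost edges with an endpoint that is a $1$- or $2$-anchor (including the $b_1+b_2$ anchors $v_i$ themselves) number at most $2(b_1+b_2+g_1+g_2)$, by the degree bound in $\mc{Q}$. Every other lost edge lies inside a single satellite cycle of a block. That gives at most $3+3$ such edges per $b_1$-block and at most $4+4$ per $b_2$-block. The total is at most $8b_1+10b_2+2g_1+2g_2$, with no overshoot to repair.

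Your Group~2 paragraph already derives the needed structural fact; it just was not applied inside the block. The per-vertex heuristic in Group~1 is then unnecessary.
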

\begin{proof}
Recall that $\mc{Q}$ is an optimal $k$-pp of $G$. 
Let $\mc{Q}_b$ be the $k$-pp of $G_b$ obtained from $\mc{Q}$ by removing
all vertices in $V_b$ and the edges incident to them.
It suffices to show that $E(\mc{Q}) \setminus E(\mc{Q}_b)$ contains at most $8b_1+10b_2+2g_1+2g_2$ edges.

$E(\mc{Q}) \setminus E(\mc{Q}_b)$ can be partitioned into $X_1 \cup X_2$, where $X_1$ consists of 
all edges $\{v,w\} \in E(\mc{Q}) \setminus E(\mc{Q}_b)$ such that $v$ or $w$ is a 1- or 2-anchor, 
while $X_2$ consists of the other edges in $E(\mc{Q}) \setminus E(\mc{Q}_b)$. 
Since the total number of 1- or 2-anchors is $b_1+b_2+g_1+g_2$, $|X_1| \le 2b_1+2b_2+2g_1+2g_2$. 
To obtain an upper bound on $|X_2|$, consider an edge $\{v,w\} \in X_2$. 
Clearly, $v$ or $w$ is in $V_b$. Without loss of generality, we assume that $v \in V_b$. 
Then, since $\{v,w\}\in X_2$, neither $v$ nor $w$ is a $1$- or $2$-anchor and in turn $v$ is in
 a satellite element $C_i$ of some connected component $K$ of $\mc{F}+\mc{W}$. Hence, 
by Lemma~\ref{lemma17}, $w$ is also in $C_i$. 
Now, since a $k$-pp of $G$ can contain at most $j-1$ edges of a $j$-cycle for any $j \in \{ 4, 5 \}$, we can conclude that $|X_2| \le 6b_1+8b_2$. 
Therefore, $|X_1| + |X_2| \le 8b_1+10b_2+2g_1+2g_2$. 
This finishes the proof.
\end{proof}

\begin{theorem}
\label{thm02}
{\sc Approx2} is an $O(n^4 m^2)$-time $r$-approximation algorithm for the $k$PPE problem for every $k \ge 11$.
\end{theorem}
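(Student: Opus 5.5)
We prove the approximation ratio by strong induction on $|V|$ and then bound the running time. The base case $|V|=1$ (Step~1) is trivial, since the single $1$-path is optimal. For the inductive step we split according to which branch of Step~4 or Step~5 is taken. If $2b_1+b_2=0$, Lemma~\ref{lemma18} gives $|E(\mc{P})|\ge r|E(\mc{Q})|$ directly. If $2b_1+b_2\ne 0$ and $\frac{g_1+g_2}{2b_1+b_2}\ge\frac{2-2r}{r}$, Lemma~\ref{lemma19} gives the same conclusion. It also needs checking that Steps~4.1--4.2 really output a $k$-pp. After removing $\varepsilon(C_i)$ for every $2$-anchor of an unbalanced component, each component is a short cycle, is balanced, or has no $2$-anchors. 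The component is therefore handled by Lemma~\ref{lemma13}, \ref{lemma14} or~\ref{lemma15}, or by deleting one cycle edge.

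The remaining case is Step~5, where $2b_1+b_2\ne0$ and $g_1+g_2<\frac{2-2r}{r}(2b_1+b_2)$. Because $b_1+b_2\ge1$, the set $V_b$ is nonempty, so $|V(G_b)|<|V|$ and the induction hypothesis applies to $G_b$. The recursive call therefore returns a $k$-pp of $G_b$ with at least $r\cdot\mathrm{OPT}(G_b)$ edges. By Lemma~\ref{lemma20},
\[
\mathrm{OPT}(G_b)\ge |E(\mc{Q})|-(8b_1+10b_2+2g_1+2g_2).
\]
In Step~5.2 each $2$-anchor $v_i$ counted by $b_1$ yields a $9$-path, which has $8$ edges. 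Each one counted by $b_2$ yields a $10$- or $11$-path, which has at least $9$ edges. Both are $k^-$-paths because $k\ge 11$. These vertex sets are disjoint from each other and from $V(G_b)$, so the union is a $k$-pp with
\[
|E(\mc{P})|\ge r|E(\mc{Q})|-r(8b_1+10b_2+2g_1+2g_2)+8b_1+9b_2 .
\]

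It remains to show $8b_1+9b_2\ge r(8b_1+10b_2)+2r(g_1+g_2)$. By the Step~5 condition, $2r(g_1+g_2)<(4-4r)(2b_1+b_2)$. It therefore suffices to check
\[
8b_1+9b_2\ge (8r+8-8r)b_1+(10r+4-4r)b_2=8b_1+(4+6r)b_2,
\]
which holds since $9\ge 4+6r$ for $r<\frac56$. So the threshold $\frac{2-2r}{r}$ leaves slack in this branch. The binding constraint is in Lemma~\ref{lemma19}, which is where $7r^2-18r+10=0$ is used; we only have to confirm the algebra closes here.

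For the running time, each level of recursion computes $\mc{F}$ in $O(n^3m^2)$ time, and the remaining steps are cheaper. Finding $\mc{W}$ costs $O(mn\log n)$ by Lemma~\ref{lemma04}. There are at most $n$ operations by Lemma~\ref{lemma16}, and each applicability check is polynomial and dominated by the previous costs. Since the vertex count strictly drops at each recursive call, the depth is at most $n$, giving $O(n^4m^2)$ in total. The main obstacle is making the Step~5 accounting precise, in particular confirming that the Step~5.2 paths are feasible and disjoint from $G_b$. Beyond that the argument is the induction plus the short inequality above.
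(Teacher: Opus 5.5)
Your proof is correct and follows the paper's argument essentially step for step. It uses induction on $|V|$, Lemmas~\ref{lemma18} and~\ref{lemma19} for the Step~4 branch, and Lemma~\ref{lemma20} together with the inductive hypothesis and the $8b_1+9b_2$ count for Step~5, giving $|E(\mc{P})| > r|E(\mc{Q})|+(5-6r)b_2$; the running time is the same $O(n^3m^2)$ per level times depth $n$.

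One small correction to your side remark: the threshold $\frac{2-2r}{r}$ does not leave slack in the Step~5 branch. In your own reduction the $b_1$ coefficient comes out as $8-8r-(8-8r)=0$, so only the $b_2$ term has slack, namely $5-6r$. The threshold is chosen exactly to make the $b_1$ coefficient vanish, and that choice is what forces $7r^2-18r+10\ge 0$ in Lemma~\ref{lemma19}.
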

\begin{proof}
Recall that {\sc Approx1} runs in $O(n^3 m^2)$ time.
Note that checking whether Operation~\ref{op01}, \ref{op02}, or~\ref{op03} is applicable or not 
can be done in $O(m)$ time by checking every edge in $E(G) \setminus (E(\mc{F})\cup \mc{W})$.
By Lemma~\ref{lemma16}, Step~3 of {\sc Approx2} runs in $O(mn)$ time.
By Lemmas~\ref{lemma13} and~\ref{lemma14}, Step~4.2 of {\sc Approx2}, also takes $O(mn)$ time by finding special components defined in Definition~\ref{def05}.
Since the recursion depth of {\sc Approx2}  is at most $n$, its total running time is bounded by $O(n^4 m^2)$.

Recall that $\mc{Q}$ is an optimal $k$-pp of $G$ and $\mc{P}$ is the $k$-pp of $G$ returned by {\sc Approx2}. 
We next prove that $|E(\mc{P})| \ge r|E(\mc{Q})|$, by induction on the number of vertices in $G$. 
In the base, $G$ has exactly one vertex and {\sc Approx2} returns an optimal solution.
Next, we assume $|V| \ge 2$.
By Lemmas~\ref{lemma18} and~\ref{lemma19}, we may further assume 
 $2b_1 + b_2\ne 0$ and $\frac {g_1+g_2}{2b_1+b_2} < \frac {2-2r}r$.
By Lemma~\ref{lemma20} and the inductive hypothesis, the $k$-pp of $G_b$ computed in Step~5.1 of {\sc Approx2} contains
at least $r|E(\mc{Q})| - r(8b_1+10b_2+2g_1+2g_2)$ edges. 
Moreover, the $k$-pp of $G[\{v_i\} \cup V(C_i) \cup V(C'_i)]$ computed in Step~5.2 contains $8$ edges 
if both $C_i$ and $C'_i$ are $4$-cycles, and contains at least $9$ edges otherwise. 
So, the total number of edges included in $\mc{P}$ in Step~5.2 of {\sc Approx2} is at least $8b_1+9b_2$. 
In summary, $|E(\mc{P})| \ge r|E(\mc{Q})| - r(8b_1+10b_2+2g_1+2g_2) +8b_1+9b_2$. 
Now, by $\frac {g_1+g_2}{2b_1+b_2} < \frac {2-2r}r$ and $r < \frac 56$, we finally have 
$|E(\mc{P})| > r |E(\mc{Q})| + (5-6r)b_2 > r |E(\mc{Q})|$. 
This finishes the proof.
\end{proof}

\section{Conclusion}

In this paper, we have designed two algorithms for the problem of computing a minimum collection 
of $k^-$-paths in a given graph $G$ to cover all vertices in $G$.
One of the algorithms is a $\frac {k+4}5$-approximation algorithm for $k \in \{ 9, 10 \}$, while the other is a
$(\frac {\sqrt{11}-2}7 k + \frac {9-\sqrt{11}}7)$-approximation algorithm for $k \ge 11$.
The algorithms achieve the current best approximation ratios when $k \in \{ 9, \ldots, 18 \}$.

One interesting research direction is to extend our approaches to larger values of $k$ or directed graphs.
On the other hand, proving some inapproximability results for constant values of $k$ would be theoretically interesting, too.

\paragraph*{Acknowledgments}
This project received support from NSF award 2309902.

\bibliography{kpp.bib}

\end{document}

%% file: fig02.tex
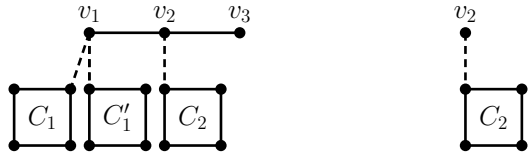
\begin{figure}[thb]
\begin{center}
\begin{tikzpicture}[scale=0.5,transform shape]

\draw [thick, line width = 1pt] (-8, -10) -- (-4, -10);
\foreach \x in {-10, -8, -6} 
{
      \draw[thick, line width = 1pt] (\x, -11.5) rectangle (\x+1.5, -13);
     \foreach \y in {-11.5, -13} {
           \fill (\x, \y) circle(.15);
           \fill (\x+1.5, \y) circle(.15); }
}
\foreach \x in {-8, -6}
{
    \fill (\x, -10) circle(.15);
    \draw[densely dashed, line width = 1pt] (\x, -10) -- (\x, -11.5);
}
\fill (-4, -10) circle(.15);

\draw [densely dashed, line width = 1pt] (-8, -10) -- (-8.5, -11.5);
\draw [densely dashed, line width = 1pt] (-8, -10) -- (-8, -11.5);

\node[font=\fontsize{20}{6}\selectfont] at (-8, -9.5) {$v_1$};
\node[font=\fontsize{20}{6}\selectfont] at (-6, -9.5) {$v_2$};
\node[font=\fontsize{20}{6}\selectfont] at (-4, -9.5) {$v_3$};
\node[font=\fontsize{20}{6}\selectfont] at (-9.25, -12.25) {$C_1$};
\node[font=\fontsize{20}{6}\selectfont] at (-7.25, -12.25) {$C'_1$};
\node[font=\fontsize{20}{6}\selectfont] at (-5.25, -12.25) {$C_2$};


\fill (2, -10) circle(.15);
\draw [densely dashed, line width = 1pt] (2, -10) -- (2, -11.5);
\draw[thick, line width = 1pt] (2, -11.5) rectangle (3.5, -13);
\foreach \y in {-11.5, -13} {
    \fill (2, \y) circle(.15);
    \fill (3.5, \y) circle(.15); }

\node[font=\fontsize{20}{6}\selectfont] at (2, -9.5) {$v_2$};
\node[font=\fontsize{20}{6}\selectfont] at (2.75, -12.25) {$C_2$};





\end{tikzpicture}
\end{center}
\caption{The left graph shows a possible structure of a connected component $K$ in $\mc{F}+\mc{W}$
while the right graph shows $K[2]$.
Each thick (respectively, dashed) edge is in $\mc{F}$ (respectively, $\mc{W}$).
$K_c$ is a $3$-path and each satellite element of $K$ is a $4$-cycle in $\mc{F}$.
Moreover, $v_i$ is a $(3-i)$-anchor for $i \in \{ 1, 2, 3 \}$.
\label{fig02}}
\end{figure}

%% file: fig03.tex
\begin{figure}[thb]
\begin{center}
\begin{tikzpicture}[scale=0.5,transform shape]

\draw[thick, line width = 1pt] (16, 0) rectangle (17.5, 1.5);
       \foreach \y in {0, 1.5} {
               \fill (16, \y) circle(.15);
               \fill (17.5, \y) circle(.15); }

\foreach \x in {14, 18} {
       \draw[thick, line width = 1pt] (\x, -1.5) rectangle (\x+1.5, -3);
       \foreach \y in {-1.5, -3} {
               \fill (\x, \y) circle(.15);
               \fill (\x+1.5, \y) circle(.15); }
}

\draw[densely dashed, line width = 1pt] (16, 0) -- (15.5, -1.5);
\draw[densely dashed, line width = 1pt] (17.5, 0) -- (18, -1.5);

\node[font=\fontsize{20}{6}\selectfont] at (15.5, 0) {$v_1$};
\node[font=\fontsize{20}{6}\selectfont] at (18, 0) {$v_2$};
\node[font=\fontsize{20}{6}\selectfont] at (18, 1.5) {$v_3$};
\node[font=\fontsize{20}{6}\selectfont] at (15.5, 1.5) {$v_4$};
\node[font=\fontsize{20}{6}\selectfont] at (18.75, -2.25) {$C_2$};
\node[font=\fontsize{20}{6}\selectfont] at (14.75, -2.25) {$C_1$};


\draw[thick, line width = 1pt] (29.5, 1.5) -- (28, 1.5);
\draw[thick, line width = 1pt] (29.5, 1.5) -- (29.5, 0);
       \foreach \y in {0, 1.5} {
               \fill (28, \y) circle(.15);
               \fill (29.5, \y) circle(.15); }

\foreach \x in {26, 30} {
       \draw[thick, line width = 1pt] (\x, -1.5) rectangle (\x+1.5, -3);
       \foreach \y in {-1.5, -3} {
               \fill (\x, \y) circle(.15);
               \fill (\x+1.5, \y) circle(.15); }
}

\draw[thick, line width = 1pt] (28, 0) -- (27.5, -1.5);
\draw[thick, line width = 1pt] (29.5, 0) -- (30, -1.5);

\node[font=\fontsize{20}{6}\selectfont] at (27.5, 0) {$v_1$};
\node[font=\fontsize{20}{6}\selectfont] at (30, 0) {$v_2$};
\node[font=\fontsize{20}{6}\selectfont] at (30, 1.5) {$v_3$};
\node[font=\fontsize{20}{6}\selectfont] at (27.5, 1.5) {$v_4$};
\node[font=\fontsize{20}{6}\selectfont] at (30.75, -2.25) {$C_2$};
\node[font=\fontsize{20}{6}\selectfont] at (26.75, -2.25) {$C_1$};

\end{tikzpicture}
\end{center}
\caption{
An example connected component $K$ of $\mc{F}+\mc{W}$ for Case~1 in the proof of Lemma~\ref{lemma08}. 
Removing the two edges $\{ v_0, v_1 \} = \{ v_4, v_1 \}, \{ v_1, v_2 \}$ yields a graph whose connected components 
are $K[1]$ and $K[2, 4]$. $K[1]$ (respectively, $K[2, 4]$) has a $k$-pp with $4$ (respectively, $6$) edges.
\label{fig03}}
\end{figure}
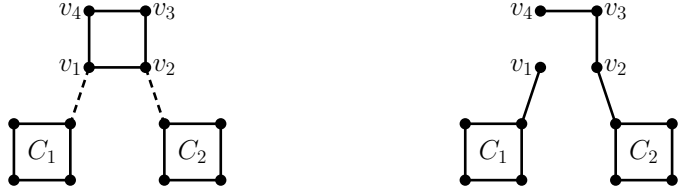

%% file: fig04.tex
\begin{figure}[thb]
\begin{center}
\begin{tikzpicture}[scale=0.5,transform shape]

\draw [thick, line width = 1pt] (-6, -10) -- (-4, -10);
\draw[thick, line width = 1pt] (-6, -11.5) rectangle (-4.5, -13);
\draw [thick, line width = 1pt] (-4, -11.5) -- (-4, -13);
\draw [thick, line width = 1pt] (-4, -11.5) -- (-2.7, -11.5);
\draw [thick, line width = 1pt] (-4, -13) -- (-2.7, -13);
\draw [thick, line width = 1pt] (-2, -12.25) -- (-2.7, -11.5);
\draw [thick, line width = 1pt] (-2, -12.25) -- (-2.7, -13);
\fill (-2, -12.25) circle(.15);

\foreach \x in {-6, -4.5, -2.7, -4} {
       \foreach \y in {-11.5, -13} {
               \fill (\x, \y) circle(.15);}
}
\foreach \x in {-6, -4}
{
    \fill (\x, -10) circle(.15);
    \draw[densely dashed, line width = 1pt] (\x, -10) -- (\x, -11.5);
}
 
\draw [densely dashed, line width = 1pt] (-6, -10) -- (-6, -11.5);

\node[font=\fontsize{20}{6}\selectfont] at (-6, -9.5) {$v_1$};
\node[font=\fontsize{20}{6}\selectfont] at (-4, -9.5) {$v_2$};
\node[font=\fontsize{20}{6}\selectfont] at (-5.25, -12.25) {$C_1$};
\node[font=\fontsize{20}{6}\selectfont] at (-3.2, -12.25) {$C_2$};


\draw [thick, line width = 1pt] (3, -10) -- (7, -10);

\foreach \x in {3, 5, 7} {
       \fill (\x, -10) circle(.15);
}

\foreach \x in {3, 7} {
       \draw[thick, line width = 1pt] (\x, -11.5) rectangle (\x+1.5, -13);
       \foreach \y in {-11.5, -13} {
               \fill (\x, \y) circle(.15);
               \fill (\x+1.5, \y) circle(.15); }
}

\foreach \x in {3, 7}
{
    \draw[densely dashed, line width = 1pt] (\x, -10) -- (\x, -11.5);
}

\node[font=\fontsize{20}{6}\selectfont] at (3, -9.5) {$v_1$};
\node[font=\fontsize{20}{6}\selectfont] at (5, -9.5) {$v_2$};
\node[font=\fontsize{20}{6}\selectfont] at (7, -9.5) {$v_3$};
\node[font=\fontsize{20}{6}\selectfont] at (3.75, -12.25) {$C_1$};
\node[font=\fontsize{20}{6}\selectfont] at (7.75, -12.25) {$C_3$};


\draw [thick, line width = 1pt] (13, -10) -- (19, -10);

\foreach \x in {13, 15, 17, 19} {
       \fill (\x, -10) circle(.15);
}

\foreach \x in {13, 15} {
       \draw[thick, line width = 1pt] (\x, -11.5) rectangle (\x+1.5, -13);
       \foreach \y in {-11.5, -13} {
               \fill (\x, \y) circle(.15);
               \fill (\x+1.5, \y) circle(.15); }
}

\foreach \x in {13, 15, 19}
{
    \draw[densely dashed, line width = 1pt] (\x, -10) -- (\x, -11.5);
}

\draw [thick, line width = 1pt] (19, -11.5) -- (19, -13);
\draw [thick, line width = 1pt] (19, -11.5) -- (20.3, -11.5);
\draw [thick, line width = 1pt] (19, -13) -- (20.3, -13);
\draw [thick, line width = 1pt] (21, -12.25) -- (20.3, -11.5);
\draw [thick, line width = 1pt] (21, -12.25) -- (20.3, -13);
\fill (21, -12.25) circle(.15);
\foreach \x in {19, 20.3} {
       \foreach \y in {-11.5, -13} {
              \fill (\x, \y) circle(.15);
}}

\node[font=\fontsize{20}{6}\selectfont] at (13, -9.5) {$v_1$};
\node[font=\fontsize{20}{6}\selectfont] at (15, -9.5) {$v_2$};
\node[font=\fontsize{20}{6}\selectfont] at (17, -9.5) {$v_3$};
\node[font=\fontsize{20}{6}\selectfont] at (19, -9.5) {$v_4$};
\node[font=\fontsize{20}{6}\selectfont] at (13.75, -12.25) {$C_1$};
\node[font=\fontsize{20}{6}\selectfont] at (15.75, -12.25) {$C_2$};
\node[font=\fontsize{20}{6}\selectfont] at (19.75, -12.25) {$C_4$};


\draw [thick, line width = 1pt] (-6, -16) -- (-4, -16);
\draw[thick, line width = 1pt] (-6, -17.5) rectangle (-4.5, -19);
\fill (-6, -16) circle(.15);
\fill (-4, -16) circle(.15);
\draw[densely dashed, line width = 1pt] (-6, -16) -- (-6, -17.5);
\foreach \y in {-17.5, -19} {
               \fill (-6, \y) circle(.15);
               \fill (-4.5, \y) circle(.15); }

\node[font=\fontsize{20}{6}\selectfont] at (-6, -15.5) {$v_1$};
\node[font=\fontsize{20}{6}\selectfont] at (-4, -15.5) {$v_2$};
\node[font=\fontsize{20}{6}\selectfont] at (-5.25, -18.25) {$C_1$};


\draw [thick, line width = 1pt] (3, -16) -- (9, -16);

\foreach \x in {3, 5, 7, 9} {
       \fill (\x, -16) circle(.15);
}

\foreach \x in {3, 5} {
       \draw[thick, line width = 1pt] (\x, -17.5) rectangle (\x+1.5, -19);
       \foreach \y in {-17.5, -19} {
               \fill (\x, \y) circle(.15);
               \fill (\x+1.5, \y) circle(.15); }
}

\foreach \x in {3, 5}
{
    \draw[densely dashed, line width = 1pt] (\x, -16) -- (\x, -17.5);
}

\node[font=\fontsize{20}{6}\selectfont] at (3, -15.5) {$v_1$};
\node[font=\fontsize{20}{6}\selectfont] at (5, -15.5) {$v_2$};
\node[font=\fontsize{20}{6}\selectfont] at (7, -15.5) {$v_3$};
\node[font=\fontsize{20}{6}\selectfont] at (9, -15.5) {$v_4$};
\node[font=\fontsize{20}{6}\selectfont] at (3.75, -18.25) {$C_1$};
\node[font=\fontsize{20}{6}\selectfont] at (5.75, -18.25) {$C_2$};


\draw [thick, line width = 1pt] (13, -16) -- (19, -16);

\foreach \x in {13, 15, 17, 19} {
       \fill (\x, -16) circle(.15);
}

\foreach \x in {13, 15, 17, 19} {
       \draw[thick, line width = 1pt] (\x, -17.5) rectangle (\x+1.5, -19);
       \foreach \y in {-17.5, -19} {
               \fill (\x, \y) circle(.15);
               \fill (\x+1.5, \y) circle(.15); }
}

\foreach \x in {13, 15, 17, 19}
{
    \draw[densely dashed, line width = 1pt] (\x, -16) -- (\x, -17.5);
}

\node[font=\fontsize{20}{6}\selectfont] at (13, -15.5) {$v_1$};
\node[font=\fontsize{20}{6}\selectfont] at (15, -15.5) {$v_2$};
\node[font=\fontsize{20}{6}\selectfont] at (17, -15.5) {$v_3$};
\node[font=\fontsize{20}{6}\selectfont] at (19, -15.5) {$v_4$};
\node[font=\fontsize{20}{6}\selectfont] at (13.75, -18.25) {$C_1$};
\node[font=\fontsize{20}{6}\selectfont] at (15.75, -18.25) {$C_2$};
\node[font=\fontsize{20}{6}\selectfont] at (17.75, -18.25) {$C_3$};
\node[font=\fontsize{20}{6}\selectfont] at (19.75, -18.25) {$C_4$};

\end{tikzpicture}
\end{center}
\caption{Possible structures of special connected components $K$ of $\mc{F}+\mc{W}$, 
where each thick (respectively, dashed) edge is in $\mc{F}$ (respectively, $\mc{W}$). 
The top row shows the structure of a type-1, type-2, or type-3 $K$
from left to right. The bottom row shows three possible structures of a type-$4$ $K$,
where the pair $(i, \ell)$ stated in Condition~S4 is $(1, 2)$, $(2, 4)$ and $(4, 4)$, respectively.}
\label{fig04}
\end{figure}
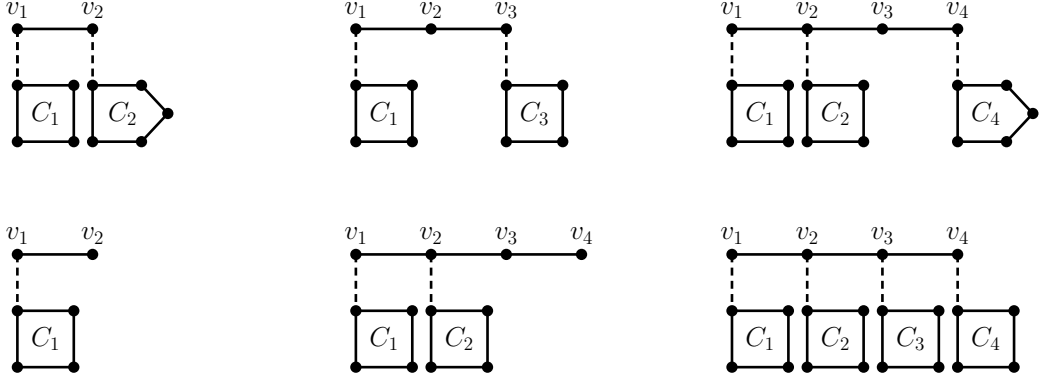

%% file: fig05.tex
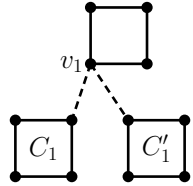
\begin{figure}[thb]
\begin{center}
\begin{tikzpicture}[scale=0.5,transform shape]

\draw[thick, line width = 1pt] (16, 0) rectangle (17.5, 1.5);
       \foreach \y in {0, 1.5} {
               \fill (16, \y) circle(.15);
               \fill (17.5, \y) circle(.15); }

\foreach \x in {14, 17} {
       \draw[thick, line width = 1pt] (\x, -1.5) rectangle (\x+1.5, -3);
       \foreach \y in {-1.5, -3} {
               \fill (\x, \y) circle(.15);
               \fill (\x+1.5, \y) circle(.15); }
}

\foreach \x in {15.5, 17}
{
    \draw[densely dashed, line width = 1pt] (16, 0) -- (\x, -1.5);
}

\node[font=\fontsize{20}{6}\selectfont] at (15.5, 0) {$v_1$};
\node[font=\fontsize{20}{6}\selectfont] at (17.75, -2.25) {$C'_1$};
\node[font=\fontsize{20}{6}\selectfont] at (14.75, -2.25) {$C_1$};

\end{tikzpicture}
\end{center}
\caption{A balanced connected component of $\mc{F}+\mc{W}$, where
each thick and dashed edge is in $\mc{F}$ and $\mc{W}$, respectively.
\label{fig05}}
\end{figure}

%% file: fig06.tex
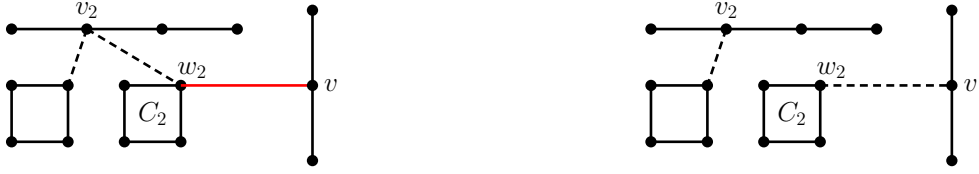
\begin{figure}[thb]
\begin{center}
\begin{tikzpicture}[scale=0.5,transform shape]

\draw [thick, line width = 1pt] (0, 0) -- (6, 0);

\foreach \x in {0, 2, 4, 6} {
       \fill (\x, 0) circle(.15);
}

\foreach \x in {0, 3} {
       \draw[thick, line width = 1pt] (\x, -1.5) rectangle (\x+1.5, -3);
       \foreach \y in {-1.5, -3} {
               \fill (\x, \y) circle(.15);
               \fill (\x+1.5, \y) circle(.15); }
}

\foreach \x in {1.5, 4.5}
{
    \draw[densely dashed, line width = 1pt] (2, 0) -- (\x, -1.5);
}
\draw[red, thick, line width = 1pt] (4.5, -1.5) -- (8, -1.5);

\node[font=\fontsize{20}{6}\selectfont] at (2, 0.5) {$v_2$};
\node[font=\fontsize{20}{6}\selectfont] at (4.8, -1.1) {$w_2$};
\node[font=\fontsize{20}{6}\selectfont] at (3.75, -2.25) {$C_2$};

\draw [thick, line width = 1pt] (8, 0.5) -- (8, -3.5);

\foreach \x in {0.5, -1.5, -3.5} {
       \fill (8, \x) circle(.15);
}

\node[font=\fontsize{20}{6}\selectfont] at (8.5, -1.5) {$v$};


\draw [thick, line width = 1pt] (17, 0) -- (23, 0);

\foreach \x in {17, 19, 21, 23} {
       \fill (\x, 0) circle(.15);
}

\foreach \x in {17, 20} {
       \draw[thick, line width = 1pt] (\x, -1.5) rectangle (\x+1.5, -3);
       \foreach \y in {-1.5, -3} {
               \fill (\x, \y) circle(.15);
               \fill (\x+1.5, \y) circle(.15); }
}

\draw[densely dashed, line width = 1pt] (19, 0) -- (18.5, -1.5);
\draw[densely dashed, line width = 1pt] (21.5, -1.5) -- (25, -1.5);

\node[font=\fontsize{20}{6}\selectfont] at (19, 0.5) {$v_2$};
\node[font=\fontsize{20}{6}\selectfont] at (21.8, -1.1) {$w_2$};
\node[font=\fontsize{20}{6}\selectfont] at (20.75, -2.25) {$C_2$};

\draw [thick, line width = 1pt] (25, 0.5) -- (25, -3.5);

\foreach \x in {0.5, -1.5, -3.5} {
       \fill (25, \x) circle(.15);
}

\node[font=\fontsize{20}{6}\selectfont] at (25.5, -1.5) {$v$};

\end{tikzpicture}
\end{center}
\caption{A representative example for Operation~\ref{op01},
where the dashed $\{ w_2, v_2 \}$ is replaced with the red edge $\{ w_2, v \}$.
\label{fig06}}
\end{figure}

%% file: fig07.tex
\begin{figure}[thb]
\begin{center}
\begin{tikzpicture}[scale=0.5,transform shape]

\draw [thick, line width = 1pt] (0, 0) -- (6, 0);

\foreach \x in {0, 2, 4, 6} {
       \fill (\x, 0) circle(.15);
}

\foreach \x in {0, 3} {
       \draw[thick, line width = 1pt] (\x, -1.5) rectangle (\x+1.5, -3);
       \foreach \y in {-1.5, -3} {
               \fill (\x, \y) circle(.15);
               \fill (\x+1.5, \y) circle(.15); }
}

\foreach \x in {1.5, 3}
{
    \draw[densely dashed, line width = 1pt] (2, 0) -- (\x, -1.5);
}
\draw[red, thick, line width = 1pt] (1.5, -1.5) -- (3, -1.5);

\node[font=\fontsize{20}{6}\selectfont] at (2, 0.5) {$v_2$};
\node[font=\fontsize{20}{6}\selectfont] at (1, -1.1) {$w_2$};
\node[font=\fontsize{20}{6}\selectfont] at (3.3, -1) {$w'_2$};
\node[font=\fontsize{20}{6}\selectfont] at (3.75, -2.25) {$C'_2$};
\node[font=\fontsize{20}{6}\selectfont] at (0.75, -2.25) {$C_2$};


\draw [thick, line width = 1pt] (14, 0) -- (20, 0);

\foreach \x in {14, 16, 18, 20} {
       \fill (\x, 0) circle(.15);
}

\foreach \x in {14, 17} {
       \draw[thick, line width = 1pt] (\x, -1.5) rectangle (\x+1.5, -3);
       \foreach \y in {-1.5, -3} {
               \fill (\x, \y) circle(.15);
               \fill (\x+1.5, \y) circle(.15); }
}

\draw[densely dashed, line width = 1pt] (15.5, -1.5) -- (17, -1.5);

\node[font=\fontsize{20}{6}\selectfont] at (16, 0.5) {$v_2$};
\node[font=\fontsize{20}{6}\selectfont] at (15, -1.1) {$w_2$};
\node[font=\fontsize{20}{6}\selectfont] at (17.3, -1) {$w'_2$};
\node[font=\fontsize{20}{6}\selectfont] at (17.75, -2.25) {$C'_2$};
\node[font=\fontsize{20}{6}\selectfont] at (14.75, -2.25) {$C_2$};

\end{tikzpicture}
\end{center}
\caption{A representative example for Operation~\ref{op02},
where the two dashed edges $\{ v_2, w_2 \}, \{ v_2, w'_2 \}$ are replaced with the red edge $\{ w_2, w'_2 \}$.
\label{fig07}}
\end{figure}
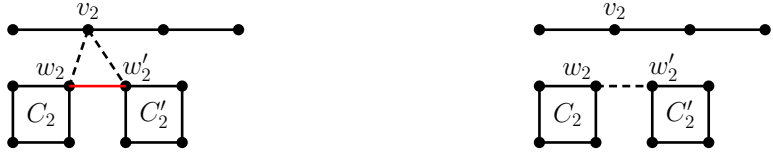

%% file: fig08.tex
\begin{figure}[thb]
\begin{center}
\begin{tikzpicture}[scale=0.5,transform shape]

\draw [thick, line width = 1pt] (0, 0) -- (6, 0);

\foreach \x in {0, 2, 4, 6} {
       \fill (\x, 0) circle(.15);
}

\foreach \x in {0, 3} {
       \draw[thick, line width = 1pt] (\x, -1.5) rectangle (\x+1.5, -3);
       \foreach \y in {-1.5, -3} {
               \fill (\x, \y) circle(.15);
               \fill (\x+1.5, \y) circle(.15); }
}

\foreach \x in {1.5, 4.5}
{
    \draw[densely dashed, line width = 1pt] (2, 0) -- (\x, -1.5);
}

\foreach \x in {1, -2}
{
    \draw[thick, line width = 1pt] (8, \x) rectangle (9.5, \x-1.5);
    \foreach \y in {8, 9.5} {
            \fill (\y, \x) circle(.15);
            \fill (\y, \x-1.5) circle(.15); }
}

\draw[red, thick, line width = 1pt] (4.5, -1.5) -- (8, -2);

\draw[densely dashed, line width = 1pt] (8, 1) -- (8, -2.5);
\node[font=\fontsize{20}{6}\selectfont] at (7.6, -2.3) {$v$};
\node[font=\fontsize{20}{6}\selectfont] at (2, 0.5) {$v_2$};
\node[font=\fontsize{20}{6}\selectfont] at (4.8, -1.1) {$w_2$};
\node[font=\fontsize{20}{6}\selectfont] at (3.75, -2.25) {$C_2$};


\draw [thick, line width = 1pt] (17, 0) -- (23, 0);

\foreach \x in {17, 19, 21, 23} {
       \fill (\x, 0) circle(.15);
}

\foreach \x in {17, 20} {
       \draw[thick, line width = 1pt] (\x, -1.5) rectangle (\x+1.5, -3);
       \foreach \y in {-1.5, -3} {
               \fill (\x, \y) circle(.15);
               \fill (\x+1.5, \y) circle(.15); }
}

\draw[densely dashed, line width = 1pt] (19, 0) -- (18.5, -1.5);

\foreach \x in {1, -2}
{
    \draw[thick, line width = 1pt] (25, \x) rectangle (26.5, \x-1.5);
    \foreach \y in {25, 26.5} {
            \fill (\y, \x) circle(.15);
            \fill (\y, \x-1.5) circle(.15); }
}

\draw[densely dashed, line width = 1pt] (21.5, -1.5) -- (25, -2);

\draw[densely dashed, line width = 1pt] (25, 1) -- (25, -2.5);
\node[font=\fontsize{20}{6}\selectfont] at (24.6, -2.3) {$v$};
\node[font=\fontsize{20}{6}\selectfont] at (19, 0.5) {$v_2$};
\node[font=\fontsize{20}{6}\selectfont] at (21.8, -1.1) {$w_2$};
\node[font=\fontsize{20}{6}\selectfont] at (20.75, -2.25) {$C_2$};

\end{tikzpicture}
\end{center}
\caption{A representative example for Operation~\ref{op03},
where the edge $\{ v_2, w_2 \}$ is replaced with the red edge $\{ w_2, v \}$.
\label{fig08}}
\end{figure}